\documentclass[11pt]{article} %***
\usepackage[sectionbib]{natbib}
\usepackage{array,epsfig,rotating}
\usepackage{bm}
\usepackage[]{hyperref}  
\usepackage{sectsty, secdot}
\sectionfont{\fontsize{12}{14pt plus.8pt minus .6pt}\selectfont}
\subsectionfont{\fontsize{12}{14pt plus.8pt minus .6pt}\selectfont}
\usepackage{amsmath}
\usepackage{amssymb}
\usepackage{arydshln}
\usepackage{amsfonts}
\usepackage{multirow}
\usepackage{amsthm}
\usepackage{makecell}
\usepackage{cases}
\usepackage{epstopdf}
\usepackage{float}
\usepackage{physics}
\usepackage{enumerate}
\usepackage{enumitem}
\usepackage[ruled,linesnumbered]{algorithm2e}
\usepackage{tikz}  
\usetikzlibrary{arrows.meta}
%VVV make line numbers
% \usepackage{lineno}
% \linenumbers
% \newcommand*\patchAmsMathEnvironmentForLineno[1]{%
%   \expandafter\let\csname old#1\expandafter\endcsname\csname #1\endcsname
%   \expandafter\let\csname oldend#1\expandafter\endcsname\csname end#1\endcsname
%   \renewenvironment{#1}%
%      {\linenomath\csname old#1\endcsname}%
%      {\csname oldend#1\endcsname\endlinenomath}}% 
% \newcommand*\patchBothAmsMathEnvironmentsForLineno[1]{%
%   \patchAmsMathEnvironmentForLineno{#1}%
%   \patchAmsMathEnvironmentForLineno{#1*}}%
% \AtBeginDocument{%
% \patchBothAmsMathEnvironmentsForLineno{equation}%
% \patchBothAmsMathEnvironmentsForLineno{align}%
% \patchBothAmsMathEnvironmentsForLineno{flalign}%
% \patchBothAmsMathEnvironmentsForLineno{alignat}%
% \patchBothAmsMathEnvironmentsForLineno{gather}%
% \patchBothAmsMathEnvironmentsForLineno{multline}%
% }
%^^^ make line numbers
\setcounter{page}{1}
\newtheorem{theorem}{Theorem}[section]

\newtheorem{lemma}{Lemma}[section]
\newtheorem{corollary}{Corollary}[section]

\theoremstyle{definition}

\newtheorem{example}{Example}
\theoremstyle{remark}
\newtheorem{remark}{Remark}[section]
\def\T{{ \mathrm{\scriptscriptstyle T} }}
\def\I{{ \textbf{I} }}

\def\t{{ \textbf{t} }}

\def\tr{{ \text{tr} }}

\usepackage[title]{appendix}
\usepackage{titlesec}
\usepackage{etoolbox, chngcntr}
\AtBeginEnvironment{appendices}{%
 \titleformat{\section}{\bfseries}{\appendixname~\thesection:}{0.3em}{}%
\counterwithin{equation}{section}
}

\newcommand{\blind}{0}

%%%%%%% IISE Transactions margin specifications %%%%%%%%%%%%%%%%%%%
	% DON'T change margins - should be 1 inch all around.
	\addtolength{\oddsidemargin}{-.5in}%
	\addtolength{\evensidemargin}{-.5in}%
	\addtolength{\textwidth}{1in}%
	\addtolength{\textheight}{1.3in}%
	\addtolength{\topmargin}{-.8in}%
    \makeatletter
    \renewcommand\section{\@startsection {section}{1}{\z@}%
                                       {-3.5ex \@plus -1ex \@minus -.2ex}%
                                       {2.3ex \@plus.2ex}%
                                       {\normalfont\fontfamily{phv}\fontsize{16}{19}\bfseries}}
    \renewcommand\subsection{\@startsection{subsection}{2}{\z@}%
                                         {-3.25ex\@plus -1ex \@minus -.2ex}%
                                         {1.5ex \@plus .2ex}%
                                         {\normalfont\fontfamily{phv}\fontsize{14}{17}\bfseries}}
    \renewcommand\subsubsection{\@startsection{subsubsection}{3}{\z@}%
                                        {-3.25ex\@plus -1ex \@minus -.2ex}%
                                         {1.5ex \@plus .2ex}%
                                         {\normalfont\normalsize\fontfamily{phv}\fontsize{14}{17}\selectfont}}
    \makeatother
    %%%%%%%%%%%%%%%%%%%%%%%%%%%%%%%%%%%%%%%%%%%%%%%%%%%%%%%%%%%%%%%%%%%%%%%%%

\begin{document}
\def\spacingset#1{\renewcommand{\baselinestretch}%
{#1}\small\normalsize} \spacingset{1}

\if0\blind
{
  \title{\bf Optimal Designs for Gamma Degradation Tests}
    \author{Hung-Ping Tung$^{a,*}$, Yu-Wen Chen$^a$ \\
$^a$ Department of Industrial Engineering and Management,\\
National Yang Ming Chiao Tung University, Hsinchu, 30010, Taiwan, ROC\\
*Corresponding author: hptung@nycu.edu.tw}
\date{}
  \maketitle
} \fi

\if1\blind
{
  \bigskip
  \bigskip
  \bigskip
  \begin{center}
    {\LARGE\bf Optimal Designs for Gamma Degradation Tests}
\end{center}
  \medskip
} \fi

\bigskip
\begin{abstract}
This paper analytically investigates the optimal design of gamma degradation tests, including the number of test units, the number of inspections, and inspection times. We first derive optimal designs with periodic inspection times under various scenarios. Unlike previous studies that typically rely on numerical methods or fix certain design parameters, our approach provides an analytical framework to determine optimal designs. In addition, the results are directly applicable to destructive degradation tests when number of inspection is one. The investigation is then extended to designs with aperiodic inspection times, a topic that has not been thoroughly explored in the existing literature. Interestingly, we show that designs with periodic inspection times are the least efficient. We then derive the optimal aperiodic inspection times and the corresponding optimal designs under two cost constraints. Finally, two examples are presented to validate the proposed methods and demonstrate their efficiency in improving reliability estimation.
%illustrate the superiority of the optimal design with the optimal inspection interval over the equal inspection interval.
\end{abstract}

\noindent%
{\it Keywords:} Reliability; Degradation tests; Gamma process; Inspection time; Optimal design;

\def\thefigure{\arabic{figure}}
\def\thetable{\arabic{table}}
\fontsize{11}{14pt plus.8pt minus .6pt}\selectfont

\spacingset{1.5}
% \if0\blind
% { \clearpage
% } \fi

\section{Introduction}
In an era of rapid technological evolution and intensifying market competition, quality, cost, and delivery have become critical concerns for both customers and manufacturers. Effectively assessing product reliability and accurately characterizing the associated uncertainties are essential for informed decision-making in areas such as scheduling, maintenance, and spare parts inventory \citep{aven2024fifty}. A common approach for assessing product reliability is through degradation tests (DTs), which monitor quality characteristics (QCs) that deteriorate over time \citep{meeker2022statistical}. This approach enables the estimation of product lifetime without waiting for actual failures. Among various statistical models for analyzing degradation data, the gamma process is widely adopted due to its monotonic nature, which aligns well with many physical degradation mechanisms. Numerous studies have applied the gamma process in reliability and operations research \citep{lawless2004covariates, park2005accelerated, park2006stochastic, hsu2008capability, hao2015led, balakrishnan2017gamma, wang2021degradation, palayangoda2022evaluation, li2022general, bautista2022condition, aven2024fifty}.

Given the value of degradation data in reliability analysis, obtaining high-quality data is critical, especially when testing resources are limited. Consequently, it is important to design efficient DTs that balance information gain with resource constraints. Key decision variables include the number of test units, the number of inspections and inspection times. The optimization of such test plans based on the gamma process has been extensively studied in the literature. 
% \ref{tab:reference} summarizes the papers related to DT, constant stress accelerated degradation test (CSADT), step stress accelerated degradation test (SSADT) and what decision variables they determine. Cost column denote the paper include cost constraint or not. The column ``$n$" denote the number of test units, where ``$\Delta$" means only determine the proportion and did not determine the total number of test units. The column ``$m$" denote the number of inspections. The column ``$t$" denote the inspection time. The ``$\Delta$" denote the solution is obtain under equal inspection interval. The column ``$x$" denote the stress levels which only happen in ADT. The ``$\Delta$" denote they did not determine the number of stress levels. From the table, we can first find that the discussion for the DT is lacked of a well study to investigate how parameter ans decision variable effect the planning. Second, all literature address the effect of only for equal inspection interval, which are convenient and practical for real-world applications. However, this may overlook the critical role of inspection time planning, potentially affecting the efficiency of the designs. In addition, only use numerical method.  Third, Tweedie process is a general stochastic process that include gamma process. However, most related literature use saddle point approximation to simplify the computation of Fisher information matrix, which replcae the gamma function in gamma process and loss some information of the second derivative of gamma function which is trigamma function.
Table~\ref{tab:reference} summarizes the literature of planning on DTs, constant-stress accelerated degradation tests (CSADTs), and step-stress accelerated degradation tests (SSADTs), along with the decision variables considered. The ``Cost'' column denotes whether cost constraints are included. The column ``$n$'' denotes the number of test units. The column ``$m$'' denotes the number of inspections. The column ``$t$'' denotes the inspection times, where ``$\Delta$'' indicates that periodic inspection times are assumed. The column ``$x$'' represents stress levels, where ``$\Delta$'' indicates that the number of stress levels is not determined.

% From the table, we observe that most planning problems are focus on the accelerated tests and the research of DT lacks a comprehensive study. Second, all reviewed studies assume equal inspection intervals. While this is convenient and practical for real-world applications, it may overlook the importance of optimizing inspection timing, which could impact the efficiency of the test design. Additionally, only numerical methods apply to determine the inspection time. All the analytical method did not determine the inspection time. Third, several studies work on the Tweedie process which is a general stochastic process that includes the gamma process as a special case. However, most related works adopt saddlepoint approximations to simplify the computation of the Fisher information matrix. This approach sacrifices part of the information in the second derivative of gamma function—the trigamma function.

From the table, most studies focus on planning ADTs, while research on DTs lacks a comprehensive investigation. Second, several studies consider the Tweedie process, a general stochastic process that includes the gamma process as a special case. However, most of these works employ saddlepoint approximations to simplify the computation of the objective function. This approximation sacrifices information contained in the second derivative of the logarithm of gamma function, known as the trigamma function or the polygamma function with order one. Third, all literature assume periodic inspection times. Although this assumption is convenient and practical for real-world applications, it may overlook the potential to improve testing efficiency by optimizing inspection times. Furthermore, only numerical methods have been used to determine inspection times, and none of the analytical approaches address this aspect. It is worth noting that inspection time planning is not a concern in two other well-known DT models, the Wiener process \citep{lim2011optimal, hu2015optimum, zhao2021accelerated} and the inverse Gaussian process \citep{peng2015inverse, fang2022inverse, peng2022optimum}, since the structure of their objective functions depends only on the termination time and is independent of inspection time planning.

\begin{table}[ht!]
    \centering
    \caption{Summary of references related to planning gamma degradation tests}
    \resizebox{\columnwidth}{!}{
    \begin{tabular}{|c|c|c|c|c|c|c|c|c|}
        \hline
        Reference  & model & criterion & cost & $n$&$m$&$t$&$x$& method \\
        \hline
        \cite{tsai2012optimal} &\makecell{DT\\[-10pt]random effect}& V & O & O & O & $\Delta$ & & numerical\\
        \hline
        \cite{lim2015optimum} & \makecell{CSADT\\[-10pt]one-variable}& V & O & O & O & $\Delta$ & $\Delta$& numerical\\
        \hline
        \cite{zhang2015reliability}& \makecell{CSADT\\[-10pt]one-variable}& V & O & O & O & $\Delta$ & & numerical\\
        \hline
        \cite{duan2019optimal} &\makecell{CSADT\\[-10pt]one-variable \\[-10pt]random effect}& D, A, V&  & O & & & O& analytical\\
        \hline
        \cite{tseng2016optimum} & \makecell{CSADT\\[-10pt]one-variable \\[-10pt] Tweedie process}& V&  & O & & & $\Delta$& analytical\\
        \hline
        \cite{lee2020global} & \makecell{CSADT\\[-10pt]one-variable \\[-10pt] Tweedie process}& V& O & O & O & $\Delta$ & $\Delta$& semi-analytical \\
        \hline
        \cite{tung2022analytical} & \makecell{CSADT\\[-10pt]one-variable \\[-10pt] Tweedie process}& V& O & O & O &  & O& analytical \\
        \hline
        \cite{tsai2015optimal} & \makecell{CSADT\\[-10pt]two-variable}& V & O & O &  & $\Delta$ & &numerical \\
        \hline
        \cite{tung2024optimizing}& \makecell{CSADT\\[-10pt]two-variable}& D, A, V & O & O & O &  & O & semi-analytical\\
        \hline
        \cite{limon2020designing}& \makecell{CSADT\\[-10pt]multi-variable}& V & O & O & O & $\Delta$ & $\Delta$& numerical\\
        \hline
        \cite{tseng2009optimal} & SSADT & V & O & O & O & $\Delta$ & & numerical \\
        \hline
        \cite{pan2014optimal} & \makecell{SSADT\\[-10pt]two QCs} & V & O & O & O & $\Delta$ & & numerical \\
        \hline
        \cite{yan2023optimal} & \makecell{SSADT\\[-10pt]Tweedie process} & D, V &  &  & O &  & O & analytical \\
        \hline
        This study & DT & D, A, V & O & O & O & O & & analytical \\
        \hline
    \end{tabular}    
    }
    \label{tab:reference}
\end{table}

As a result, this work aims to provide a comprehensive study of fundamental DTs from an analytical perspective, addressing a critical gap in the understanding of gamma DTs. The analytical results not only offer practical insights for real-world applications but also serve as a foundation for extending to more complex models, such as those involving random effects and accelerated tests. Specifically, the contributions of this study are as follows:
\begin{enumerate}
    \item The optimal designs under periodic inspection time (referred to as type-I) are analytically investigated across several scenarios. Additionally, several properties of the polygamma function are established.
    \item The optimal type-I designs under cost constraints are analytically derived, and a systematic method is proposed that can be readily implemented by practitioners.
    \item The impact of inspection time is thoroughly investigated, and the optimal designs under aperiodic inspection time (referred to as type-II) are analytically derived.
    \item To enhance practical applications, this study incorporates a minimum inspection interval constraint, which is a natural consideration in real-world applications but has not been discussed in existing literature.
\end{enumerate}

The remaining part of this study is organized as follows. Section 2 introduces the statistical inference of gamma DTs, and defines the optimal criteria of $D$-optimality, $A$-optimality, and $V$-optimality.  Section 3 presents the optimal type-I designs. Section 4 presents the optimal type-II designs. Two examples are provided in Section 5 to illustrate the efficiency of the proposed optimal designs. Finally, the conclusion is presented in Section 6.

\section{Problem formulation} \label{section: assumption}
\subsection{Model Assumptions for Gamma degradation tests}
Suppose we have a degradation test with $n$ test units. The QC of each unit is inspected $m$ times at time $t_j$ for $j=1,\ldots,m$, and $t_m=T$ is the termination time. Let $\Delta t_j = t_j - t_{j-1}$ denote the inspection interval with $t_0=0$.
In this study, we further impose a minimum inspection interval constraint, requiring $\tau \geq \Delta t$ for some $\Delta t>0$. This constraint is necessary because $\tau = 0$ is impractical in real-world applications, as inspection tools always record quality characteristics at discrete time points. The selection of $\Delta t$ can be determined using two approaches. The first approach is based on the minimum measurable interval of the inspection tool. However, if $\Delta t$ is too small, the inspected quality characteristics at times $t$ and $t+\Delta t$ may be identical due to the tool's resolution, leading to $z(t) = z(t+\Delta t)$.   
To address this issue, the second approach determines $\Delta t$ such that
    \begin{equation*}
    P(Z(\Delta t) > a) = b,
    \end{equation*}
where $a$ is a given threshold, typically relate to the tool's resolution, and $b$ is a specified probability. The values of $a$ and $b$ can be determined based on the practitioner's requirements.

In this notation, a degradation test is defined by $\zeta = (n, m, T, \t)$, where $\t=(\Delta t_1,\ldots,\Delta t_m)$. Let $\{Z(t) | t \geq 0\}$ denote a gamma process. For $i=1,\ldots,n$, let $Z_{ij}=Z_i(t_j)$ denotes the inspected QC of the $i$th unit at time $t_j$, then $Z_{ij}$ follows a gamma distribution with shape parameter $\alpha t_j$ and rate parameter $\beta$, denoted as Gam$(\alpha t_j, \beta)$. Specifically, $Z_{ij}$ has the following properties:
\begin{enumerate}
\item At the starting time $t_0$, the quality characteristic $Z_{i0}=0$.
\item Independent increments: 
For any increasing time sequence $0 =t_0 < t_1 < t_2 < \cdots < t_m < \infty$, the increments $\Delta Z_{i1}, \Delta Z_{i2}, \ldots, \Delta Z_{im}$, where $\Delta Z_{ij}= Z_{ij}-Z_{i(j-1)}$, are mutually independent.
\item Stationary increments: The distribution of $Z_i(t_l) - Z_i(t_k)$ is identical to that of $Z_i(t_l - t_k)$ for any time points $t_k < t_l$.
\item Each increment $\Delta Z_{ij}$ follows a gamma distribution with probability density function (PDF)
\begin{equation}
f(\Delta z_{ij};\Delta t_j)= \frac{\beta^{\alpha \Delta t_{j}}}{\Gamma(\alpha \Delta t_{j})}\Delta z_{ij}^{\alpha \Delta t_{j} -1}e^{-\beta \Delta z_{ij}}.
\label{eq:con_gamma_pdf}
\end{equation}
where $\Gamma(x)=\int_{0}^{\infty}s^{x-1}e^{-s}ds$ is the gamma function.
\end{enumerate}

While equation (\ref{eq:con_gamma_pdf}) offers a standard parametrization for gamma processes, this study adopts a Tweedie parametrization by denoting $\gamma=\log\alpha/\beta$ and $\alpha=\alpha$. The corresponding PDF is 
\begin{equation*}
f(\Delta z_{ij};\Delta t_j)= \frac{\Delta z_{ij}^{\alpha\Delta t_{j}-1}\alpha^{\alpha\Delta t_{j}}}{\Gamma(\alpha\Delta t_{j})}e^{\alpha(-\Delta z_{ij}e^{-\gamma}-\gamma\Delta t_{j})}
\label{eq:tweedie_gamma_pdf}
\end{equation*}
As mentioned in \cite{tung2024optimizing}, the Tweedie parametrization offers two advantages. First, when applying maximum likelihood estimation (MLE) to estimate the parameters, the Tweedie parametrization leads to asymptotic independence between $\hat{\alpha}$ and $\hat{\gamma}$, simplifying the computation. Second, the expectation of $Z_{ij}$ is $e^\gamma t_j$, which forms a log-linear model. In gamma ADTs, log-linear models such as the Arrhenius or generalized Eyring models are commonly used. The Tweedie parametrization helps interpret the physical or chemical meaning of $\gamma$ and enables the extension of our results to ADTs.

\subsection{Statistical inference and optimal criteria}
In this study, the MLE is used to estimate the parameter $\bm{\theta}=(\alpha,\gamma)^\T$. According to the properties of gamma process, the log-likelihood function for a degradation test can be written as 
\begin{equation}\label{eq:loglikelihoodfunction}
l(\bm{\theta})=\sum\limits_{i=1}^{n}\sum\limits_{j=1}^{m}-\log\Gamma(\alpha \Delta t_{j})+(\alpha \Delta t_{j}-1)\log \Delta z_{ij}+\alpha\Delta t_{j}\log\alpha+\alpha (-\Delta z_{ij}e^{-\gamma}-\gamma\Delta t_j).
\end{equation}
Then, we can obtain the MLE $\hat{\bm{\theta}}$ by maximizing the above equation. In addition, the Fisher information matrix is
\begin{equation*}\label{eq:fisherinformationmatrix}
    \I(\bm{\theta};\zeta)=E\left(-\frac{\partial^2 l(\bm{\theta})}{\partial \bm{\theta} \partial \bm{\theta}^\T}\right)=n\left[
    \begin{matrix}
    \sum\limits_{j=1}^m[\Delta t_j^2\psi_1(\alpha\Delta t_j)]-\frac{T}{\alpha}&0\\
    0&\alpha T
    \end{matrix}
    \right],
\end{equation*}
where $\psi_r(x)=d^{r+1}\log\Gamma(x)/dx^{r+1}$ is the polygamma function with order $r$. The asymptotic covariance matrix of $\hat{\bm{\theta}}$ is the inverse of $\I(\bm{\theta};\zeta)$.

The purpose of this study is to determine a $\zeta$ such that we can efficiently estimate the reliability information of the product. In this study, we consider three criteria: $D$-optimality, $A$-optimality and $V$-optimality.
\begin{enumerate}
    \item The $D$-optimality criterion aims to minimize the determinant of the approximate covariance matrix of estimates. The corresponding objective function is
        \begin{equation*}            
        \phi_D(\zeta)=\det(\I(\bm{\theta};\zeta)^{-1})
        =\frac{1}{n^2T\left\lbrace\alpha \sum\limits_{j=1}^m\left[\Delta t_j^2\psi_1(\alpha\Delta t_j)\right]-T\right\rbrace},
            \label{eq:D_obj}
        \end{equation*}
        which does not depend on $\gamma$.
    \item The $A$-optimality criterion aims to minimize the approximate variance of the estimates. This is achieved by minimizing the trace of the covariance matrix. The corresponding objective function is
        \begin{equation*}
        \phi_A(\zeta)=\tr(\I(\bm{\theta};\zeta)^{-1})
        =\frac{1}{n} \left\lbrace\frac{1}{\sum\limits_{j=1}^m\left[\Delta t_j^2\psi_1(\alpha\Delta t_j)\right]-\frac{T}{\alpha}}+\frac{1}{\alpha T}\right\rbrace,
            \label{eq:A_obj}
        \end{equation*}
        which does not depend on $\gamma$.
    \item The $V$-optimality criterion aims to minimize the approximate variance of the estimated value of the $p$-th quantile of product lifetime. Let $Q$ denote the lifetime of a product, which is defined as the time that the degradation path $\{Z_t \mid t \geq 0\}$ first time touches a threshold $\eta$. More specifically,
    \begin{equation*}
    Q = \inf \{ t \mid Z_t \geq \eta \}.
    \end{equation*}
    Let $F_Q(t; \theta)$ be the cumulative distribution function (cdf). Based on the monotonic property of gamma process, 
    \begin{equation*}
        F_Q(t;\theta)=P(Q\leq t)=P(Z_t\geq\eta)=\int_\eta^\infty f(z;t)dz.
    \end{equation*}
    Let $\xi_p = F_Q^{-1}(p; \bm{\theta})$ denote the $p$-th quantile of $Q$. By applying $\delta$-method, the corresponding objective function is
        \begin{equation*}
        \phi_V(\zeta)=\text{Avar}(\hat{\xi}_p)=\textbf{h}^\T \I(\bm{\theta};\zeta)^{-1}\textbf{h}
            =\frac{1}{n} \left\lbrace\frac{h_1^2}{\sum\limits_{j=1}^m\left[\Delta t_j^2\psi_1(\alpha\Delta t_j)\right]-\frac{T}{\alpha}}+\frac{h_2^2}{\alpha T} \right\rbrace,
            \label{eq:V_obj}
        \end{equation*}
        where
        \begin{equation*}
            \textbf{h}=(h_1,h_2)^\T=\frac{\partial \xi_p}{\partial \bm{\theta}}=\frac{-1}{f_Q(\xi_p;\bm\theta)}\frac{\partial F_Q(\xi_p;\bm\theta)}{\partial\bm\theta}, 
        \end{equation*}
        and $f_Q$ is the pdf of $Q$.
\end{enumerate}
The $D$- and $A$-optimality focus on the efficiency of the parameters estimation, which suitable be used for the practitioner who wants to understand the mechanism of the product. On the other hand, the $V$-optimality focuses on the efficiency of product lifetime estimation, which is mainly used in the degradation test.

%Note that, without affecting the results, we may treat $n$ and $m$ as continuous variables in certain proofs.

\section{Optimal type-I designs}
Let $\tau = T/m$ denote the equal inspection interval. In this case, the decision variables include $\zeta=(n,m,\tau)$, and the optimization problem is 
\begin{equation*}
    \begin{aligned}
    \min\limits_{\zeta=(n,m,\tau)} \quad &\phi(\zeta),\\
    \text{s.t.}\quad\quad&\tau\geq\Delta t,\\
    &n,m\geq 1,
    \end{aligned}
\end{equation*}
where $\phi_D$, $\phi_A$ and $\phi_V$ are redefined as
\begin{equation*}
\phi_D(\zeta) = \frac{\alpha^2}{(nm)^2 (\alpha^3 \tau^3 \psi_1(\alpha \tau) - \alpha^2\tau^2)},
\end{equation*}
\begin{equation*}
\phi_A(\zeta) = \frac{1}{nm} \left[\frac{\alpha^2}{\alpha^2\tau^2 \psi_1(\alpha \tau) - \alpha\tau} + \frac{1}{\alpha \tau} \right],
\end{equation*}
\begin{equation*}
\phi_V(\zeta)
= \frac{1}{nm} \left[\frac{\alpha^2 h_1^2}{\alpha^2\tau^2 \psi_1(\alpha \tau) - \alpha\tau} + \frac{h_2^2}{\alpha \tau} \right].
\end{equation*} 

% Obviously, the objective function decreases as $n$ and $m$ increase. However, in practice, limited resources prevent the use of an infinite number of test units or inspections. In the following, we consider three scenarios that reflect practical constraints commonly encountered in real applications. The first scenario is $n$ and $m$ are fixed. This scenario reflect the number of test units may been restrict because of the chamber slots, and inspections have been determined for example The destructive degradation test is belong to this scenario with $m=1$. In this scenario, the only need to determine is the inspection time. The second scenario is $n$ and $m\tau$ are fixed that is the termination time is fixed. This scenario happens when the reliability need to be obtained in a specific time. The third scenario is the most general that the $n$, $m$ and $\tau$ are restrict under a cost constraint.

Obviously, the objective function decreases as $n$ and $m$ increase. However, in practice, limited resources prevent the use of an infinite number of test units or inspections. In the following, we consider three scenarios that reflect practical constraints commonly encountered in real applications. The first scenario assumes that both $n$ and $m$ are fixed. This situation reflects cases where the number of test units is limited by factors such as chamber capacity, and the number of inspections is predetermined. For example, destructive degradation tests fall into this category with $m = 1$. In this scenario, the only decision variable to determine is the inspection interval. The second scenario assumes that $n$ and the total test duration $T=m\tau$ are fixed, meaning the experiment must be completed within a specific time frame. This situation arises in practice when reliability information is needed by a strict deadline, such as for a product launch, or scheduled maintenance planning.
The third scenario is the most general, where $n$, $m$, and $\tau$ are subject to a cost constraint. This setup reflects realistic planning situations where budget limitations affect all aspects of the test design.

\subsection{Optimal designs when $n$ and $m$ are fixed}
In this section, the only decision variable is $\tau$. The following two theorems summarize the optimal $\tau$ for three criteria.
\begin{theorem}\label{thm:D-optnocost}
    For D-optimality, $\phi_D(\zeta)$ decreases as $\tau$ increases.
\end{theorem}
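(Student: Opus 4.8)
The plan is to reduce the claim to a monotonicity statement for a single-variable function and then to a polygamma inequality. Since $n$, $m$, and the model value $\alpha$ are fixed and positive, write $\phi_D(\zeta) = \alpha^2/[(nm)^2 g(\tau)]$ with $g(\tau) = \alpha^3\tau^3\psi_1(\alpha\tau) - \alpha^2\tau^2$. Substituting $u = \alpha\tau$ collapses this to $g = h(u)$ with $h(u) = u^3\psi_1(u) - u^2 = u^2\bigl(u\psi_1(u) - 1\bigr)$. Because $u = \alpha\tau$ is strictly increasing in $\tau$, it suffices to show that $h$ is positive and strictly increasing on $(0,\infty)$; then $1/g$, and hence $\phi_D$, is strictly decreasing in $\tau$. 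Positivity follows from the elementary bound $\psi_1(u) = \sum_{k\ge 0}(u+k)^{-2} > \int_0^\infty (u+x)^{-2}\,dx = 1/u$, which gives $u\psi_1(u) > 1$.

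For monotonicity I would differentiate, obtaining $h'(u) = 3u^2\psi_1(u) + u^3\psi_2(u) - 2u = u\,P(u)$, where $P(u) = 3u\psi_1(u) + u^2\psi_2(u) - 2$ and $\psi_2 = \psi_1'$. Everything reduces to proving $P(u) > 0$ for all $u>0$. Using the series $\psi_1(u) = \sum_{k\ge0}(u+k)^{-2}$ and $\psi_2(u) = -2\sum_{k\ge0}(u+k)^{-3}$ and combining terms gives
\[
P(u) = \sum_{k=0}^\infty\Bigl(\frac{3u}{(u+k)^2} - \frac{2u^2}{(u+k)^3}\Bigr) - 2 = \sum_{k=0}^\infty f(k) - 2, \qquad f(x) := \frac{u(u+3x)}{(u+x)^3}.
\]
The crux is to recognize that the trailing constant is exactly the integral of $f$: a direct computation gives $\int_0^\infty f(x)\,dx = 3 - 1 = 2$, so $P(u) = \sum_{k\ge0}f(k) - \int_0^\infty f(x)\,dx$. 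I would then verify that $f$ is positive and strictly decreasing on $[0,\infty)$ — writing $v = u+x$ yields $f'(x) = -6ux/(u+x)^4 \le 0$ — so that the comparison $f(k) \ge \int_k^{k+1} f(x)\,dx$ holds for every $k\ge0$, strictly for $k\ge1$. Summing over $k$ gives $\sum_{k\ge0}f(k) > \int_0^\infty f(x)\,dx$, i.e. $P(u) > 0$, which completes the argument.

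The main obstacle is the polygamma inequality $P(u) > 0$. The naive route of showing the two factors $u^2$ and $u\psi_1(u)-1$ of $h$ are separately increasing fails, since $u\psi_1(u)$ is in fact \emph{not} monotone — its derivative $\psi_1(u)+u\psi_2(u)$ is negative near $u=1$ — so $h$ must be handled as a whole. The decisive trick is to represent $P$ as the gap between a series and its matching integral, where it is monotonicity of the \emph{summand} $f$, rather than of $u\psi_1(u)$, that fixes the sign. If a lemma recording such polygamma bounds is established earlier in the paper, the sum–integral step could instead be quoted directly.
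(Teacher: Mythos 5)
Your proof is correct, and while it begins with the same reduction as the paper---substituting $u=\alpha\tau$ and showing that the derivative $3u^2\psi_1(u)+u^3\psi_2(u)-2u$ of $u^3\psi_1(u)-u^2$ is positive---it establishes that positivity by a genuinely different technique. The paper writes the derivative as $u^3\int_0^\infty e^{-us}\varpi(s)\,ds$ with $\varpi(s)=3\int_0^s z/(1-e^{-z})\,dz-s^2/(1-e^{-s})-2s$, a Laplace-transform representation, and then proves $\varpi>0$ by checking $\lim_{s\to0^+}\varpi(s)=0$ and the sign of $e^s(1-e^{-s})^2\varpi'(s)$. You instead use the series $\psi_1(u)=\sum_{k\ge0}(u+k)^{-2}$ and $\psi_2(u)=-2\sum_{k\ge0}(u+k)^{-3}$, recognize the trailing constant $2$ as exactly $\int_0^\infty f(x)\,dx$ for the summand $f(x)=u(u+3x)/(u+x)^3$ (your computations $f'(x)=-6ux/(u+x)^4\le0$ and $\int_0^\infty f=3-1=2$ both check out), and conclude by the sum--integral comparison $f(k)\ge\int_k^{k+1}f$ for a decreasing summand. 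Your route is more elementary and self-contained: it avoids the Laplace machinery and the auxiliary function $\varpi$, whose positivity needs its own differential argument, and your separate verification that $u\psi_1(u)>1$ (again sum versus integral) makes explicit the positivity of the denominator of $\phi_D$, which the paper delegates to a supplementary result (Lemma \ref{lemma:qi2020}). What the paper's approach buys is uniformity: the same Laplace-representation technique powers its harder polygamma lemmas (notably those behind Lemma \ref{lemma:tung2025} and the $V$-optimality analysis), where a simple sum--integral comparison would not suffice. One cosmetic point: $u\psi_1(u)$ is in fact strictly decreasing on all of $(0,\infty)$ (this is the cited result of Alzer underlying Lemma \ref{lemma:alzer20012004}), not merely non-monotone near $u=1$, which only strengthens your observation that the naive factorwise argument must fail.
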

\begin{proof}
    Let $x=\alpha \tau>0$. To show that $x^3\psi_1(x)-x^2$ is an increasing function, we begin by differentiating it:
    \begin{equation*}
        3x^2\psi_1(x)+x^3\psi_2(x)-2x.
    \end{equation*}
    This expression can be written in integral form as
    \begin{equation}
        x^3\int_0^\infty e^{-xs} \varpi(s)ds,\label{eq:d-opt-proof}
    \end{equation}
    where
    \begin{equation*}
        \varpi(s)=3\int_0^s \frac{z}{1-e^{-z}}dz-\frac{s^2}{1-e^{-s}}-2s.
    \end{equation*}
    Since $\lim_{s\rightarrow 0^+} \varpi(s)=0$ and $e^s(1-e^{-s})^2\varpi'(s)>0$, it follows that $\varpi(s)>0$ for $s\in\mathbb{R}^+$. Therefore, the integral in equation (\ref{eq:d-opt-proof}) is positive, and hence $x^3\psi_1(x)-x^2$ is an increasing function.
    %Since $\lim_{s\rightarrow 0^+} \varpi(s)=\lim_{s\rightarrow 0^+}e^s(1-e^{-s})^2\varpi'(s)=\lim_{s\rightarrow 0^+}(e^s(1-e^{-s})^2\varpi'(s))'=0$ and $(e^s(1-e^{-s})^2\varpi'(s))''=se^{s}+2-2e^{-s}>0$,  it follows that $\varpi(s)>0$ for $s\in\mathbb{R}^+$. Therefore, the integral in equation (\ref{eq:d-opt-proof}) is positive, and hence $x^3\psi_1(x)-x^2$ is an increasing function.
\end{proof}

\begin{theorem} \label{thm:V-optnocost}
    For V-optimality, if $$\frac{h_2^2}{\alpha^2h_1^2}\geq\frac{2}{3},$$ then $\phi_V(\zeta)$ decreases as $\tau$ increases. Otherwise, it attains a unique minimum at $$\tau =\max\left\lbrace \Delta t,~ \frac{1}{\alpha}\Omega^{-1}\left(-\frac{h_2^2}{\alpha^2h_1^2}\right)\right\rbrace.$$
\end{theorem}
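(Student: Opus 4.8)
The plan is to turn this into a one–variable calculus problem and then reduce the entire dichotomy to the range and monotonicity of a single function. Put $x=\alpha\tau$ and write $A=\alpha^2h_1^2>0$, $B=h_2^2>0$, and $D(x)=x^2\psi_1(x)-x=x\bigl(x\psi_1(x)-1\bigr)$. Since $nm$ is a fixed positive constant, minimizing $\phi_V$ over $\tau\ge\Delta t$ is equivalent to minimizing
\[
g(x)=\frac{A}{D(x)}+\frac{B}{x},\qquad x\ge\alpha\Delta t .
\]
First I would check that $g$ is smooth and well defined, i.e. that $D(x)>0$ on $(0,\infty)$. Writing $x\psi_1(x)-1=x\int_0^\infty\frac{s-1+e^{-s}}{1-e^{-s}}e^{-xs}\,ds$ and using $s-1+e^{-s}>0$ for $s>0$ shows $x\psi_1(x)>1$; this is the same integral‑representation device used in the proof of Theorem~\ref{thm:D-optnocost}.

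Next I would derive the first‑order condition. A direct differentiation gives
\[
x^2g'(x)=-A\,\Omega(x)-B,\qquad \Omega(x):=\frac{x^2D'(x)}{D(x)^2},
\]
so every critical point solves $\Omega(x)=-B/A=-h_2^2/(\alpha^2h_1^2)$, and the sign of $g'$ is governed entirely by the comparison of $\Omega(x)$ with $-B/A$. The whole statement then follows once I show that $\Omega$ is a strictly decreasing bijection of $(0,\infty)$ onto $(-2/3,0)$. For the two endpoints I would use $\psi_1(x)=x^{-2}+\pi^2/6+O(x)$ as $x\to0^+$, which gives $D(x)\to1$, $D'(x)\to-1$ and hence $\Omega(x)\sim-x^2\to0^-$, and $\psi_1(x)=x^{-1}+\tfrac12x^{-2}+\tfrac16x^{-3}+O(x^{-4})$ as $x\to\infty$, which gives $D(x)\to\tfrac12$, $x^2D'(x)\to-\tfrac16$, hence $\Omega(x)\to-\tfrac23$. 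This is exactly where the threshold $2/3$ comes from.

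The clean route to monotonicity is the identity $\Omega=G-xG'$ with $G(x):=1/\bigl(x\psi_1(x)-1\bigr)=x/D(x)$, which yields $\Omega'(x)=-x\,G''(x)$; thus $\Omega$ is strictly decreasing on $(0,\infty)$ if and only if $G$ is strictly convex there. Once this is in hand I would assemble the dichotomy as follows. If $h_2^2/(\alpha^2h_1^2)\ge2/3$ then $-B/A\le-\tfrac23<\Omega(x)$ for every $x>0$, so $g'<0$ throughout and $\phi_V$ decreases in $\tau$. If $h_2^2/(\alpha^2h_1^2)<2/3$ then $-B/A\in(-2/3,0)$ lies in the range of $\Omega$, so there is a unique $x^*=\Omega^{-1}(-B/A)$ with $g'<0$ on $(0,x^*)$ and $g'>0$ on $(x^*,\infty)$; translating back by $\tau=x/\alpha$ and intersecting with the feasible set $\tau\ge\Delta t$ (on which $g$ is already increasing once $\alpha\tau>x^*$) produces the minimizer $\max\bigl\{\Delta t,\ \alpha^{-1}\Omega^{-1}(-h_2^2/(\alpha^2h_1^2))\bigr\}$.

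The main obstacle is the convexity of $G$, equivalently the polygamma inequality $2\bigl(u'(x)\bigr)^2>u(x)u''(x)$ for $u(x)=x\psi_1(x)-1$. Here I expect the integral representation to pay off: writing $u=x I_0$ with $I_k=\int_0^\infty s^k\mu(s)e^{-xs}\,ds$ and $\mu(s)=(s-1+e^{-s})/(1-e^{-s})>0$, a short computation gives $2(u')^2-uu''=I_0^2\bigl[\,1+(1-xm_1)^2-x^2\sigma^2\,\bigr]$, where $m_1=I_1/I_0$ and $\sigma^2=I_2/I_0-m_1^2$ are the mean and variance of $s$ under the measure $\mu(s)e^{-xs}\,ds$. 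So everything reduces to the sharp variance bound $x^2\sigma^2<1+(1-xm_1)^2$; plain Cauchy–Schwarz only gives $\sigma^2\ge0$, so the delicate part is controlling $\sigma^2$ from above using the specific structure of $\mu$, and this is the step I expect to require the most care.
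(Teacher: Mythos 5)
Your reduction is sound and, up to notation, identical to the paper's: your $\Omega(x)=x^2D'(x)/D(x)^2$ expands to exactly the paper's $\Omega(x)=\bigl(2x\psi_1(x)+x^2\psi_2(x)-1\bigr)/\bigl(x\psi_1(x)-1\bigr)^2$ from Lemma~\ref{lemma:tung2025}, your first-order condition is equation~(\ref{eq:V-opt_derivative}), and your endpoint limits $\Omega(0^+)=0$ and $\Omega(\infty)=-2/3$ are obtained by the same asymptotic and series devices the paper uses. The identity $\Omega=G-xG'$ with $G=1/\bigl(x\psi_1(x)-1\bigr)$, hence $\Omega'=-xG''$, and the reduction of convexity of $G$ to $2(u')^2>uu''$ and then to the variance bound $x^2\sigma^2<1+(1-xm_1)^2$, are all correct algebra (I verified your moment computation). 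But the one genuinely hard ingredient---strict monotonicity of $\Omega$ on $(0,\infty)$---is precisely what you do not prove: you explicitly defer the variance bound, and that bound \emph{is} the theorem; everything else is routine calculus. So the proposal has a genuine gap at its core.

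The gap is not a formality, because the deferred inequality is sharp and structure-dependent in a way that rules out soft arguments. For the weight $\mu(s)=s$ one gets $u=1/x$ and $2(u')^2-uu''\equiv 0$ (equality, reflecting the limit $-2/3$), while for $\mu(s)=s+c$ with $c>0$ one gets $u=1/x+c$ and $2(u')^2-uu''=-2c/x^3<0$: the inequality \emph{reverses} under an arbitrarily small constant perturbation of the weight. Hence no generic moment inequality (Cauchy--Schwarz, log-convexity of the $I_k$, etc.) can close it; one must exploit the fine structure of $\mu(s)=(s-1+e^{-s})/(1-e^{-s})$. The paper closes exactly this step by a different decomposition in Lemma~\ref{lemma:tung2025}: it factors $\Omega$ as $\frac{2x\psi_1(x)+x^2\psi_2(x)-1}{(x^2\psi_1(x)-x-1/2)^2}\left(\frac{x^2\psi_1(x)-x-1/2}{x\psi_1(x)-1}\right)^2$, where the first factor is decreasing onto $(-6,-4)$ by the cited result of Qi (Lemma~\ref{lemma:qi2022}), and the second is shown increasing by the monotone-ratio-of-Laplace-transforms lemma (Lemma~\ref{lemma:yang2017}) applied to the integrands of Lemma~\ref{lemma:qi2020}, with the required sign of the integrand-ratio derivative verified through an explicit power-series expansion with provably positive coefficients. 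To complete your route you would need an argument of comparable depth for the variance bound (or simply import the paper's Lemma~\ref{lemma:tung2025}); as written, the proposal establishes the theorem only modulo that unproven inequality.
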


\begin{proof}
    Let $x=\alpha\tau$. Differentiating the objective function yields
    \begin{equation}
        \left[\frac{\alpha^2 h_1^2}{x^2 \psi_1(x) - x} + \frac{h_2^2}{x} \right]'=-\frac{\alpha^2 h_1^2}{x^2}\left(\Omega(x)+\frac{h_2^2}{\alpha^2h_1^2}\right).\label{eq:V-opt_derivative}
    \end{equation}
    According to Lemma \ref{lemma:tung2025}
    \begin{equation*}
        \left[\frac{\alpha^2 h_1^2}{x^2 \psi_1(x) - x} + \frac{h_2^2}{x} \right]'<0, \forall x>0\text{ if and only if }\frac{h_2^2}{\alpha^2h_1^2} \geq \frac{2}{3},
    \end{equation*}
    and
    \begin{equation*}
        \left[\frac{\alpha^2 h_1^2}{x^2 \psi_1(x) - x} + \frac{h_2^2}{x} \right]'=0\text{ has a unique root at }x=\Omega^{-1}\left(-\frac{h_2^2}{\alpha^2h_1^2}\right)\text{ if and only if }\frac{h_2^2}{\alpha^2h_1^2} < \frac{2}{3}.
    \end{equation*}
\end{proof}

\begin{corollary}
    Theorem \ref{thm:V-optnocost} reduces to A-optimality when $h_1^2/h_2^2=1$.
\end{corollary}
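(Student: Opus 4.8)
The plan is to exploit the structural identity between the two objective functions rather than to carry out any new optimization. Comparing the type-I expressions, $\phi_A(\zeta)$ is exactly $\phi_V(\zeta)$ with both weights $h_1^2$ and $h_2^2$ set to $1$; that is, A-optimality is the member of the V-optimality family corresponding to $\textbf{h}=(1,1)^\T$. Since the hypothesis $h_1^2/h_2^2=1$ says only that $h_1^2=h_2^2$, I would write $h_1^2=h_2^2=:c>0$ and pull this common constant out of the bracket.

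Concretely, first I would observe that
\[
\phi_V(\zeta)=\frac{c}{nm}\left[\frac{\alpha^2}{\alpha^2\tau^2\psi_1(\alpha\tau)-\alpha\tau}+\frac{1}{\alpha\tau}\right]=c\,\phi_A(\zeta),
\]
so that $\phi_V$ and $\phi_A$ differ only by the positive factor $c$, which does not involve the decision variable $\tau$. Multiplying an objective by a positive $\tau$-free constant changes neither its monotonicity in $\tau$ nor its minimizer, so the V-optimality problem and the A-optimality problem are equivalent and share the same optimal interval.

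Next I would specialize the conclusion of Theorem \ref{thm:V-optnocost} to this case. Under $h_1^2=h_2^2$ the threshold ratio $h_2^2/(\alpha^2 h_1^2)$ collapses to $1/\alpha^2$, so the dichotomy in the theorem becomes: $\phi_A(\zeta)$ decreases in $\tau$ whenever $1/\alpha^2\geq 2/3$, and otherwise attains its unique minimum at $\tau=\max\{\Delta t,\ \alpha^{-1}\Omega^{-1}(-1/\alpha^2)\}$. This is precisely the A-optimality statement obtained by substituting $h_1=h_2=1$.

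No genuine analytic difficulty arises; the argument is a pure specialization, so the ``main obstacle'' is really only one of bookkeeping. The two points I would be careful about are verifying that the common factor $c$ is strictly positive and free of $\tau$ (so that dividing it out preserves the argmin), and performing the substitution $h_1^2=h_2^2$ consistently in both the objective and the threshold $h_2^2/(\alpha^2 h_1^2)$, so that the reduced criterion and the optimal interval are reported correctly.
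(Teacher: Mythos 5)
Your proof is correct and is exactly the specialization the paper intends: the corollary is stated without proof precisely because $\phi_A$ is the $h_1^2=h_2^2$ member of the $\phi_V$ family, and the conclusion of Theorem \ref{thm:V-optnocost} depends on $(h_1,h_2)$ only through the ratio $h_2^2/(\alpha^2 h_1^2)$, which collapses to $1/\alpha^2$. Your added care about the positive $\tau$-free factor $c=h_1^2=h_2^2$ (so the argmin is preserved even though $\phi_V=c\,\phi_A$ rather than $\phi_V=\phi_A$) is a sound piece of bookkeeping that the paper leaves implicit.
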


In general, intuition suggests that using more resources provides more information. This holds true for D-optimality. However, for A- and V-optimality, the optimal inspection interval is capped under certain parameter settings, contradicting this intuition. Both the existence and nonexistence of the optimal $\tau$ have been observed in several applications. For example, the LED data in \cite{lim2015optimum} demonstrates the existence of an optimal $\tau$. The parameter estimates are $\alpha = 0.065$ and $\gamma = -0.77$, with $\eta = 0.5$ and $p = 0.1$, yielding $\frac{h_2^2}{\alpha^2 h_1^2} = 0.53 < \frac{2}{3}$. Thus, the optimal $\tau$ exists and is $53.2$ hours.
In contrast, for the carbon-film resistor data in \cite{tseng2009optimal}, the parameter estimates are $\alpha = 2.26 \times 10^{-4}$ and $\gamma = -11.12$, with $\eta = 5$ and $p = 0.05$, yielding $\frac{h_2^2}{\alpha^2 h_1^2} = 122.87 > \frac{2}{3}$. %118.26
Therefore, the optimal $\tau$ does not exist.
Fig. \ref{fig:optIVnocost} plots $\phi_V(\zeta)$ for the two examples under $n = m = 1$, verifying our theorem. The y-axis is presented on a log-scale.
\begin{figure}
    \centering
    \includegraphics[width=0.4\textwidth]{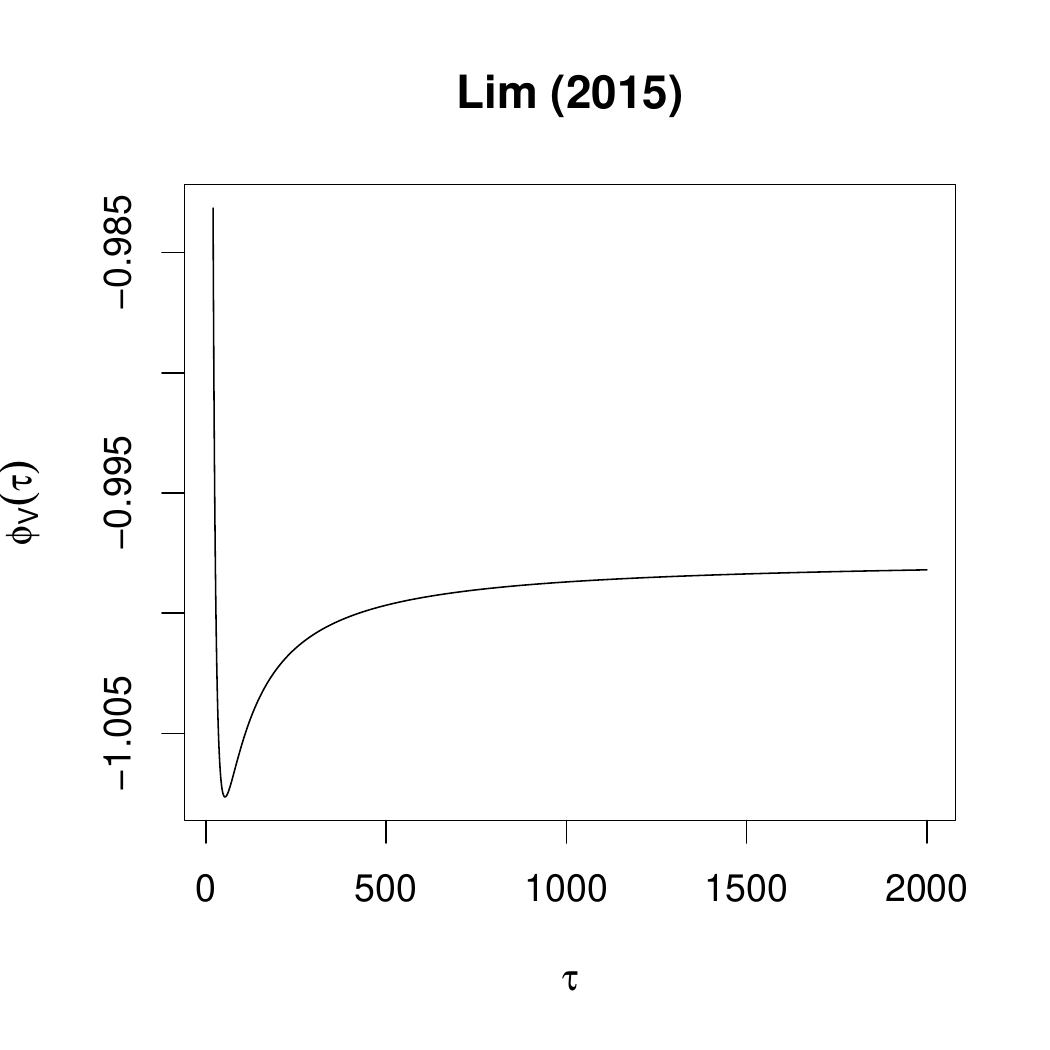}
    \includegraphics[width=0.4\textwidth]{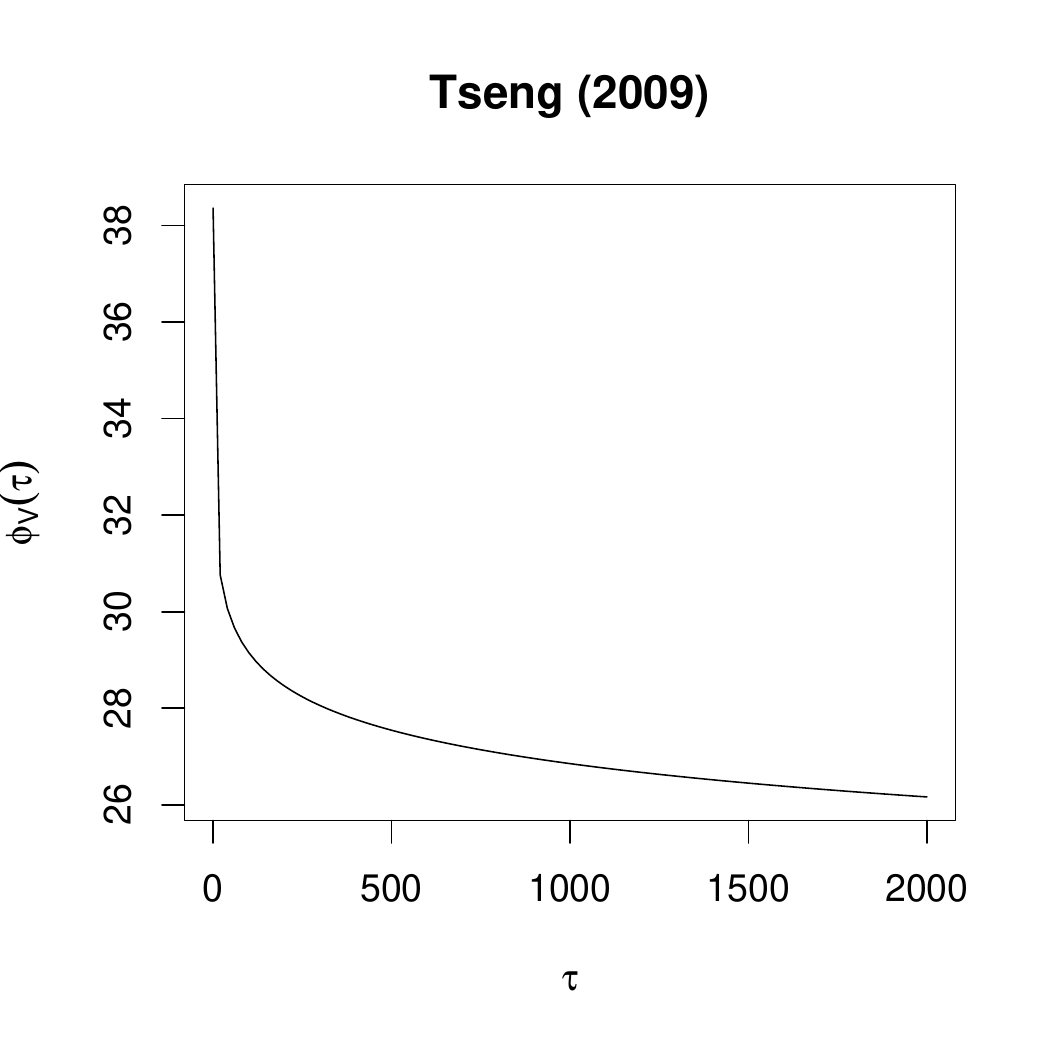}
    \caption{The log-scaled $\phi_V(\zeta)$ for the examples from \cite{lim2015optimum} and \cite{tseng2009optimal}.}
    \label{fig:optIVnocost}
\end{figure}

To provide practical insight into the index $h_2^2/(\alpha h_1)^2$, Fig. \ref{fig:Vcont} presents contour plots of $\log(h_2^2/(\alpha h_1)^2)$ with respect to $\alpha$ and $\gamma$ under $\eta = 1$ for $p = 0.05$, $0.3$, and $0.5$.
\begin{figure}
    \centering
    \includegraphics[width=0.32\textwidth]{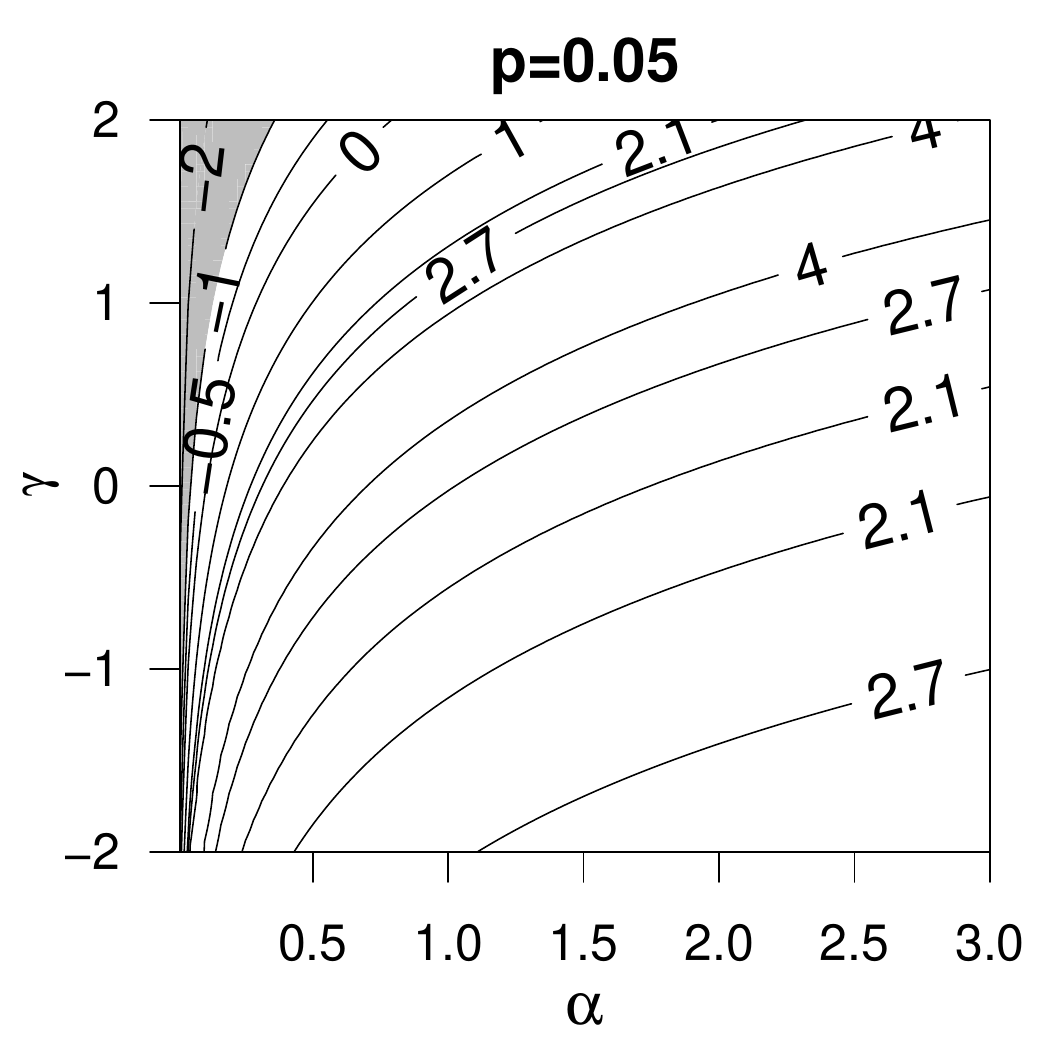}
    \includegraphics[width=0.32\textwidth]{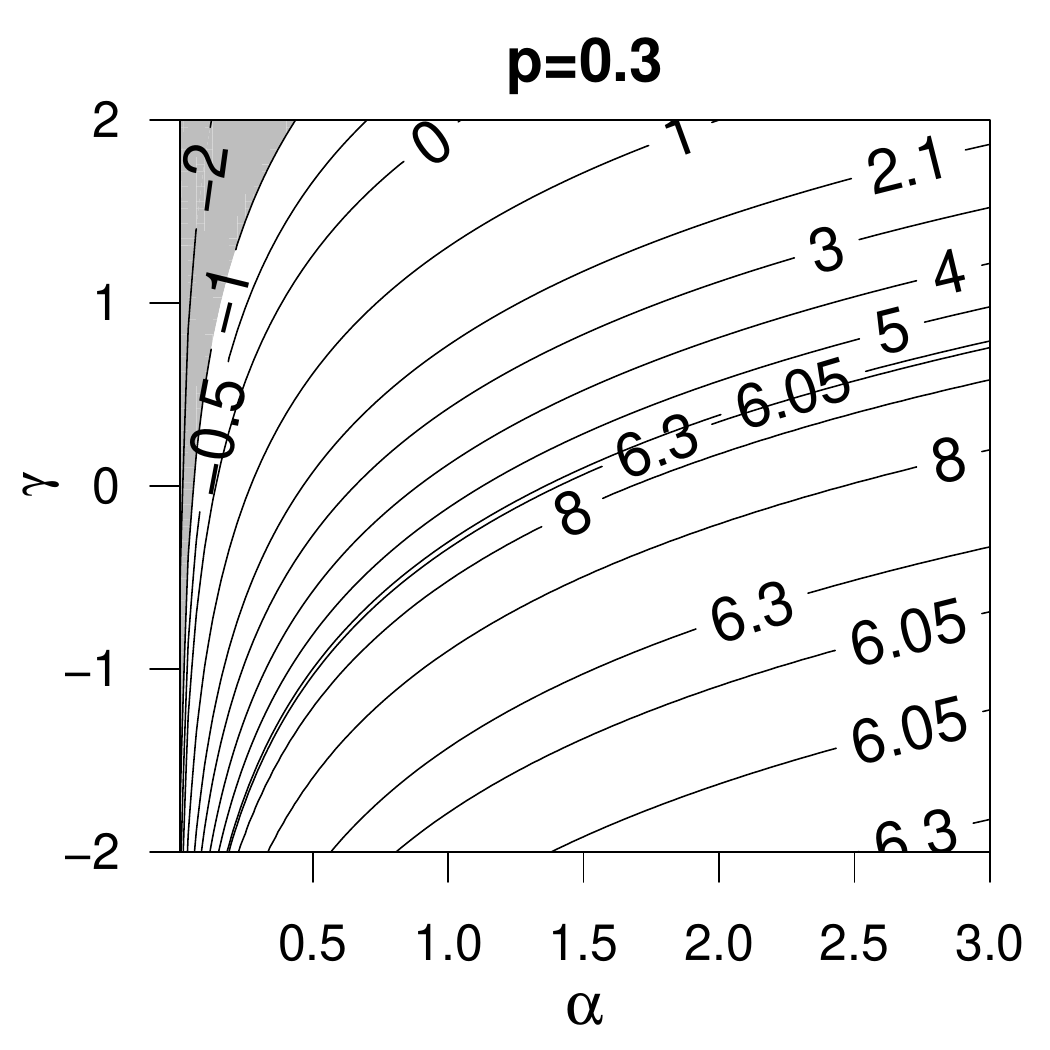}
    \includegraphics[width=0.32\textwidth]{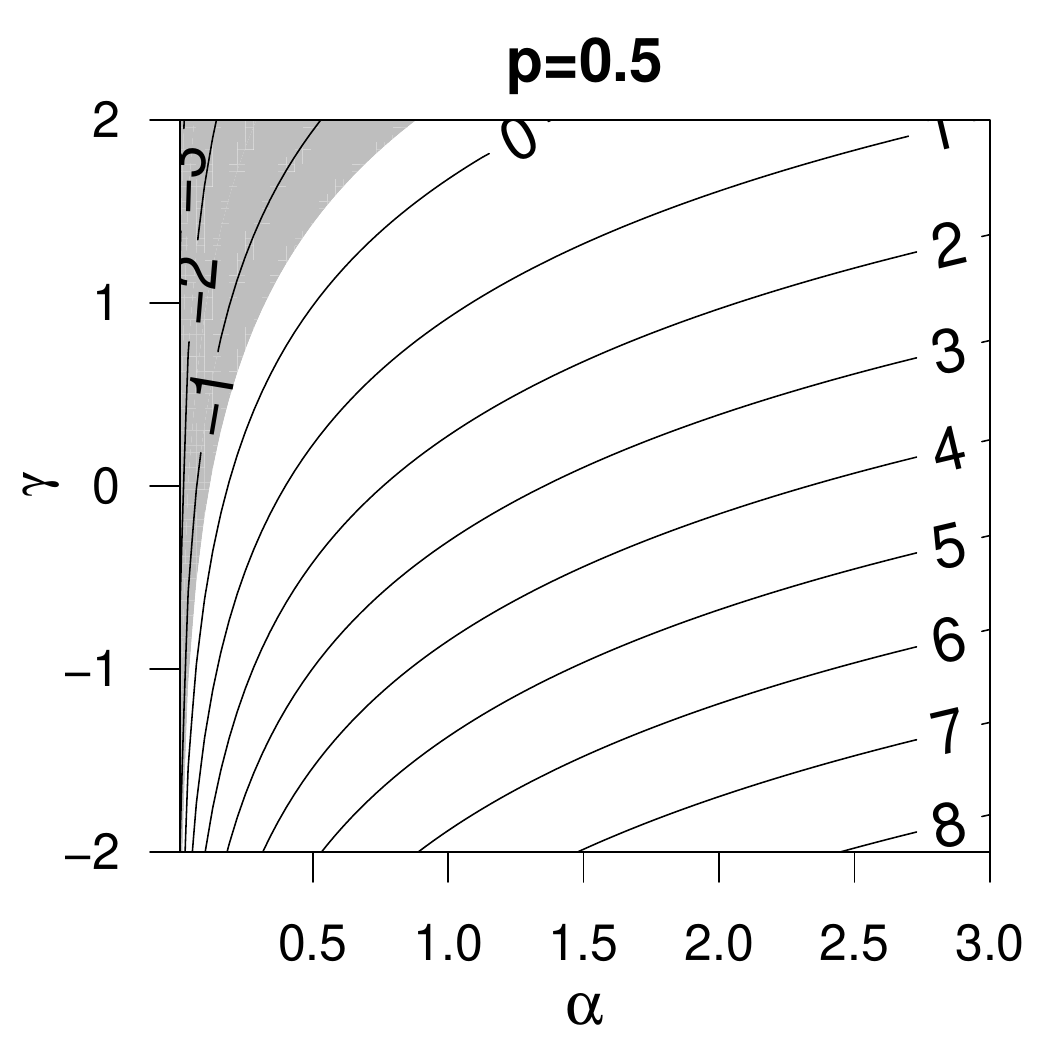}
    \caption{Contour plots of $h_2^2 / (\alpha h_1)^2$ on a log scale under $p = 0.05$, $0.3$, and $0.5$}
    \label{fig:Vcont}
\end{figure}
% The gray areas in the top-left corner represent regions where $h_2^2 / (\alpha h_1)^2 < 2/3$. This phenomenon can be interpreted using the properties of the gamma process: its mean is $e^{\gamma}t$, and its coefficient of variation (CV), defined as the ratio of the standard deviation to the mean, is $t/\sqrt{\alpha}$. The CV reflects the proportion of noise relative to information. As the CV increases more rapidly (i.e., when $\alpha$ decreases), the optimal $\tau$ tends to suggest more frequent inspections, since the uncertainty grows with time. Similarly, as the drift $\gamma$ increases, the optimal $\tau$ also recommends more frequent inspections. This is consistent with practical ADT settings, where higher stress levels are typically measured more frequently than lower stress levels so that we can get more information before the degradation path touches the threshold. On the other hand, for A-optimality, since it is not directly related to lifetime estimation, the drift parameter $\gamma$ does not affect the optimal inspection times. The inspection times tend to be allocated at larger values when the proportion of noise relative to information is higher. This can also be 發現 that when focus on prediction of lifetime or estimation of parameters will get the total different strategies.
The gray areas in the top-left corner represent regions where $h_2^2 / (\alpha h_1)^2 < 2/3$, and these areas slightly expand as $p$ increases.
We consider the coefficient of variation (CV) and mean of gamma process, which are $t/\sqrt{\alpha}$ and $e^\gamma t$. Since the CV represents the ratio of noise relative to information, the CV increases more rapidly with time when $\alpha$ is small, leading to greater uncertainty. In such cases, the optimal $\tau$ tends to suggest more frequent inspections. Similarly, as the drift parameter $e^\gamma$ increases, the optimal design also favors more frequent inspections. This consists with practical ADT settings, where higher stress levels are typically inspected more often to gather sufficient information before degradation reaches the threshold.
\begin{remark}
    When $m=1$, the results can be applied to destructive degradation test.
\end{remark}
% In contrast, under A-optimality, which is not directly related to lifetime estimation, the parameter $\gamma$ does not affect the optimal inspection times. When the proportion of noise relative to information is higher, the inspection times tend to be allocated at larger values.

\subsection{Optimal designs when $n$ and $T$ are fixed}
When $n$ and $T$ are fixed, the optimization problem for three criteria can all be simplified into the same problem.
\begin{equation*}
    \begin{aligned}
    \max\limits_{\tau} \quad &\tau\psi_1(\alpha\tau),\\
    \text{s.t.}\quad&\Delta t\leq\tau\leq T.
    \end{aligned}
\end{equation*}
Then, we have the following theorem.
\begin{theorem}\label{thm:type-I T fixed}
    The $\tau\psi_1(\alpha\tau)$ is a decreasing function. Hence, the optimal inspection time is $\Delta t$.
\end{theorem}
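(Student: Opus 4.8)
The plan is to reduce the statement to a clean monotonicity claim about a single polygamma expression and then exploit the integral representations of $\psi_1$ and $\psi_2$, in the same spirit as the proof of Theorem~\ref{thm:D-optnocost}. Writing $x=\alpha\tau$, I would first note that $\tau\psi_1(\alpha\tau)=\alpha^{-1}\,x\,\psi_1(x)$ with $\alpha>0$ fixed, so $\tau\psi_1(\alpha\tau)$ is decreasing in $\tau$ if and only if $g(x):=x\psi_1(x)$ is decreasing on $(0,\infty)$. Differentiating gives $g'(x)=\psi_1(x)+x\psi_2(x)$, and the entire problem becomes showing $g'(x)<0$ for every $x>0$.

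Next I would bring in the standard representations $\psi_1(x)=\int_0^\infty \frac{s\,e^{-xs}}{1-e^{-s}}\,ds$ and $\psi_2(x)=-\int_0^\infty \frac{s^2\,e^{-xs}}{1-e^{-s}}\,ds$. The only awkward feature is the external factor $x$ in the term $x\psi_2(x)$, which I would absorb exactly as in the earlier proof: since $f(s)=s^2/(1-e^{-s})$ satisfies $f(0^+)=0$ and $f(s)e^{-xs}\to 0$ as $s\to\infty$, integration by parts gives $x\int_0^\infty f(s)e^{-xs}\,ds=\int_0^\infty f'(s)e^{-xs}\,ds$. Combining the two integrals then collapses $g'(x)$ into the single expression $g'(x)=\int_0^\infty e^{-xs}\,\rho(s)\,ds$, where $\rho(s)=\frac{s}{1-e^{-s}}-\frac{d}{ds}\!\left(\frac{s^2}{1-e^{-s}}\right)$.

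The remaining task is to pin down the sign of $\rho$, since $\rho(s)<0$ on $(0,\infty)$ forces the integral, and hence $g'(x)$, to be negative. A short computation of the derivative reduces $\rho$ to the factored form $\rho(s)=\frac{s\,\varphi(s)}{(1-e^{-s})^2}$ with $\varphi(s)=e^{-s}(1+s)-1$. I would then settle the sign of $\varphi$ by the elementary observation that $\varphi(0)=0$ and $\varphi'(s)=-s\,e^{-s}<0$ for $s>0$, so $\varphi(s)<0$ throughout; because $s>0$ and $(1-e^{-s})^2>0$, it follows that $\rho(s)<0$ and therefore $g'(x)<0$. Hence $\tau\psi_1(\alpha\tau)$ is strictly decreasing in $\tau$, so on the feasible set $[\Delta t,T]$ its maximum is attained at the left endpoint, giving the claimed optimal inspection time $\tau=\Delta t$.

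I expect the main obstacle to be the bookkeeping in the integration-by-parts step, namely confirming that both boundary terms vanish (the asymptotic $s^2/(1-e^{-s})\sim s$ near $0$ is what rescues the lower limit, and the exponential decay handles the upper limit) and simplifying $\rho$ carefully enough to expose the sign-determining factor $\varphi(s)=e^{-s}(1+s)-1$. Once that algebraic reduction is in place, the monotonicity of $\varphi$ and the final conclusion follow immediately.
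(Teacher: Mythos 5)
Your proof is correct, and it fills in a step the paper leaves implicit: the paper states Theorem \ref{thm:type-I T fixed} without an explicit proof, the underlying monotonicity of $x\psi_1(x)$ being available only by citation --- Lemma 2.3 of \cite{alzer2004sharp}, invoked inside the proof of Lemma \ref{lemma:alzer20012004}, asserts $x\psi_1(x)>y\psi_1(y)$ for $y>x>0$ --- or, alternatively, from Lemma \ref{lemma:qi2020}, since $x\psi_1(x)-1$ is there exhibited as the Laplace transform of a positive kernel and is therefore completely monotone, hence decreasing. Your argument instead proves the monotonicity from scratch, in the same style as the paper's proof of Theorem \ref{thm:D-optnocost}: the reduction $\tau\psi_1(\alpha\tau)=\alpha^{-1}g(\alpha\tau)$ with $g(x)=x\psi_1(x)$ is valid, the integral representations of $\psi_1$ and $\psi_2$ are the standard ones, the integration by parts is justified exactly as you say ($s^2/(1-e^{-s})\sim s$ at the origin, exponential decay at infinity), and the algebra checks out: $\rho(s)=\frac{s}{(1-e^{-s})^2}\left[e^{-s}(1+s)-1\right]$, where $\varphi(s)=e^{-s}(1+s)-1$ vanishes at $0$ and has derivative $-se^{-s}<0$, so $\rho<0$ and $g'<0$ on $(0,\infty)$. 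As a consistency check, multiplying your kernel by $e^{2s}/e^{2s}$ gives $-\rho(s)=\frac{se^{s}(e^{s}-1-s)}{(e^{s}-1)^{2}}$, which is exactly $s$ times the kernel in Lemma \ref{lemma:qi2020}, i.e., your computation recovers $\frac{d}{dx}\left[x\psi_1(x)\right]=-\int_0^\infty s\,\frac{e^{s}(e^{s}-1-s)}{(e^{s}-1)^{2}}\,e^{-xs}\,ds$, so the two routes agree. What your route buys is a self-contained, elementary derivation (the sign of $\varphi$ needs only one differentiation, simpler than the power-series expansions used elsewhere in the paper's appendix); what the paper's route buys is brevity, outsourcing the inequality to known polygamma results. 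Your final optimization step is also handled correctly: the reduced problem maximizes $\tau\psi_1(\alpha\tau)$ over $[\Delta t,\,T]$, and strict decrease places the optimum at the left endpoint $\tau^{*}=\Delta t$.
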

When $T$ is fixed, there is a trade-off between $m$ and $\tau$. Theorem~\ref{thm:type-I T fixed} suggests that the number of inspections is more influential than the inspection interval. In other words, when the termination time is fixed, more frequent inspections provide more information. In practice, conducting an inspection may incur a certain cost. The next scenario introduces a general cost constraint to better reflect real-world applications.

\subsection{Optimal designs with a cost constraint}
%As mentioned in Section 1, the limiting budget results in a cost constraint when conducting a degradation test. 
The cost constraint considered in this study is
\begin{equation}\label{eq:total_cost}
TC(n,m,\tau) = C_{it}n + C_{mea}nm + C_{op}m\tau \leq C_b,
\end{equation}
where $C_{it}$ is the unit cost of a test unit, $C_{mea}$ is the unit cost per inspection, $C_{op}$ is the operational cost per unit time, and $C_b$ is the total budget. Note that the optimal design always attains equality in this constraint, as the objective function favors a larger $n$. This cost constraint has been used in several literature \citep{tseng2009optimal,lee2020global, peng2021profile, cheng2024optimal}. Without loss of generality, we can simplify the problem by assuming $C_b=1$.

Finally, the optimization problem is 
\begin{equation*}
    \label{eq:criterionei}
    \begin{aligned}
    \min\limits_{\zeta=(n,m,\tau)} \quad &\phi(\zeta),\\
        \text{s.t.}\quad&TC(n,m,\tau)= 1,\\
        &\tau \geq \Delta t,\\
        &n,m\geq 1.
    \end{aligned}
\end{equation*}
Note that, the cost constraint will naturally restrict $\tau\leq(1-C_{it}-C_{mea})/C_{op}$. To simplify the notation,
% Similar to Theorem \ref{theorem: boundary search}, we can show that optimization occurs only when the cost constraint equality is hold.
% \begin{theorem}
%     $\phi(\zeta)$ attains its optimal value when $C_{it} n+C_{mea} m n+ C_{op} m\tau = 1$
% \label{theorem:EI boundary}
% \end{theorem}
% \noindent The proof is given in Appendix \ref{appendix:EI boundary}.
% Let $\tau_n(m)=(1-C_{it}n-C_{mea}nm)/(C_{op}m)$. Following the same procedure in section 3.3, we provide an algorithm to determine the integer optimal type-II design.
% \begin{algorithm}[H]
% \caption{Optimal type-II design}
% \label{alg:Boundary Search ei}
% \KwIn{$C_{it}, C_{mea}, C_{op}$}
% $n_{u} \gets \lfloor (1-C_{op} \Delta t) / (C_{it}+C_{mea}) \rfloor$\;
% \For{$n$ from 1 to $n_{u}$}{
%     $m_u \gets  \lfloor(1-C_{it} n)/ (C_{op}\Delta t +C_{mae} n)\rfloor$\;
%     $m_n=\arg\min_{m\in[1,m_u]}\phi(n,m,\tau_n(m))$ \;
%     $m_n^*=\arg\min_{m\in\{\lfloor m_n\rfloor,\lfloor m_n\rfloor+1\}}\phi(n,m,\tau_n(m))$\;
%     \eIf{$n=1$}{
%         $(n^*,m^*,\tau^*)\gets (n,m^*_{n},\tau_{n}(m_n^*))$ \;
%     }{
%         \If{$\phi(n,m_n^*,\tau_n(m_n^*))<\phi(n-1,m_{n-1}^*,\tau_{n-1}(m_{n-1}^*))$}{
%             $(n^*,m^*,\tau^*)\gets (n,m^*_{n},\tau_{n}(m_n^*))$ \;
%         }
%     }
% }
% \end{algorithm}
% Note that a general approach for approximate optimal type-II design is discussed in \cite{peng2021profile}. However, their discussion lacks detailed analysis of gamma degradation tests. To address this, we present the following theorem to provide further insights into this issue.
let
\begin{align*}
\varphi_D(\tau)&=\frac{1}{2}\frac{d }{d \tau}\log(\alpha \tau^3 \psi_1(\alpha \tau) - \tau^2),\\
\varphi_A(\tau)&=-\frac{d }{d \tau}\log\left[\frac{1}{\tau^2 \psi_1(\alpha \tau) - \frac{\tau}{\alpha}} + \frac{1}{\alpha \tau} \right],\\
\varphi_V(\tau)&=-\frac{d }{d \tau}\log\left[\frac{h_1^2}{\tau^2 \psi_1(\alpha \tau) - \frac{\tau}{\alpha}} + \frac{h_2^2}{\alpha \tau} \right],
\end{align*}
and
\begin{equation*}
K_1(\tau)=\left(\frac{C_{mea}}{C_{op}}+\tau\right)^{-1}, \quad
K_2(\tau) = \left(\frac{1}{C_{op}}-\tau\right)^{-1}, \quad
K_3(\tau) = \left(\tau\sqrt{1+\frac{C_{mea}}{C_{op}C_{it}\tau}}\right)^{-1}.
\end{equation*}
The following theorem provide the sufficient conditions for optimal type-I designs under the cost constraint.

\begin{theorem}
Given $C_{it}$, $C_{mea}$, $C_{op}$, $\Delta t$, $\eta$, $p$, $\alpha$ and $\gamma$, the optimal type-I design can be divided into eight cases:
\begin{enumerate}[leftmargin=*]
\item Let $\tau_1$ be the unique solution of 
\begin{equation}\label{eq:opEIcase2}
    \varphi(\tau)=K_1(\tau)
\end{equation}
If either $1 - 2C_{it} - C_{mea} > 0 \text{ and } \frac{C_{it}C_{mea}}{C_{op}(1 - 2C_{it})} > \tau_1>\Delta t
\text{ or }
1 - 2C_{it} - C_{mea} \leq 0 \text{ and } \frac{1 - C_{it} - C_{mea}}{C_{op}} > \tau_1>\Delta t,$
% $\begin{cases}
% 1 - 2C_{it} - C_{mea} \geq 0 \text{ and } \frac{C_{it}C_{mea}}{C_{op}(1 - 2C_{it})} > \tau_s>\Delta t, \\
% \text{or}\\
% 1 - 2C_{it} - C_{mea} < 0 \text{ and } \frac{1 - C_{it} - C_{mea}}{C_{op}} > \tau_s>\Delta t,
% \end{cases}$
then the optimal type-I design is $n^* = 1$, $m^* = \frac{1 - C_{it}}{C_{mea} + C_{op}\tau_1}$, and $\tau^*=\tau_1$.

\item Let $\tau_2$ be the unique solution of 
\begin{equation*}
    \varphi(\tau)=K_2(\tau).
\end{equation*}
If $1 - 2C_{it} - C_{mea} > 0$ and $\frac{1 - C_{it} - C_{mea}}{C_{op}} > \tau_2 > \max\left(\frac{C_{it}}{C_{op}(C_{mea} + 2C_{it})},\Delta t\right)$, then the optimal type-I design is $n^* = \frac{1 - C_{op}\tau_2}{C_{it} + C_{mea}}$, $m^* = 1$, and $\tau^*=\tau_2$.

\item Let
\begin{equation*}
     n(\tau)=\frac{-C_{it}+\sqrt{C_{it}^2+\frac{C_{mea}C_{it}}{C_{op}\tau}}}{\frac{C_{mea}C_{it}}{C_{op}\tau}},
     m(\tau)=\frac{-C_{it}+\sqrt{C_{it}^2+\frac{C_{mea}C_{it}}{C_{op}\tau}}}{C_{mea}},
\end{equation*}
and $\tau_3$ be the unique solution of
\begin{equation*}
    \varphi(\tau)=K_3(\tau).%\frac{1}{n(\tau)C_{mea}/C_{op}+\tau}=
\end{equation*}
If $1 - 2C_{it} -C_{mea}> 0$ and $ \frac{C_{it}}{C_{op}(C_{mea} + 2C_{it})} > \tau_3 >max\left(\frac{C_{it}C_{mea}}{C_{op}(1 - 2C_{it})},\Delta t\right)$, then the optimal type-I design is $n^*=n(\tau_3)$, $m^*=m(\tau_3)$ and $\tau^*=\tau_3$.

\item If $\varphi\left(\frac{1 - C_{mea} - C_{it}}{C_{op}}\right) > \max \left( \frac{C_{op}}{C_{it} + C_{mea}}, \frac{C_{op}}{1 - C_{it}} \right)$, then the optimal type-I design is $n^* = m^* = 1$, and $\tau^* = \frac{1 - C_{mea} - C_{it}}{C_{op}}$.

\item If $\varphi(\Delta t)<K_1(\Delta t)$
and either $1 - 2C_{it} - C_{mea} > 0 \text{ and } \frac{C_{it}C_{mea}}{C_{op}(1 - 2C_{it})} >\Delta t
\text{ or }
1 - 2C_{it} - C_{mea} \leq 0,$
then the optimal type-I design is $n^* = 1$, $m^* = \frac{1 - C_{it}}{C_{mea} + C_{op}\Delta t}$, and $\tau^*=\Delta t$.

\item If $\varphi(\Delta t)<K_2(\Delta t),$
$1 - 2C_{it} - C_{mea} > 0$ and $\Delta t>\frac{C_{it}}{C_{op}(C_{mea} + 2C_{it})}$, then the optimal type-I design is $n^* = \frac{1 - C_{op}\Delta t}{C_{it} + C_{mea}}$, $m^* = 1$, and $\tau^*=\Delta t$.

\item If $\varphi(\Delta t)<K_3(\Delta t),$
$1 - 2C_{it} -C_{mea}> 0$ and $ \frac{C_{it}}{C_{op}(C_{mea} + 2C_{it})} > \Delta t >\frac{C_{it}C_{mea}}{C_{op}(1 - 2C_{it})}$, then the optimal type-I design is $n^*=n(\Delta t)$, $m^*=m(\Delta t)$ and $\tau^*=\Delta t$.

\item If $C_{it}+C_{mea}+C_{op}\Delta t=1$, then the optimal type-I design is $n^* = m^* = 1$, and $\tau^* = \Delta t$.
\end{enumerate}
\label{thm:EI exact optimal design}
\end{theorem}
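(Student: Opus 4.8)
The plan is to treat the problem as a smooth constrained minimization and carry out a complete Karush–Kuhn–Tucker (KKT) case analysis. First I would relax the integrality of $n$ and $m$, regarding them as positive reals, and record that the cost constraint binds with equality (increasing $n$ strictly lowers every $\phi$, as noted after (\ref{eq:total_cost})). The unifying device is the logarithm: writing $\log\phi(\zeta)=\log g(\tau)-c\log(nm)$, where $g$ is the $\tau$-dependent factor of the criterion and $c=2$ for $D$ and $c=1$ for $A,V$, one checks that $\partial_n\log\phi=-c/n$, $\partial_m\log\phi=-c/m$ and $\partial_\tau\log\phi=-c\,\varphi(\tau)$, the constant $c$ being precisely the reason for the factor $\tfrac12$ in the definition of $\varphi_D$. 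Hence $c$ cancels from every stationarity equation and the three criteria can be handled simultaneously through $\varphi$.

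Next I would enumerate the active sets of the three inequality constraints $\{n\ge 1,\ m\ge 1,\ \tau\ge\Delta t\}$. Since $\tau$ is either interior ($\tau>\Delta t$) or pinned at $\Delta t$, and among $\{n,m\}$ either both are interior, exactly one equals $1$, or both equal $1$, this yields the $2\times 4=8$ cases of the statement. For each case I would write the reduced stationarity system together with the binding cost equation and solve it in closed form. For the case interior in $(n,m)$, dividing the stationarity equations in $n$ and $m$ collapses to the clean relation
\[
C_{it}\,n=C_{op}\,m\tau,
\]
substituting this into the cost equation gives a quadratic whose positive root is $m(\tau_3)$ (and then $n(\tau_3)$), and the $\tau$-stationarity equation reduces, after using this relation, exactly to $\varphi(\tau)=K_3(\tau)$. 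The same elimination in the cases $n=1$ and $m=1$ produces $\varphi(\tau)=K_1(\tau)$ with $m^\ast=(1-C_{it})/(C_{mea}+C_{op}\tau)$ and $\varphi(\tau)=K_2(\tau)$ with $n^\ast=(1-C_{op}\tau)/(C_{it}+C_{mea})$, respectively, while the corner $n=m=1$ leaves $\tau$ determined by the cost alone as $(1-C_{it}-C_{mea})/C_{op}$.

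It then remains to translate the KKT feasibility requirements into the stated inequalities. Primal feasibility of each candidate ($n\ge1$, $m\ge1$, $\tau\ge\Delta t$) yields the explicit $\tau$-bounds: imposing $m(\tau)\ge1$ and $n(\tau)\ge1$ on the interior solution gives exactly $\tau\le C_{it}/[C_{op}(C_{mea}+2C_{it})]$ and (when $1-2C_{it}>0$) $\tau\ge C_{it}C_{mea}/[C_{op}(1-2C_{it})]$, which are the window in Case 3; the complementary conditions, marking where the interior $n$ (resp. $m$) would fall below $1$, pin the candidate to the boundary $n=1$ (resp. $m=1$) and give the windows of Cases 1 and 2, with the sign of $1-2C_{it}-C_{mea}$ deciding which bound is active. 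Dual feasibility (nonnegativity of the multipliers of the active $n=1$, $m=1$ constraints) converts into comparisons of $\varphi$ with the $K_i$: at the corner $\tau_4=(1-C_{it}-C_{mea})/C_{op}$ one has $K_1(\tau_4)=C_{op}/(1-C_{it})$ and $K_2(\tau_4)=C_{op}/(C_{it}+C_{mea})$, so the requirement that neither relaxing $n$ nor $m$ helps is exactly $\varphi(\tau_4)>\max\{K_1(\tau_4),K_2(\tau_4)\}$, i.e. Case 4. For the boundary-$\tau$ Cases 5–8 the interior $\tau$-equality is replaced by the statement that the marginal gain from enlarging $\tau$ is insufficient, i.e. $\varphi(\Delta t)<K_i(\Delta t)$, together with the same $(n,m)$-feasibility bounds evaluated at $\tau=\Delta t$.

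The delicate part, and the main obstacle, is to guarantee that each equation $\varphi(\tau)=K_i(\tau)$ has a \emph{unique} root and that the resulting KKT point is the \emph{global} minimizer of the relaxed program. The functions $K_1,K_3$ are decreasing and $K_2$ is increasing in $\tau$ by inspection, so uniqueness follows once $\varphi$ is shown to be suitably monotone; but $\varphi$ is built from the trigamma function $\psi_1(\alpha\tau)$, so this reduces to monotonicity and sign properties of the polygamma terms. Here I would lean on the polygamma estimates used for Theorems~\ref{thm:D-optnocost} and~\ref{thm:V-optnocost} (in particular Lemma~\ref{lemma:tung2025}) to control $\varphi$, and combine this with the monotonicity of the $K_i$ so that $\varphi-K_i$ is strictly monotone and crosses zero at most once, existence coming from the endpoint values of the difference over the admissible $\tau$-window. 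The final checks are that the eight threshold conditions are mutually exclusive and exhaustive over the parameter region, so that the classification is well defined, and that after eliminating one variable through the cost equation the relaxed objective is unimodal in the remaining free variable, which upgrades the stationary point to a global optimum and justifies rounding $n^\ast,m^\ast$ back to integers.
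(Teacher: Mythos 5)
Your proposal follows essentially the same route as the paper's proof: a KKT analysis of the cost-constrained program in which the eight cases arise as the active-set configurations of $\{n=1,\,m=1,\,\tau=\Delta t\}$, the same eliminations (including the relation $C_{it}n=C_{op}m\tau$ obtained by dividing the $n$- and $m$-stationarity equations in the interior case) yield $\varphi(\tau)=K_i(\tau)$ with the stated closed forms for $n(\tau),m(\tau)$, and primal/dual feasibility of the multipliers translates into exactly the stated $\tau$-windows and $\varphi$-versus-$K_i$ comparisons. The only caveat is the part you yourself flag as the main obstacle: the paper does not prove uniqueness of the root of $\varphi(\tau)=K(\tau)$ either --- its remark following the theorem explicitly concedes that this resists analytical proof and is only observed numerically --- so your plan to obtain it from monotonicity of $\varphi$ via the polygamma lemmas is an unproven (and, per the authors, difficult) extra step rather than something required to match the paper's argument.
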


The proof is given in the supplementary material. Theorem \ref{thm:EI exact optimal design} provides a universal rule for all three criteria by simply replacing $\varphi$ with the corresponding $\varphi_D$, $\varphi_A$, and $\varphi_V$. Although Theorem \ref{thm:EI exact optimal design} looks complicated, it can be divided into three scenarios: $n=m=1$, $1-2C_{it}-C_{mea}\leq 0$, and $1-2C_{it}-C_{mea}> 0$.

\begin{itemize}[leftmargin=*]
\item For $n=m=1$, corresponding to cases 4 and 8, we check whether $\varphi\left(\frac{1-2C_{it}-C_{mea}}{C_{op}}\right)>\max\left(\frac{C_{op}}{C_{it}+C_{mea}},\frac{C_{op}}{1-C_{it}}\right)$ or $C_{it}+C_{mea}+C_{op}\Delta t=1$. Since both cases yield only one observation, they are not applicable in real applications. We recommend adjusting the budget to allow for more observations.     

\item If the optimal design does not satisfy the scenario $n=m=1$, we then check the case $1-2C_{it}-C_{mea}\leq0$. Clearly, under this condition, the optimal design always results in $n=1$, corresponding to cases 1 and 5. If a solution exists for equation (\ref{eq:opEIcase2}), it belongs to case 1; otherwise, it falls under case 5.

\item Finally, if neither $n=m=1$ nor $1-2C_{it}-C_{mea}\leq 0$ holds, we move on to the scenario $1-2C_{it}-C_{mea} > 0$. In this scenario, the optimal design is determined by solving the equations
\begin{equation*}
    \varphi(\tau)=K(\tau),
\end{equation*}
where
\begin{equation*}
K(\tau)=\left\lbrace\begin{matrix}
    K_1(\tau),& 0<\tau\leq\frac{C_{it}C_{mea}}{C_{op}(1-2C_{it})},\\
    K_3(\tau),& \frac{C_{it}C_{mea}}{C_{op}(1-2C_{it})}<\tau \leq \frac{C_{it}}{C_{op}(C_{mea}+2C_{it})},\\
    K_2(\tau),& \frac{C_{it}}{C_{op}(C_{mea}+2C_{it})}<\tau \leq \frac{1-C_{it}-C_{mea}}{C_{op}}.
\end{matrix}\right.
\end{equation*}
It is easy to show that $K(\tau)$ is a continuous function decreasing from $\left[0,\frac{C_{it}}{C_{op}(C_{mea}+2C_{it})}\right]$ and increasing from $\left(\frac{C_{it}}{C_{op}(C_{mea}+2C_{it})},\frac{1-C_{it}-C_{mea}}{C_{op}}\right]$ .
If the solutions 
$\tau^*\in\left[\Delta t,\frac{1-C_{it}-C_{mea}}{C_{op}}\right]$, then the optimal design is determined by comparing with the three cost indices:
\begin{equation*}
\frac{C_{it}C_{mea}}{C_{op}(1-2C_{it})} < \frac{C_{it}}{C_{op}(C_{mea}+2C_{it})} < \frac{1-C_{it}-C_{mea}}{C_{op}},
\end{equation*}
where the order is provided by $1-2C_{it}-C_{mea}>0$. More specifically, 
\begin{align*}
\tau^*&<\frac{C_{it}C_{mea}}{C_{op}(1-2C_{it})}\Rightarrow \text{ case 1},\\
\frac{C_{it}C_{mea}}{C_{op}(1-2C_{it})}<\tau^*& < \frac{C_{it}}{C_{op}(C_{mea}+2C_{it})}\Rightarrow \text{ case 3},\\
\frac{C_{it}}{C_{op}(C_{mea}+2C_{it})}<\tau^*& < \frac{1-C_{it}-C_{mea}}{C_{op}}\Rightarrow \text{ case 2}.
\end{align*}
If $\tau^*\notin\left[\Delta t,\frac{1-C_{it}-C_{mea}}{C_{op}}\right]$ then 
\begin{align*}
\Delta t&<\frac{C_{it}C_{mea}}{C_{op}(1-2C_{it})}\Rightarrow \text{ case 5},\\
\frac{C_{it}C_{mea}}{C_{op}(1-2C_{it})}<\Delta t& < \frac{C_{it}}{C_{op}(C_{mea}+2C_{it})}\Rightarrow \text{ case 7},\\
\frac{C_{it}}{C_{op}(C_{mea}+2C_{it})}<\Delta t& < \frac{1-C_{it}-C_{mea}}{C_{op}}\Rightarrow \text{ case 6}.
\end{align*}
Fig. \ref{fig:optcasedetermine} illustrates how $\varphi(\tau)$, $K(\tau)$, and $\Delta t$ determine the case of optimal design. The intersections of $\varphi(\tau)$ with $K(\tau)$ and the position of $\tau^*$ relative to $\Delta t$ categorize the cases.
\begin{remark}
    According to the properties of $K(\tau)$, it is difficult to analytically prove that $\varphi(\tau) = K(\tau)$ has a unique root. However, in several applications, the existence of a unique root is readily observed, as demonstrated in the illustration section.
\end{remark}
\begin{figure}[ht!]
    \centering
    \includegraphics[width=0.8\textwidth]{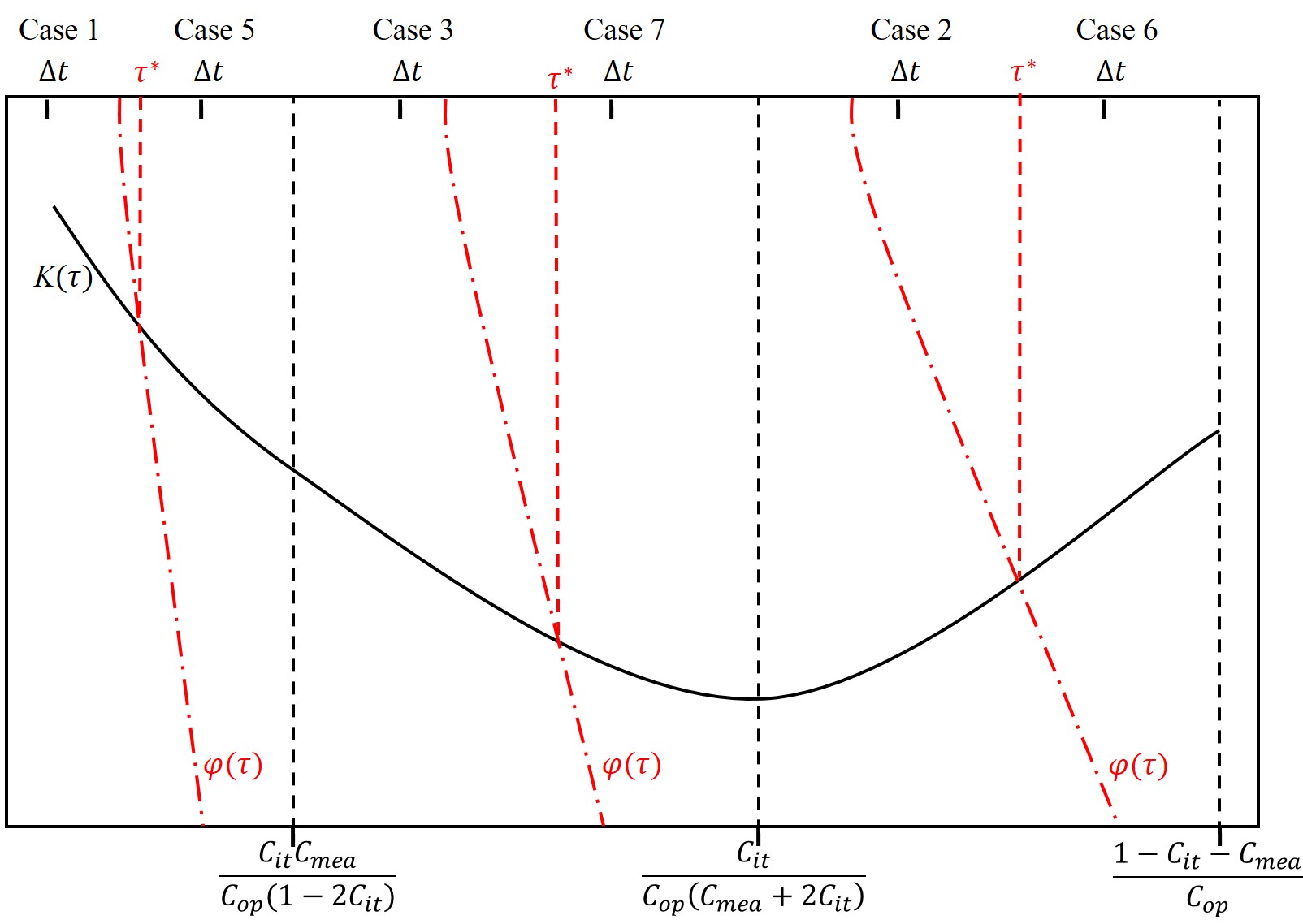}
    \caption{ Illustration of how $\varphi(\tau)$, $K(\tau)$, and $\Delta t$ determine the optimal design case.}
    \label{fig:optcasedetermine}
\end{figure}
% \begin{figure}
%     \centering
%     \includegraphics[width=0.8\textwidth]{opt_Icase576.pdf}
%     \caption{Caption}
%     \label{fig:enter-label}
% \end{figure}
\end{itemize}

% To understand how the parameter and cost affect the optimal design. Figures \ref{fig:varphi_D_A} and \ref{fig:varphi_V} plot the curves of $\varphi(\tau)$ under different parameter settings. Note that $\varphi_D$ and $\varphi_A$ are only related to $\alpha$. Figure \ref{fig:K(tau)} plot the curves of $K_1(\tau),K_2(\tau),K_3(\tau)$ which only relate to cost.

% Theorem \ref{thm:EI exact optimal design} shows that the cost index
% \begin{equation*}
%     \frac{1-C_{it}-C_{mea}}{C_{op}}>\frac{C_{it}}{C_{op}(C_{mea}+2C_{it})}>\frac{C_{it}C_{mea}}{C_{op}(1-2C_{it})},
% \end{equation*}
% is crucial in determining the optimal design. Given a specific cost, the approximate optimal design can be identified based on which cost index interval the value of $\tau^*$ falls into.

\section{Optimal type-II designs}
In this section, we additionally determine a decision variable: the inspection interval $\t=(\Delta t_1,\ldots,\Delta t_m)$, which is rarely discussed in the literature. Under this scenario, the optimal design problem is
\begin{equation*}
    \label{eq:criterionoi}
    \begin{aligned}
    \min\limits_{\zeta= (n, m, T,\t)} \quad &\phi(\zeta),\\
        \text{s.t.}\quad&TC(n,m,T)= 1,\\
        &\Delta t_j \geq \Delta t,\\
        &n,m\geq 1.
    \end{aligned}
\end{equation*}
where $\phi_D$, $\phi_A$ and $\phi_V$ are redefined as
\begin{equation*}
\phi_D(\zeta) = \frac{1}{n^2T\left\lbrace\alpha\sum\limits_{j=1}^m\left[\Delta t_j^2\psi_1(\alpha\Delta t_j)\right]-T\right\rbrace},
\end{equation*}
\begin{equation*}
\phi_A(\zeta) = \frac{1}{n} \left\lbrace\frac{1}{\sum\limits_{j=1}^m\left[\Delta t_j^2\psi_1(\alpha\Delta t_j)\right]-\frac{T}{\alpha}}+\frac{1}{\alpha T}\right\rbrace.
\end{equation*}
\begin{equation*}
\phi_V(\zeta)
= \frac{1}{n} \left\lbrace\frac{h_1^2}{\sum\limits_{j=1}^m\left[\Delta t_j^2\psi_1(\alpha\Delta t_j)\right]-\frac{T}{\alpha}}+\frac{h_2^2}{\alpha T} \right\rbrace.
\end{equation*}

\subsection{Optimal aperiodic inspection time}
When $n$, $m$ and $T$ are given, the optimization problem of three criteria can be simplified as:
\begin{equation}
\begin{aligned}
    &\max_{\t}~g_1(\t)=\sum_{j=1}^{m} \bigg[ (\alpha \Delta t_{j})^2 \psi_1(\alpha \Delta t_{j}) \bigg]\\
    &\text{s.t.}~\sum_{j=1}^{m} \Delta t_{j}=T,\Delta t_j\geq \Delta t.
\end{aligned}
\label{eq:optimization_problem_part1}
\end{equation}
Problem (\ref{eq:optimization_problem_part1}) is a standard majorization optimization problem. By applying majorization theory \citep{marshall2011inequalities}, we have the following theorem.
\begin{theorem}
    In optimization problem (\ref{eq:optimization_problem_part1}), the maximum occurs at $(T-(m-1)\Delta t,\Delta t, \dots, \Delta t)$ for any ordering and the minimum occurs at $(T/m,\ldots,T/m)$.
    \label{theorem:max_at_tj}
\end{theorem}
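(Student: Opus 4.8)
The plan is to treat optimization problem (\ref{eq:optimization_problem_part1}) exactly as the phrase ``standard majorization optimization problem'' suggests. The objective $g_1(\t)=\sum_{j=1}^m h(\Delta t_j)$ is a symmetric, separable sum with summand $h(u)=(\alpha u)^2\psi_1(\alpha u)$, and the feasible set $\{\Delta t_j\ge\Delta t,\ \sum_j\Delta t_j=T\}$ (nonempty precisely when $T\ge m\Delta t$) is a polytope invariant under permutation of coordinates. By the classical criterion cited from \citet{marshall2011inequalities}, a symmetric function of the form $\sum_j h(x_j)$ is Schur-convex on a symmetric convex domain if and only if $h$ is convex there. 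Once Schur-convexity of $g_1$ is in hand, the maximum and minimum follow by locating the majorization extremes of the feasible polytope.

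First I would reduce the convexity of $h$ to that of $\phi(x)=x^2\psi_1(x)$ on $(0,\infty)$: since $u\mapsto\alpha u$ is affine with $\alpha>0$, $h(u)=\phi(\alpha u)$ is convex iff $\phi$ is. To prove $\phi''\ge 0$ I would reuse the integral-representation technique already employed in the proof of Theorem \ref{thm:D-optnocost}. Writing $\phi''(x)=2\psi_1(x)+4x\psi_2(x)+x^2\psi_3(x)$ and inserting $\psi_r(x)=(-1)^{r+1}\int_0^\infty t^r e^{-xt}g(t)\,dt$ with $g(t)=(1-e^{-t})^{-1}$, I would integrate by parts to absorb the factors of $x$ into the integrand; the boundary terms vanish because $t^r g(t)\to 0$ as $t\to 0^+$ and $e^{-xt}$ decays at infinity. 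This recasts $\phi''(x)=\int_0^\infty e^{-xt}G(t)\,dt$ for an explicit weight $G$, and a short computation collapses the three polygamma contributions to $G(t)=t^2\big(2g'(t)+t g''(t)\big)$.

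The remaining scalar inequality is the crux. Substituting $g'(t)=-u^{-2}e^{-t}$ and $g''(t)=2u^{-3}e^{-2t}+u^{-2}e^{-t}$ with $u=1-e^{-t}$ and factoring out the positive quantity $u^{-3}e^{-t}$, the sign of $G$ is governed by $H(t)=(t-2)+(t+2)e^{-t}$. I would show $H(t)\ge 0$ on $(0,\infty)$ by the elementary chain $H(0)=0$, $H'(0)=0$, and $H''(t)=te^{-t}>0$, which forces $H'>0$ and then $H>0$ for $t>0$. Hence $G\ge 0$, so $\phi''\ge 0$, $h$ is convex, and $g_1$ is Schur-convex.

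Finally I would identify the two extremes of the majorization order on the feasible set. The uniform vector $(T/m,\dots,T/m)$ is majorized by every feasible $\t$ (the average of the largest $k$ coordinates is at least the global average $T/m$), so the Schur-convex $g_1$ attains its minimum there. Conversely, sorting any feasible $\t$ in decreasing order and using $\sum_{i>k}(\Delta t_{[i]}-\Delta t)\ge 0$ shows that $(T-(m-1)\Delta t,\Delta t,\dots,\Delta t)$ majorizes every feasible $\t$; the symmetry of $g_1$ makes the ordering irrelevant, so the maximum is attained there. The main obstacle is the convexity of $x^2\psi_1(x)$ in the second and third steps: combining the three polygamma terms and reducing them to a single nonnegative weight is where the work concentrates, whereas the majorization bookkeeping in the last step is routine.
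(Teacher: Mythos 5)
Your proof is correct, and its skeleton is exactly the paper's: the paper disposes of Theorem \ref{theorem:max_at_tj} in two lines by citing Theorem 4.14 of \cite{alzer2004sharp} for the strict convexity of $x^2\psi_1(x)$, deducing Schur-convexity of $g_1$, and declaring that the result follows. The only place you genuinely diverge is that you prove the convexity yourself instead of citing it, and your computation is sound: $\phi''(x)=2\psi_1(x)+4x\psi_2(x)+x^2\psi_3(x)$, the integration by parts (with the boundary terms vanishing since $t^2g(t)\sim t$ and $(t^3g)'(t)\sim 2t$ as $t\to 0^+$) does collapse the three polygamma terms to the weight $G(t)=t^2\bigl(2g'(t)+tg''(t)\bigr)$, and factoring out $u^{-3}e^{-t}$ with $u=1-e^{-t}$ leaves precisely $H(t)=(t-2)+(t+2)e^{-t}$, which your chain $H(0)=H'(0)=0$, $H''(t)=te^{-t}>0$ handles; since $H>0$ for $t>0$ you even recover \emph{strict} convexity, matching Alzer. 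This is the same Laplace-transform technique the paper itself uses to prove Theorem \ref{thm:D-optnocost}, so your route buys self-containedness and stylistic consistency at the cost of a page of computation, while the paper's citation buys brevity. You also spell out the majorization bookkeeping the paper leaves implicit — that $(T/m,\dots,T/m)$ is majorized by, and $(T-(m-1)\Delta t,\Delta t,\dots,\Delta t)$ majorizes (via the partial-sum bound $\sum_{i\le k}\Delta t_{[i]}\le T-(m-k)\Delta t$), every feasible point — which is routine but a worthwhile completion of the paper's ``the result follows.''
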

\noindent The proof is straightforward. According to Theorem 4.14 in \cite{alzer2004sharp}, $x^2\psi_1(x)$ is strictly convex, which implies that $g_1$ is Schur-convex. Thus, the result follows.

Theorem \ref{theorem:max_at_tj} shows that the periodic inspection time is actually the least efficient. This finding has important implications for gamma degradation tests, where inspections are often taken at equal inspection intervals for simplicity. Our result suggests that such an approach may lead to suboptimal data collection, potentially reducing the accuracy of degradation modeling.
\begin{remark}
    The ordering of $(T-(m-1)\Delta t, \Delta t, \ldots, \Delta t)$ does not affect the performance of the optimal design. This result follows from the stationary assumption in the gamma process, which ensures that the amount of information obtained at each time point remains constant. Consequently, the order of inspection intervals has no impact on the efficiency of the optimal design.
\end{remark}

Building on this insight, we now examine the optimal design to further determine $n$, $m$ and $T$.
\subsection{Optimal designs with $C_{mea}=0$}
We first consider a simple case where $C_{mea} = 0$, a scenario also commonly encountered in real applications.
% Although the original cost constraint includes the inspection cost, in some situations, this cost are absent.
For example, in a rechargeable lithium-ion battery degradation test \citep{wang2019end}, the quality characteristics (battery capacity) is automatically recorded by the machine, so $C_{mea}$ is absent.

In this scenario, the budget constraints are defined as:
\begin{equation*}
C_{it} n+ C_{op} T = 1,
\end{equation*} 
which does not depend on $m$. Hence, we can first determine $m$ when $n$ and $T$ are given. Under this scenario, the optimization problem for three criteria can be simplified as
\begin{equation}    
\begin{aligned}
    \max_{m} \quad &g_2(m)= (m-1)(\alpha\Delta t)^2 \psi_1(\alpha \Delta t)+[\alpha(T-(m-1) \Delta t)]^2 \psi_1(\alpha(T-(m-1) \Delta t)) \\
    \textrm{s.t.} \quad & 1 \leq m \leq T/\Delta t
\end{aligned}
\label{eq:optimization_problem_part2}
\end{equation} 

\noindent Then, we have the following theorem.
\begin{theorem}
    In optimization problem (\ref{eq:optimization_problem_part2}), the maximum occurs at $m=T/\Delta t$.
    \label{theorem:max_at_m}
\end{theorem}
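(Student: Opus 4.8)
The plan is to regard $m$ as a continuous variable on $[1,T/\Delta t]$ and prove that $g_2$ is strictly increasing, which forces the maximum to the right endpoint $m=T/\Delta t$. Writing $\Phi(x)=x^2\psi_1(x)$ and setting $a=\alpha\Delta t$ and $b=\alpha T$, the objective becomes
\[
g_2(m)=(m-1)\,\Phi(a)+\Phi\big(b-(m-1)a\big),\qquad m-1\in[0,\,b/a-1].
\]
Note that at the endpoint $m=T/\Delta t$ the long interval $b-(m-1)a$ collapses to $a$, so the optimal type-II allocation there is exactly the finest periodic design with interval $\Delta t$; the theorem therefore says that packing in as many inspections as the minimum-interval constraint permits is optimal, even though for each fixed $m$ the periodic allocation was the worst choice.

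The argument rests on two elementary properties of $\Phi$. First, $\Phi(x)>x$ for every $x>0$: since $\psi_1(x)=\sum_{n\ge 0}(x+n)^{-2}$ has a summand that is decreasing in $n$, comparison with $\int_0^\infty (x+n)^{-2}\,dn=1/x$ gives $\psi_1(x)>1/x$, hence $\Phi(x)=x^2\psi_1(x)>x$. Second, $\Phi'(x)<1$ for every $x>0$: Theorem 4.14 of \cite{alzer2004sharp} makes $\Phi$ strictly convex, so $\Phi'$ is strictly increasing, while the expansion $\psi_1(x)=1/x+1/(2x^2)+O(x^{-3})$ gives $\Phi(x)=x+\tfrac12+O(x^{-1})$ and $\Phi'(x)\to 1$ as $x\to\infty$; a strictly increasing function with limit $1$ satisfies $\Phi'(x)<1$ at every finite $x$.

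With these in hand I would differentiate,
\[
g_2'(m)=\Phi(a)-a\,\Phi'\big(b-(m-1)a\big),
\]
and combine the two facts: $\Phi'\big(b-(m-1)a\big)<1$ gives $a\,\Phi'(\cdot)<a$, while $\Phi(a)>a$, so $g_2'(m)>0$ throughout the interval. Thus $g_2$ is strictly increasing and is maximized at $m=T/\Delta t$ (and, for the integer-valued $m$ of a genuine design, at the largest admissible number of inspections). A more structural route is also available: $g_2$ is a linear term in $m$ plus a convex function of an affine argument, hence convex, so its maximum is automatically at an endpoint, and one need only compare $g_2(1)=\Phi(b)$ with $g_2(T/\Delta t)=(b/a)\Phi(a)$.

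I expect the main obstacle to be nailing down the polygamma inequalities rather than the optimization itself: the convexity of $\Phi$ is supplied by \cite{alzer2004sharp}, but verifying $\Phi'(x)<1$ for all $x$ (equivalently, in the structural route, that $x\mapsto x\psi_1(x)=\Phi(x)/x$ is decreasing) is the step that genuinely requires the series or integral representation of $\psi_1$ together with its asymptotic expansion, rather than a one-line estimate.
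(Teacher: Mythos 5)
Your proof is correct, but it reaches the key inequality by a different decomposition than the paper. Writing $x=\alpha\Delta t$ and $y=\alpha(T-(m-1)\Delta t)$, both you and the paper reduce to showing $x\psi_1(x)-2y\psi_1(y)-y^2\psi_2(y)>0$ (your $g_2'(m)=\Phi(x)-x\Phi'(y)$ is exactly this expression times $x$). The paper (Lemma \ref{lemma:alzer20012004}) splits it as $\bigl[x\psi_1(x)-y\psi_1(y)\bigr]+\bigl[-y\psi_1(y)-y^2\psi_2(y)\bigr]$, citing the strict decrease of $x\psi_1(x)$ (Lemma 2.3 of \cite{alzer2004sharp}) for the first bracket, which requires $y>x$, and $\psi_1(y)+y\psi_2(y)<0$ (\cite{alzer2001mean}) for the second. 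You instead compare both pieces to the constant $1$: $x\psi_1(x)>1$ via an elementary series--integral comparison, and $\Phi'(y)=2y\psi_1(y)+y^2\psi_2(y)<1$ via strict convexity of $\Phi$ (the same Theorem 4.14 of \cite{alzer2004sharp} the paper already invokes for Theorem \ref{theorem:max_at_tj}) together with $\Phi'(y)\to1$. Your split buys two things: it never uses $y\geq x$, so $g_2'(m)>0$ holds for any positive interval lengths, a slightly stronger statement than Lemma \ref{lemma:alzer20012004}; and the bound $\Phi'(x)<1$ is in fact the numerator condition $\Omega(x)<0$ already established in the paper's Lemma \ref{lemma:tung2025}, so your route makes Theorem \ref{theorem:max_at_m} a corollary of machinery the paper has elsewhere rather than requiring the separate \cite{alzer2001mean} inequality. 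One small point to tighten: term-by-term differentiation of an asymptotic expansion is not automatic, so to justify $\Phi'(x)\to1$ you should either invoke the standard asymptotic series for $\psi_2$ directly, giving $\Phi'(x)=2x\psi_1(x)+x^2\psi_2(x)=1+\mathcal{O}(x^{-2})$, or note that since $\Phi'$ is increasing, the mean value theorem applied to $\Phi(x+1)-\Phi(x)=1+o(1)$ pins the limit at $1$; either fix is one line. Your alternative structural remark (convexity of $g_2$ in $m$ forces an endpoint, then compare $g_2(1)=\Phi(b)$ with $(b/a)\Phi(a)$ using the decrease of $x\psi_1(x)$) is also valid and would be the shortest complete route, though as written it is only sketched.
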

\begin{proof}
    Let $x=\alpha \Delta t$ and $y=\alpha(T-(m-1)\Delta t)$,
    \begin{equation*}
        \frac{d g_2(m)}{d m}= x \bigg[x \psi_1(x)-2 y \psi_1(y)-y^2 \psi_2(y)\bigg]>0,
    \end{equation*} 
    provided by Lemma \ref{lemma:alzer20012004}.
    Hence, $g_2$ is an increasing function, and its maximum occurs at $m=T/\Delta t$.
\end{proof}
Theorem \ref{theorem:max_at_m} demonstrates that with no inspection cost, the optimal strategy is to maximize number of inspections $(m=T/\Delta t)$, leading to evenly spaced intervals of $\Delta t$ over the total testing time $T$. This reduces a type-II design to a type-I design, highlighting that frequent inspections maximize information collection when inspection cost is absent.

Next, given $m=T/\Delta t$, we then further determine $n$ and $T$. The optimization problem for the three criteria can be simplified as follows:
\begin{equation} 
\begin{aligned}
     \max \quad  &nT \\   
     \textrm{s.t.} \quad & C_{it} n+ C_{op} T = 1.
     \label{eq:optimization_problem_part3}
\end{aligned}
\end{equation}
Direct computation yields $n^*=\frac{1}{2C_{it}}$ and $T^*=\frac{1}{2C_{op}}$. Hence, when there is no inspection cost, the optimal type-II design will be $(n^*,m^*,T^*)=\left(\frac{1}{2C_{it}},\frac{1}{2C_{op}\Delta t},\frac{1}{2C_{op}}\right)$.

\subsection{Optimal designs with $C_{mea}\neq0$}
Under the optimal inspection time, the optimization problem is
\begin{equation}\label{eq:criteriononlytj}
    \begin{aligned}
     \min\limits_{\zeta=(n,m,T)} \quad &\phi(\zeta),\\
        \text{s.t.} \quad &C_{it}n+C_{mea}nm+C_{op}T = 1,\\
        &T\geq m\Delta t,\\
        &n,m\geq 1,
    \end{aligned}
\end{equation}
% \begin{equation*}
%     \phi_D(\zeta)
%     =\frac{1}{n^2}\left\lbrace\frac{1}{\alpha T \lbrace(m-1)\Delta t^2\psi_1(\alpha\Delta t)+(T-(m-1)\Delta t)^2\psi_1[\alpha(T-(m-1)\Delta t)]\rbrace-T^2}\right\rbrace,
% \end{equation*} 
% \begin{equation*} 
%     \phi_A(\zeta)
%     =\frac{1}{n} \bigg\lbrace\frac{1}{(m-1)\Delta t^2\psi_1(\alpha\Delta t)+(T-(m-1)\Delta t)^2\psi_1[\alpha(T-(m-1)\Delta t)]-T/\alpha}+\frac{1}{\alpha T}\bigg\rbrace,
% \end{equation*}
% \begin{equation*}
%      \phi_V(\zeta)
%     =\frac{1}{n} \bigg\lbrace\frac{h_1^2}{(m-1)\Delta t^2\psi_1(\alpha\Delta t)+(T-(m-1)\Delta t)^2\psi_1[\alpha(T-(m-1)\Delta t)]-T/\alpha}+\frac{h_2^2}{\alpha T}\bigg\rbrace,
% \end{equation*}
where
\begin{align*}
    &\phi_D(\zeta)=(n^2\varrho_D(m,T))^{-1},\\
    &\phi_{A,V}(\zeta)=(n\varrho_{A,V}(m,T))^{-1},\\
    &\varrho_D(m,T)=\alpha T \lbrace(m-1)\Delta t^2\psi_1(\alpha\Delta t)+(T-(m-1)\Delta t)^2\psi_1[\alpha(T-(m-1)\Delta t)]\rbrace-T^2,\\
    &\varrho_A(m,T)=\bigg\lbrace\frac{1}{(m-1)\Delta t^2\psi_1(\alpha\Delta t)+(T-(m-1)\Delta t)^2\psi_1[\alpha(T-(m-1)\Delta t)]-T/\alpha}+\frac{1}{\alpha T}\bigg\rbrace^{-1},\\
    &\varrho_V(m,T)=\bigg\lbrace\frac{h_1^2}{(m-1)\Delta t^2\psi_1(\alpha\Delta t)+(T-(m-1)\Delta t)^2\psi_1[\alpha(T-(m-1)\Delta t)]-T/\alpha}+\frac{h_2^2}{\alpha T}\bigg\rbrace^{-1}.
\end{align*}
Let
\begin{equation*}
    \varphi_m(m,T)=\left\lbrace\begin{matrix}
        \frac{1}{2}\frac{\partial}{\partial m}\log\varrho_D(m,T)\\
        \frac{\partial}{\partial m}\log\varrho_{A,V}(m,T)
    \end{matrix}\right.,
    \varphi_T(m,T)=\left\lbrace\begin{matrix}
        \frac{1}{2}\frac{\partial}{\partial T}\log\varrho_D(m,T)\\
        \frac{\partial}{\partial T}\log\varrho_{A,V}(m,T)
    \end{matrix}\right..
\end{equation*}
Then, The following theorem provide the sufficient conditions for optimal type-II designs.
\begin{theorem}
Given $C_{it}$, $C_{mea}$, $C_{op}$, $\Delta t$, $\eta$, $p$, $\alpha$ and $\gamma$, the optimal type-II design can be divided into eight cases:
\begin{enumerate}[leftmargin=*]
\item Let
\begin{equation*}
    T(m)=\frac{1-C_{mea}m-C_{it}}{C_{op}},
\end{equation*}
and $m_1$ be the solution of
\begin{equation*}
    \frac{\varphi_m(m,T(m))}{\varphi_T(m,T(m))}=\frac{C_{mea}}{C_{op}}.
\end{equation*}
If
\begin{equation*}
    \frac{C_{op}}{C_{it}+C_{mea}m_1}<\varphi_T(m_1,T(m_1)),
\end{equation*}
then the optimal type-II design is $n^*=1$, $m^*=m_1$ and $T^*=T(m_1)$.
\item Let
\begin{equation*}
    T(n)=\frac{1-n(C_{mea}+C_{it})}{C_{op}},
\end{equation*}
and $n_2$ be the solution of
\begin{equation*}
    n\varphi_T(1,T(n))=\frac{C_{op}}{C_{it}+C_{mea}}.
\end{equation*}
If
\begin{equation*}
    \varphi_m(1,T(n_2))<\frac{C_{mea}}{C_{it}+C_{mea}},
\end{equation*}
then the optimal type-II design is $n^*=n_2$, $m^*=1$ and $T^*=T(n_2)$.

\item Let
\begin{equation*}
    T(n,m)=\frac{1-C_{it}n-C_{mea}nm}{C_{op}},
\end{equation*}
and $n_3$ and $m_3$ be the solution of
\begin{equation*}
    n\varphi_T(m,T(n,m))=\frac{C_{op}}{C_{it}+C_{mea}m}, \varphi_m(m,T(n,m))=\frac{C_{mea}}{C_{it}+C_{mea}m}.
\end{equation*}
If $n_3>1$, $m_3>1$ and $T(n_3,m_3)>m_3\Delta t$, then the optimal type-II design is $n^*=n_3$, $m^*=m_3$ and $T^*=T(n_3,m_3)$.

\item Let
\begin{equation*}
    T_4 = \frac{1 - C_{mea} - C_{it}}{C_{op}}.
\end{equation*}
If
\begin{equation*}
    \frac{C_{op}}{C_{it}+C_{mea}}<\varphi_T(1,T_4),
\end{equation*}
\begin{equation*}
    \frac{\varphi_m(1,T_4)}{\varphi_T(1,T_4)}<\frac{C_{mea}}{C_{op}},
\end{equation*}
then the optimal type-II design is $n^* = m^* = 1$, and $T^* = \frac{1 - C_{mea} - C_{it}}{C_{op}}$.

\item Let
\begin{equation*}
    m_5=\frac{1-C_{it}}{C_{mea}+C_{op}\Delta t}.
\end{equation*}
If
\begin{equation*}
    \frac{C_{mea}+\Delta tC_{op}}{C_{it}+C_{mea}m_5}<\varphi_m(m_5,m_5\Delta t)+\Delta t\varphi_T(m_5,m_5\Delta t),
\end{equation*}
then
then the optimal type-II design is $n^* = 1$, $m^* =\frac{1-C_{it}}{C_{mea}+C_{op}\Delta t}$, and $T^* = m^*\Delta t$.
\item Let
\begin{equation*}
    n_6=\frac{1-C_{op}\Delta t}{C_{it}+C_{mea}}.
\end{equation*}
If
\begin{equation*}
    n_6[\varphi_m(1,\Delta t)+\Delta t\varphi_T(1,\Delta t)]<\frac{C_{op}\Delta t+C_{mea}n_6}{C_{it}+C_{mea}},
\end{equation*}
\begin{equation*}
    n_6\varphi_T(1,\Delta t)<\frac{C_{op}}{C_{it}+C_{mea}},
\end{equation*}
then the optimal type-II design is $n^* = \frac{1-C_{op}\Delta t}{C_{it}+C_{mea}}$, $m^* =1$, and $T^* = \Delta t$.

\item Let
\begin{equation*}
    n(m)=\frac{1-C_{op}m\Delta t}{C_{it}+C_{mea}m},
\end{equation*}
and $m_7$ be the solution of
\begin{equation*}
    n(m)[\varphi_m(m,m\Delta t)+\Delta t\varphi_T(m,m\Delta t)]=\frac{C_{op}\Delta t+C_{mea}n(m)}{C_{it}+C_{mea}m}.
\end{equation*}
If
\begin{equation*}
    n(m_7)\varphi_T(m_7,m_7\Delta t)<\frac{C_{op}}{C_{it}+C_{mea}m_7},
\end{equation*}
then the optimal type-II design is $n^* = n(m_7)$, $m^* =m_7$, and $T^* = m_7\Delta t$.
\item If $C_{it}+C_{mea}+C_{op}\Delta t=1$, then the optimal type-II design is $n^* = m^* = 1$, and $T^* = \Delta t$.
\end{enumerate}
\label{thm:OI exact optimal design}
\end{theorem}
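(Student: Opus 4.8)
The plan is to read Theorem~\ref{thm:OI exact optimal design} as a complete enumeration of the Karush--Kuhn--Tucker (KKT) active-set configurations for problem~(\ref{eq:criteriononlytj}), after first collapsing the three criteria into a single reduced program. Taking logarithms, minimizing $\phi_D=(n^2\varrho_D)^{-1}$ is the same as maximizing $2\log n+\log\varrho_D$, and minimizing $\phi_{A,V}=(n\varrho_{A,V})^{-1}$ is the same as maximizing $\log n+\log\varrho_{A,V}$; dividing the $D$-objective by $2$ (which leaves the argmax unchanged), all three reduce to
\[
\max_{(n,m,T)}\ L(n,m,T)=\log n+\Phi(m,T),\qquad \Phi=\tfrac12\log\varrho_D\ \text{or}\ \log\varrho_{A,V},
\]
so that $\varphi_m=\partial_m\Phi$ and $\varphi_T=\partial_T\Phi$ are exactly the quantities defined before the theorem. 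This unification is the key step that yields one proof for all three criteria and explains the $\tfrac12$ factor in $\varphi_m,\varphi_T$ for the $D$-case. Since $L$ is strictly increasing in $n$, the budget binds, so I treat $C_{it}n+C_{mea}nm+C_{op}T=1$ as an equality and am left with the three inequality constraints $n\ge1$, $m\ge1$, $T\ge m\Delta t$.

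Next I would write the KKT system. With multiplier $\lambda$ for the budget and $\mu_n,\mu_m,\mu_T\ge0$ for $n-1\ge0$, $m-1\ge0$, $T-m\Delta t\ge0$, stationarity of the Lagrangian reads
\[
\tfrac1n=\lambda(C_{it}+C_{mea}m)-\mu_n,\quad
\varphi_m=\lambda C_{mea}n-\mu_m+\mu_T\Delta t,\quad
\varphi_T=\lambda C_{op}-\mu_T.
\]
The eight cases are precisely the $2^3$ possible active sets: case~3 has all three constraints inactive (interior), cases~1/2/7 have exactly one of $n\ge1$/$m\ge1$/$T\ge m\Delta t$ active, cases~4/5/6 have two active, and case~8 has all three active (forcing $C_{it}+C_{mea}+C_{op}\Delta t=1$). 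For each configuration I would (i) substitute the active equalities into the budget to solve for the dependent variables, reproducing $T(m)$, $T(n)$, $n(m)$, $T_4$, $m_5$, $n_6$; (ii) eliminate $\lambda$ via the $n$-equation and write the remaining stationarity conditions in the free variables, yielding the defining equations for $m_1$, $n_2$, $(n_3,m_3)$, $m_7$; and (iii) convert nonnegativity of the still-inactive multipliers into the stated ``If'' inequalities.

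Two checks fix the pattern. In the interior case~3 ($\mu_n=\mu_m=\mu_T=0$) the $n$-equation gives $\lambda=1/[n(C_{it}+C_{mea}m)]$, and substituting into the other two equations gives exactly $\varphi_m=C_{mea}/(C_{it}+C_{mea}m)$ and $n\varphi_T=C_{op}/(C_{it}+C_{mea}m)$. In case~1 ($n=1$, $\mu_m=\mu_T=0$) the reduced one-dimensional stationarity $\tfrac{d}{dm}\Phi(m,T(m))=0$ with $T'(m)=-C_{mea}/C_{op}$ gives $\varphi_m/\varphi_T=C_{mea}/C_{op}$, while $\mu_n\ge0$ translates to $\varphi_T>C_{op}/(C_{it}+C_{mea}m_1)$. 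Case~7 ($T=m\Delta t$) is the only one needing a small algebraic identity: with $n(m)=(1-C_{op}m\Delta t)/(C_{it}+C_{mea}m)$ one verifies $-n'(m)=(C_{op}\Delta t+C_{mea}n(m))/(C_{it}+C_{mea}m)$, after which $n'(m)/n(m)+\varphi_m+\Delta t\varphi_T=0$ takes precisely the stated form and $\mu_T\ge0$ becomes $n\varphi_T<C_{op}/(C_{it}+C_{mea}m_7)$. The other cases follow the same substitution-plus-multiplier-sign recipe.

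The hard part will be promoting these necessary KKT conditions to genuine global optima and arguing the cases are exhaustive and mutually exclusive. For exhaustiveness, the boundary values of $T$ (namely $(1-C_{mea}-C_{it})/C_{op}$ and $m\Delta t$) together with $n=1$, $m=1$ partition the feasible region so that any stationary point lands in exactly one configuration, mirroring the cost-index ordering used in the type-I theorem. For global optimality I would lean on the convexity of $x^2\psi_1(x)$ (Theorem~4.14 of \cite{alzer2004sharp}, already invoked for Theorem~\ref{theorem:max_at_tj}) and the monotonicity facts behind $\varphi_m,\varphi_T$ to argue the reduced stationarity equations have a unique root. As with the Remark following the type-I analysis, a fully analytic uniqueness proof is delicate because of the polygamma terms; this is exactly where the ``sufficient condition'' phrasing carries the weight, since under the stated inequalities the exhibited KKT point is the optimizer, with uniqueness confirmed by sign analysis or numerically in applications. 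I would relegate the complete case-by-case verification to the supplementary material.
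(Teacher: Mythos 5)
Your proposal is correct and follows essentially the same route as the paper: the paper proves the analogous type-I result (Theorem \ref{thm:EI exact optimal design}) by writing the Lagrangian with a binding budget constraint and enumerating the $2^3$ KKT multiplier sign patterns into the eight cases, and states that the type-II proof is the same argument with the constraints $n\geq 1$, $m\geq 1$, $T\geq m\Delta t$ --- exactly your active-set enumeration, including your unification of the three criteria through $\varphi_m,\varphi_T$ (with the $\tfrac12$ factor for $D$-optimality) and your translation of multiplier nonnegativity into the stated ``If'' inequalities. Your spot checks of cases 1, 3, and 7 reproduce the paper's defining equations verbatim, and your caveat about uniqueness/global optimality matches the paper's own Remark \ref{remark:3} treatment of this point.
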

The proof is similar to that of Theorem \ref{thm:EI exact optimal design} and therefore is omitted. Theorem \ref{thm:OI exact optimal design} offers a structured framework for determining the optimal type-II design under cost constraints. By systematically classifying the optimal design into eight cases, it provides actionable insights for real-world applications. Compared to existing literature, this theorem generalizes optimal gamma degradation test by simultaneously considering inspection interval. 
\begin{remark}
When the constraint $T = m \Delta t$ holds, as in cases 5-8, the optimal type-II design reduces to the optimal type-I design with $\tau = \Delta t$. Additionally, when the constraint $m = 1$ holds, as in cases 2, 4, 6, and 8, the design also reduces to a type-I design, and can be applied to destructive degradation tests.
\end{remark}

%However, comparing with theorem \ref{thm:EI exact optimal design}, the equations become more complicate and hence not easy to summarize some clear formulation.

\section{Illustrative examples}

\begin{example}
    We revisit the LED data from \cite{liao2004reliability} to illustrate how to determine the optimal type-I designs. As reported in \cite{lim2015optimum}, the parameter estimates are $\alpha = 0.065$ and $\gamma = -0.77$. The costs are $C_{it} = 3 \times 10^{-2}/\text{unit}$, $C_{mea} = 1.9 \times 10^{-3}/\text{inspection}$, and $C_{op} = 2.7 \times 10^{-3}/\text{hr}$, with $\Delta t = 5\text{hrs}$. The threshold is $\eta = 0.5$, and the interest is in the $p$th quantile, with $p = 0.1$. Under these settings, according to Theorem \ref{thm:EI exact optimal design}, the functions $K(\tau)$, $\varphi_D(\tau)$, $\varphi_A(\tau)$, and $\varphi_V(\tau)$, represented by the lines in solid, dotted, dashed, and dotdash, respectively, are plotted in Fig. \ref{fig:k_tau_ex1} with the x-axis and y-axis in log scale. The three cost indices are:
    \begin{equation*}
    \frac{C_{it}C_{mea}}{C_{op}(1 - 2C_{it})} = 0.0225, \quad \frac{C_{it}}{C_{op}(C_{mea} + 2C_{it})} = 179.5, \quad \frac{1 - C_{it} - C_{mea}}{C_{op}} = 358.6.
    \end{equation*}
    \begin{figure}
        \centering
        \includegraphics[width=\textwidth]{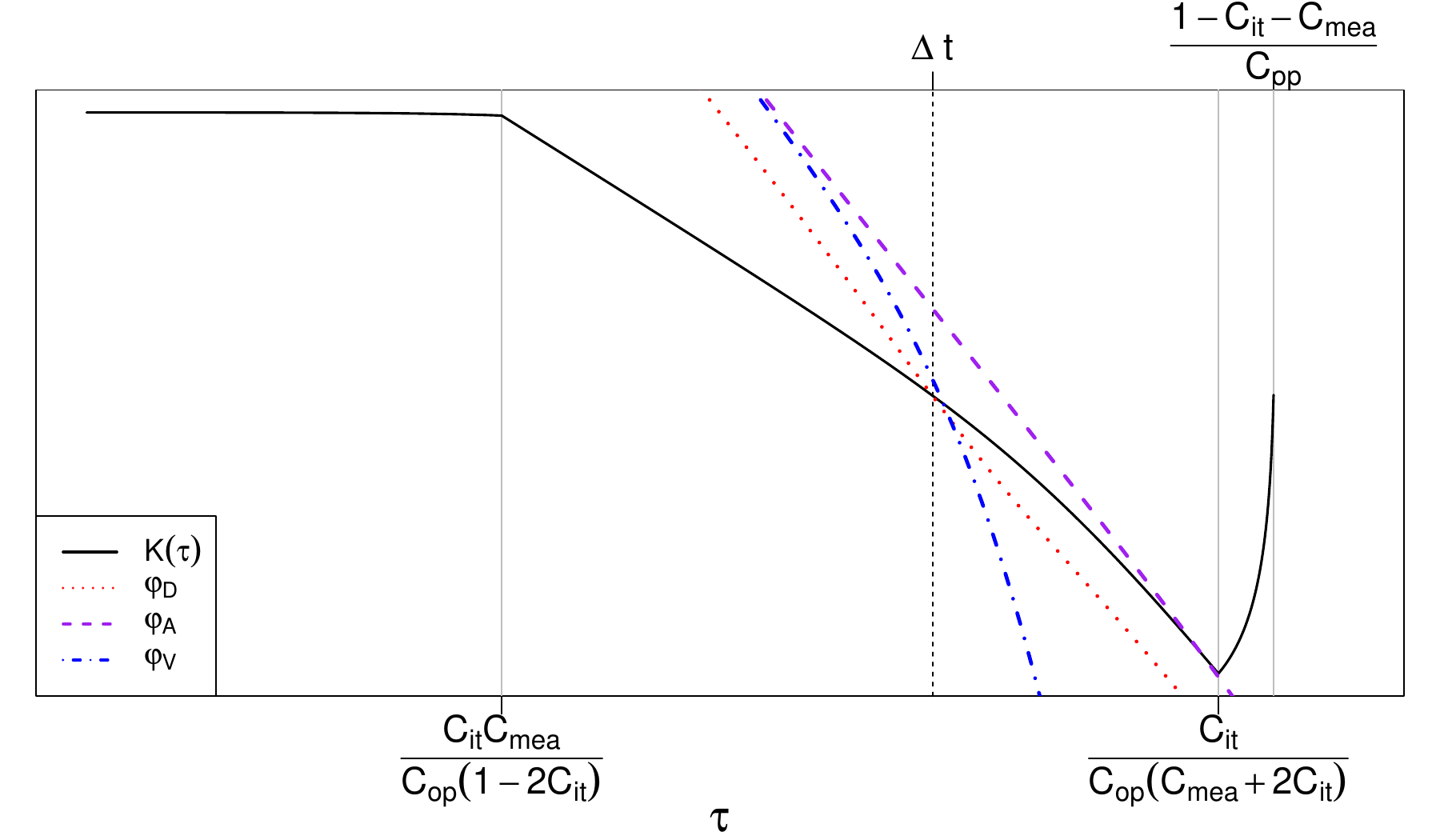}
        \caption{Illustration of $K(\tau)$ and $\varphi(\tau)$ under the setting of Example 1}
        \label{fig:k_tau_ex1}
    \end{figure}
    The roots of $\varphi(\tau) = K(\tau)$ are 1.34, 143.2, and 5.72 for D-, A-, and V-optimality, respectively, all of which lie in $\left(\frac{C_{it}C_{mea}}{C_{op}(1-2C_{it})},\frac{C_{it}}{C_{op}(C_{mea}+2C_{it})}\right)$. For D-optimality, the root is smaller than $\Delta t$, resulting in case 7. In contrast, the roots for A- and V-optimality are greater than $\Delta t$, resulting in case 3.
    Table \ref{table:lim2015} summarizes the optimal type-I designs. Additionally, we compare these results with the optimal type-II designs. To facilitate comparison, we define the relative efficiency of a design $\zeta$ as
    \begin{equation*}
        RE(\zeta^*_{II},\zeta)=\frac{\phi(\zeta^*_{II})}{\phi(\zeta)},
    \end{equation*}
    where $\zeta^*_{II}$ is the optimal type-II design.
    In Table \ref{table:lim2015}, the relative efficiencies of the optimal type-I designs are all close to 1. Thus, in this example, if a practitioner prefers a simpler experimental setup, the optimal type-I design can be used without significant loss of efficiency.
    \begin{table}[H]
    \caption{Comparison of optimal type-I and type-II designs for Example 1 under three criteria}
    \centering
        \begin{tabular}[c]{|c|c|c|c|c|c|c|c|}
        \hline
         Criterion & Time planning &$\phi(\zeta^*)$&RE &$n^\ast$&$m^\ast$&$T^\ast$&Case\\
        \hline
        \multirow{2}{*}{$D$}
          & Type-I & \multirow{1}{*}{$3.53\times10^{-7}$}&0.99 &\multirow{1}{*}{9.85}&\multirow{1}{*}{21.9}&\multirow{1}{*}{109.4}&Case 7\\
        &  Type-II &$3.48\times10^{-7}$&* & 10.9&16.6&122.5 &Case 3\\
        \hline
        \multirow{2}{*}{$A$}
          & Type-I & \multirow{1}{*}{$5.79\times10^{-3}$} &0.99
          &\multirow{1}{*}{16.0}&\multirow{1}{*}{1.24}&\multirow{1}{*}{178.2}& Case 3 \\
        & Type-II &$5.75\times10^{-3}$&*&15.8&1.35&179.5 &Case 3\\
        \hline    
        \multirow{2}{*}{$V$}
          & Type-I & \multirow{1}{*}{$2.47\times10^{-3}$} &0.98
          &\multirow{1}{*}{10.2}&\multirow{1}{*}{19.9}&\multirow{1}{*}{113.7}& Case 3\\
        & Type-II & $2.43\times10^{-3}$&* &10.6 &17.7&119.7&Case 3\\
        \hline
        \end{tabular}
    \label{table:lim2015}
    \end{table}
\end{example}

\begin{example}
   Although the optimal Type-I design does not result in significant loss of efficiency in Example 1, this example illustrates a situation where convenience leads to a sacrifice of efficiency. Consider the LED light degradation data from Table 6.3 in \cite{chaluvadi2008accelerated}. This dataset includes 12 samples tested under a current of 40mA. The quality characteristic (QC) for this data is light intensity, which gradually decreases over time. Since the dataset does not provide the light intensity at time $t = 0$, we assume the light intensity at $t = 0$ to be 90. The light intensity of each unit was inspected every 50 hours, with the test running for a total of 250 hours. A sample was considered to have failed when the light intensity dropped below $\eta = 50$. The left panel of Fig. \ref{fig:ex2data} shows the degradation path for the LED dataset.
    \begin{figure}
        \centering
        \includegraphics[width=0.4\textwidth]{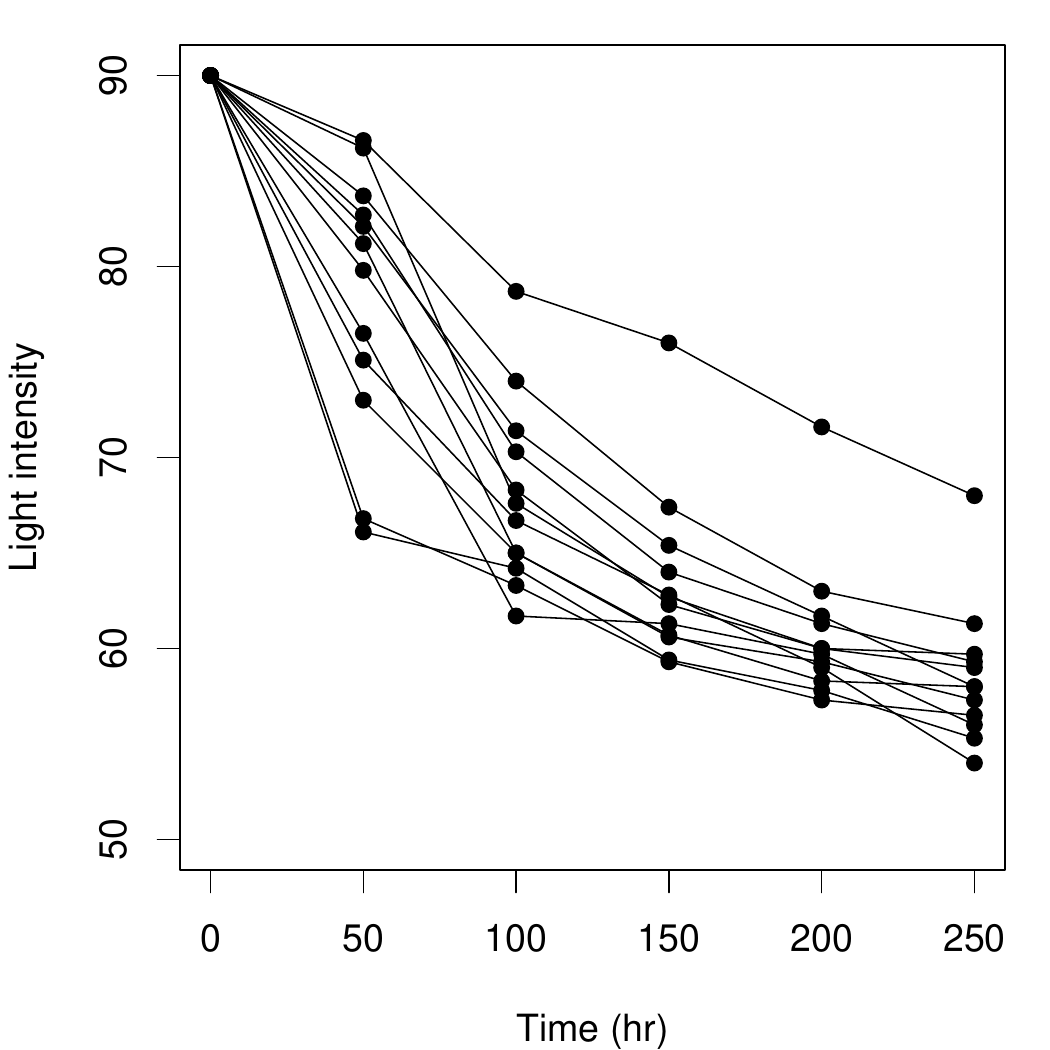}
        \includegraphics[width=0.4\textwidth]{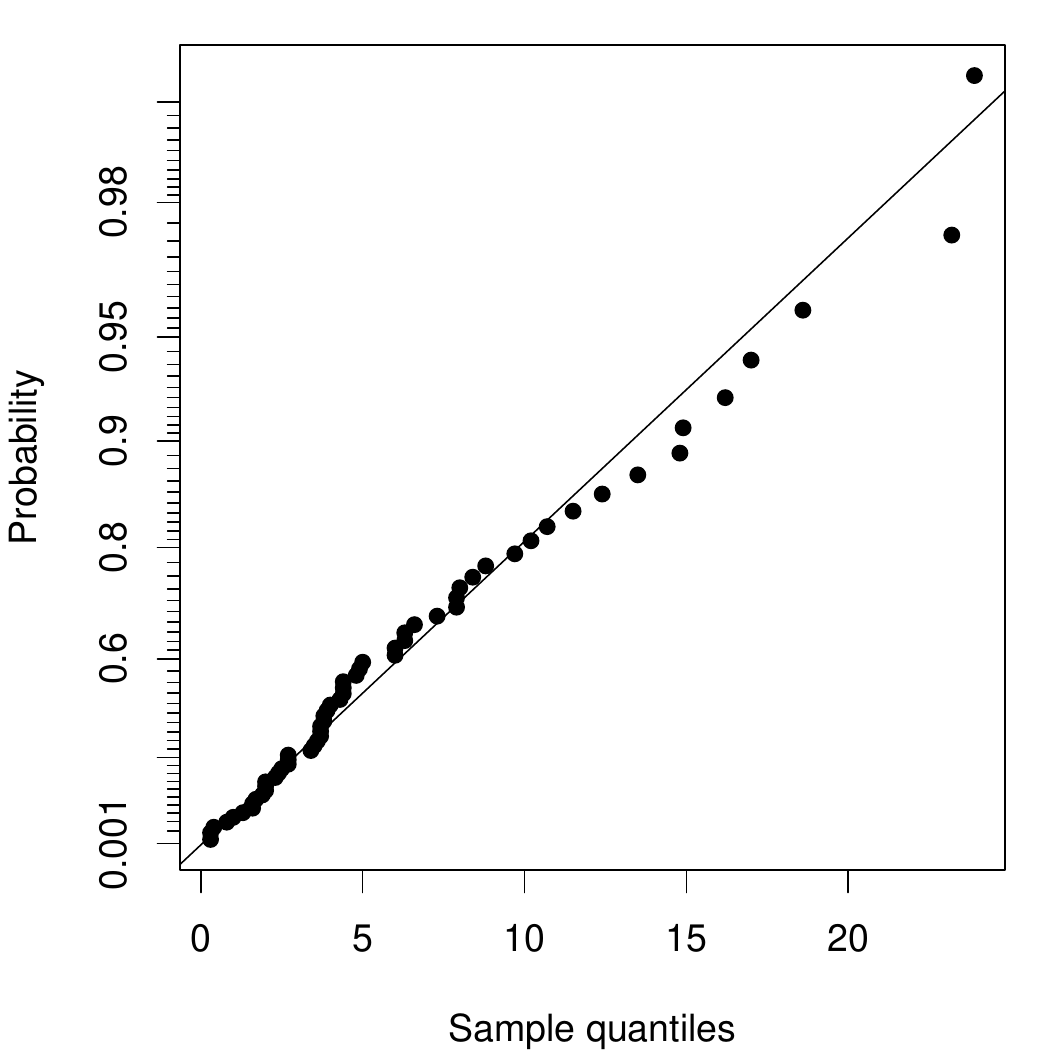}
        \caption{LED degradation data from \cite{chaluvadi2008accelerated} and gamma probability plot}
        \label{fig:ex2data}
    \end{figure}
    % with covariance matrix calculated from equation (\ref{eq:fisherinformationmatrix})
    % \begin{equation}\label{eq:LEDFIM}
    %     \begin{pmatrix}
    %         1.26\times10^{-5}&0\\
    %         0&1.52\times10^{-2}
    %     \end{pmatrix}.
    % \end{equation} 
    
    By maximizing equation (\ref{eq:loglikelihoodfunction}), the MLE of the parameters are $\hat{\alpha} = 0.028$ and $\hat{\gamma} = -2.073$. The corresponding variances, computed from the inverse of the Fisher information matrix, are $\text{Var}(\hat{\alpha}) = 2.18 \times 10^{-5}$ and $\text{Var}(\hat{\gamma}) = 1.18 \times 10^{-2}$.
    The probability plot is provided in the right panel Fig. \ref{fig:ex2data}, where the linearity suggests that the data is suitable for fitting with a gamma process. Additionally, the Anderson-Darling test was used to assess the goodness of fit, yielding a $p$-value of 0.4396. At a significance level of 0.05, we obtain the same conclusion as with the gamma probability plot.
    
    Next, we set $C_{it} = 7.56 \times 10^{-2}$/unit, $C_{mea} = 1.06 \times 10^{-3}$/inspection, $C_{op} = 1.17 \times 10^{-4}$/hr, and $\Delta t = 5$ hrs. In this setting, $C_b = 1$ corresponds to the original experimental configuration. For $V$-optimality, we assume $p = 0.05$. Table \ref{table:ch2008} presents the results of the original design, the optimal type-I, and type-II designs for the three criteria. Additionally, to compare with real applications, a grid search algorithm is used to find the optimal integer design for comparison.
    
    \begin{table}[H]
    \caption{Comparison of different optimal designs for Example 2 under three criteria}
    \centering
        \begin{tabular}[c]{|c|c|c|c|c|c|c|}
        \hline
         Criterion & Time planning &$\phi(\zeta^*)$&RE &$n^\ast$&$m^\ast$&$T^\ast$\\
        \hline
        \multirow{5}{*}{$D$}
          & Type-I & $1.082\times10^{-8}$& 0.75&3.82&104.55&2466\\
          & Type-I integer & $1.084\times10^{-8}$&0.75 &4&98&2411\\
        &  Type-II &$8.122\times10^{-9}$&* & 4.38&72.19 &2849 \\
        &  Type-II integer &$8.17\times10^{-9}$&0.99 & 4&82 &2991 \\
        &Original& $2.578\times10^{-7}$&0.03&12&5&250\\
        \hline
        \multirow{5}{*}{$A$}
           & Type-I & $1.384\times10^{-3}$&0.98 &6.45&3.66&4167\\
          & Type-I integer & $1.391\times10^{-3}$&0.98 &6&4&4453\\
        &  Type-II &$1.36\times10^{-3}$&* & 6.45&3.2 &4193 \\
        &  Type-II integer &$1.367\times10^{-3}$&0.99 & 6&4 &4453 \\
        &Original& $1.182\times10^{-2}$&0.12&12&5&250\\
        \hline    
        \multirow{5}{*}{$V$}
          & Type-I & $2.148\times10^{2}$&0.89 &5.54&27.53&3582\\
          & Type-I integer & $2.158\times10^{2}$&0.89 &6&25&3311\\
        &  Type-II &$1.913\times10^{2}$&* & 5.74&21.4 &3729 \\
        &  Type-II integer &$1.916\times10^{2}$&1 & 6&20 &3583 \\
        &Original& $1.181\times10^{2}$&0.16&12&5&250\\
        \hline
        \end{tabular}
    \label{table:ch2008}
    \end{table}
    
    As shown in Table \ref{table:ch2008}, the optimal type-II design consistently outperforms the optimal type-I design. Notably, the significant improvements under the $D$- and $V$-optimality criteria demonstrate that the choice of inspection interval has a substantial effect on experimental efficiency. In contrast, the $A$-optimality criterion shows only minimal differences between the two designs. When compared to the original experiment, the efficiencies under $D$-, $A$-, and $V$-optimality are 0.14, 0.57, and 0.53, respectively, highlighting that optimal designs can significantly enhance efficiency. Furthermore, the optimal approximate designs are very close to the optimal integer designs, suggesting that the approximate designs can serve as a good starting point for implementing discrete search algorithms.
    We also conduct a sensitivity analysis to demonstrate the robustness of the proposed optimal designs. Since the results are similar and consistently perform well, they are presented in the supplementary material.
\end{example}

\section{Conclusions}
In this study, we conduct a comprehensive and systematic exploration of the optimal design for gamma degradation tests, a topic that has not been thoroughly addressed in the existing literature. Several notable contributions are made. First, we incorporate the minimum inspection interval constraint into the design framework, enhancing its practical applicability in real-world settings. Second, we analytically derive the optimal type-I designs under three scenarios and find, contrary to intuition, that a larger inspection interval does not always lead to improved efficiency. Third, we analytically derive the optimal type-II designs under a cost constraint and show that the most efficient inspection intervals follow the sequence $(T-(m-1)\Delta t, \Delta t, \dots, \Delta t)$ regardless of ordering, while the periodic inspection time is the least efficient.
Theoretical results are supported by numerical examples. Example 1 shows that inspection time planning has little influence on efficiency, whereas Example 2 demonstrates that it can have a substantial impact. A sensitivity analysis demonstrates the robustness of the proposed optimal designs when the estimated parameters deviate from the true parameters. These results offer a reliable and efficient framework for conducting gamma DTs, ensuring accurate data collection to support sound decision-making. Moreover, the two types of optimal designs provide practitioners with the flexibility to adjust the inspection times based on practical constraints or scheduling needs.

The study preliminary focuses on fundamental gamma DTs. In real applications, however, several variants of gamma DTs may arise. For instance, we assume that the gamma degradation process follows a linear degradation path, $E(Z(t)) = e^\gamma t$. In practice, the degradation path may be nonlinear, typically convex or concave. Investigating optimal designs under nonlinear degradation paths would broaden the applicability of our results.
Additionally, this study focuses on the fixed effects model. In contrast, several studies adopt random or mixed effects models to account for unit-to-unit variation among test units \citep{tseng2009optimal, peng2022optimum, cheng2024optimal}. Extending our analytical results to these models is an important direction for future research.
Finally, although many studies have examined ADTs, as summarized in Table~\ref{tab:reference}, a general framework has not yet been proposed. Several theoretical results developed in this work, such as the properties of the polygamma function, may serve as useful tools for extending our framework to the optimal design of gamma ADTs.

% \section*{Acknowledgements}
% This work is support by the National Science and Technology Council of Taiwan (grant no. 111-2118-M-A49-009-MY3).

\section*{Data availability statement}
The data that support the findings of this study are openly available at\\
https://tigerprints.clemson.edu/all\_theses/470, \cite{chaluvadi2008accelerated}.

\if0\blind{
\section*{Acknowledgements}
The authors acknowledge the generous support from the National Science and Technology Council of Taiwan (grant no. 111-2118-M-A49-009-MY3).} \fi

\bibliographystyle{apalike}
\bibliography{refer}
\clearpage
\begin{appendix}
\section{Supplementary material}
\subsection{Supporting Lemmas for Proofs}
\begin{lemma}\citep{yang2017monotonicity}\label{lemma:yang2017}
    Let the functions $A(x)$, $B(x)$ be defined on $(0,\infty)$ such that their Laplace transforms exist with $B(x) \neq 0$ for all $x > 0$. Then the function $\frac{\int_0^\infty A(t)e^{-xt}dt}{\int_0^\infty B(t)e^{-xt}dt}$ is increasing on $(0,\infty)$ if $\frac{A(x)}{B(x)}$ is decreasing on $(0,\infty)$.
\end{lemma}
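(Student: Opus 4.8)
The plan is to reduce the monotonicity of the ratio of Laplace transforms to a Chebyshev-type integral inequality. Write $f(x)=\int_0^\infty A(t)e^{-xt}\,dt$ and $g(x)=\int_0^\infty B(t)e^{-xt}\,dt$ for the two transforms, so the claim is that $f/g$ is increasing on $(0,\infty)$. First I would observe that, since the Laplace transforms are assumed to exist, they are smooth in the interior of their domain of convergence and may be differentiated under the integral sign, giving $f'(x)=-\int_0^\infty tA(t)e^{-xt}\,dt$ and $g'(x)=-\int_0^\infty tB(t)e^{-xt}\,dt$. The sign of $(f/g)'$ equals the sign of the Wronskian-type quantity $W(x):=f'(x)g(x)-f(x)g'(x)$, so it suffices to prove $W(x)\ge 0$.

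Second, I would rewrite $W$ as a double integral over the first quadrant. Multiplying the integrals and interchanging the order of integration (Fubini, justified by absolute convergence on the domain of convergence), the two cross terms combine into
\begin{equation*}
W(x)=-\int_0^\infty\!\!\int_0^\infty (s-t)\,A(s)B(t)\,e^{-x(s+t)}\,ds\,dt.
\end{equation*}
Averaging this expression with its image under the swap $s\leftrightarrow t$ yields the symmetrized form
\begin{equation*}
W(x)=-\frac{1}{2}\int_0^\infty\!\!\int_0^\infty (s-t)\,\bigl[A(s)B(t)-A(t)B(s)\bigr]\,e^{-x(s+t)}\,ds\,dt.
\end{equation*}

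Third, I would pin down the sign of the integrand using the hypothesis that $A/B$ is decreasing. Since $B\neq 0$ on $(0,\infty)$ and hence (by continuity) of constant sign, which I take to be positive after a harmless global sign change, we may factor $A(s)B(t)-A(t)B(s)=B(s)B(t)\bigl[A(s)/B(s)-A(t)/B(t)\bigr]$ with $B(s)B(t)>0$. If $s>t$ then $A(s)/B(s)\le A(t)/B(t)$, so the bracket is $\le 0$ while $s-t>0$; the case $s<t$ is symmetric. Hence $(s-t)\bigl[A(s)B(t)-A(t)B(s)\bigr]\le 0$ pointwise, the symmetrized integrand is nonpositive, and the outer minus sign forces $W(x)\ge 0$. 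Therefore $(f/g)'\ge 0$ and $f/g$ is increasing, as claimed.

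The main obstacle I anticipate is analytic rather than combinatorial: rigorously justifying the differentiation under the integral sign and the Fubini interchange, which requires controlling the convergence of the Laplace integrals uniformly on compact subsets of $(0,\infty)$, together with correctly handling the sign of $B$ (upgrading the stated hypothesis $B\neq 0$ to constant sign so that dividing by $B(s)B(t)$ is legitimate). Once these regularity issues are dispatched, the core sign argument is the standard Chebyshev/rearrangement observation and is routine.
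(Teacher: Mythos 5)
The paper does not prove this lemma at all: it is imported verbatim from \citet{yang2017monotonicity} as a supporting tool for Lemma A.4, so there is no in-paper argument to compare yours against. Your proof is correct and is, in fact, the standard proof of this monotonicity rule in the literature: writing $W(x)=f'(x)g(x)-f(x)g'(x)$, expanding it as a double integral, symmetrizing in $(s,t)$, and reading off the sign of $(s-t)\bigl[A(s)B(t)-A(t)B(s)\bigr]=(s-t)\,B(s)B(t)\bigl[A(s)/B(s)-A(t)/B(t)\bigr]$ from the monotonicity of $A/B$ is exactly the Chebyshev-type argument underlying Yang and Tian's result. Two small points deserve emphasis. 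First, you are right that the hypothesis as transcribed in the paper is too weak: $B(x)\neq 0$ alone does not yield constant sign without continuity (or positivity, as assumed in the original source), and your symmetrization genuinely needs $B(s)B(t)>0$; your explicit upgrade of the hypothesis is the correct repair, and the reduction to $B>0$ via the simultaneous sign flip $A\mapsto -A$, $B\mapsto -B$ leaves both $A/B$ and $f/g$ unchanged, as you say. Second, your argument delivers $(f/g)'\geq 0$, i.e.\ weak monotonicity; if strict increase is wanted one additionally needs $A/B$ nonconstant, so that the symmetrized integrand is negative on a set of positive measure---worth a sentence but not a gap for the way the lemma is used in the paper (where the integrands are explicit, continuous, and positive, so the regularity caveats about differentiation under the integral sign and Fubini that you flag are also harmless there).
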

\begin{lemma}\citep{qi2020some}\label{lemma:qi2020}
    \begin{align*}
        &x\psi_1(x)-1=\int_0^\infty\left(\frac{e^{t}(e^{t}-1-t)}{(e^t-1)^2}\right)e^{-xt}dt>0,\\
        &x^2\psi_1(x)-x-\frac{1}{2}=\int_0^\infty\left(\frac{e^{t}(e^{t}(t-2)+t+2)}{(e^t-1)^3}\right)e^{-xt}dt>0.
    \end{align*}
\end{lemma}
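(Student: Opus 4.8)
The plan is to derive both identities from the classical Laplace-transform representation of the trigamma function,
\begin{equation*}
\psi_1(x)=\int_0^\infty \frac{t\,e^{-xt}}{1-e^{-t}}\,dt=\int_0^\infty \frac{t\,e^{t}}{e^{t}-1}\,e^{-xt}\,dt,
\end{equation*}
and then to convert each multiplication by $x$ in the transform variable into a differentiation in $t$ by integration by parts, using $x\,e^{-xt}=-\tfrac{d}{dt}e^{-xt}$. Positivity of each quantity will then reduce to checking positivity of the resulting integrand for $t>0$, which is exactly the form in which the lemma is stated.

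For the first identity, I would set $g(t)=\tfrac{t e^{t}}{e^{t}-1}$, so that $\psi_1(x)=\int_0^\infty g(t)e^{-xt}\,dt$. Integration by parts gives $x\psi_1(x)=\bigl[-g(t)e^{-xt}\bigr]_0^\infty+\int_0^\infty g'(t)e^{-xt}\,dt$. The boundary term at infinity vanishes because $g(t)\sim t$, and at zero it contributes $g(0)=1$ (from a Taylor expansion of $g$), so that $x\psi_1(x)-1=\int_0^\infty g'(t)e^{-xt}\,dt$. A quotient-rule computation yields $g'(t)=\tfrac{e^{t}(e^{t}-1-t)}{(e^{t}-1)^2}$, which is precisely the claimed kernel; its positivity is immediate from $e^{t}>1+t$ for $t>0$.

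For the second identity, I would bootstrap from the first. Writing $h(t)=g'(t)=\tfrac{e^{t}(e^{t}-1-t)}{(e^{t}-1)^2}$, we have $x\psi_1(x)-1=\int_0^\infty h(t)e^{-xt}\,dt$; multiplying by $x$ and integrating by parts once more gives $x^2\psi_1(x)-x=\bigl[-h(t)e^{-xt}\bigr]_0^\infty+\int_0^\infty h'(t)e^{-xt}\,dt$. Here the boundary value at zero is $h(0)=\tfrac12$ (again by Taylor expansion, since $e^{t}-1-t\sim t^2/2$), so $x^2\psi_1(x)-x-\tfrac12=\int_0^\infty h'(t)e^{-xt}\,dt$. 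Differentiating $h$ and simplifying the numerator produces $h'(t)=\tfrac{e^{t}\bigl(e^{t}(t-2)+t+2\bigr)}{(e^{t}-1)^3}$, matching the stated kernel.

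The routine but most error-prone step is the algebra of computing $g'$ and especially $h'$, two nested quotient-rule differentiations in which a great deal must cancel before the clean numerators emerge. The only genuinely new positivity check is for the second kernel, i.e. for $\phi(t)=e^{t}(t-2)+t+2$ on $(0,\infty)$: I would note $\phi(0)=\phi'(0)=0$ while $\phi''(t)=t e^{t}>0$, so $\phi'$ increases from $0$ and hence $\phi$ increases from $0$, giving $\phi(t)>0$; since $(e^{t}-1)^3>0$, the full integrand is positive and the inequality follows.
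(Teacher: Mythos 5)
Your proof is correct, and it is worth noting that the paper itself offers no proof to compare against: Lemma \ref{lemma:qi2020} is imported verbatim from the cited reference (Qi, 2020), so your argument supplies a self-contained derivation where the paper relies on a citation. Your route is sound and elementary: starting from $\psi_1(x)=\int_0^\infty \frac{te^t}{e^t-1}e^{-xt}\,dt$ and integrating by parts with $g(t)=\frac{te^t}{e^t-1}$ gives the boundary contribution $g(0^+)=1$ and kernel $g'(t)=\frac{e^t(e^t-1-t)}{(e^t-1)^2}$, which I verified by direct computation; iterating with $h=g'$ gives $h(0^+)=\tfrac12$ (since $e^t-1-t\sim t^2/2$ while $(e^t-1)^2\sim t^2$) and, after the cancellation $N'(e^t-1)-2Ne^t=e^t\bigl(e^t(t-2)+t+2\bigr)$ with $N=e^t(e^t-1-t)$, exactly the stated second kernel. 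The boundary terms at infinity vanish because $g(t)\sim t$ and $h(t)\to 1$, so $g(t)e^{-xt},h(t)e^{-xt}\to 0$ for $x>0$, and both kernels are bounded with exponentially decaying tails, so the Laplace integrals converge. Your positivity checks are also complete: $e^t>1+t$ handles the first kernel, and for $\phi(t)=e^t(t-2)+t+2$ the observation $\phi(0)=\phi'(0)=0$ with $\phi''(t)=te^t>0$ handles the second. What this buys relative to the paper's citation is transparency and verifiability within the document; the only cosmetic caution is to state explicitly that $g(0)$ and $h(0)$ are understood as limits as $t\to 0^+$ (both functions have removable singularities at $t=0$), which your Taylor-expansion remarks already implicitly cover.
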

\begin{lemma}\citep{qi2022decreasing}
    The function
    \begin{equation*}
        \frac{2x\psi_1(x)+x^2\psi_2(x)-1}{(x^2\psi_1(x)-x-1/2)^2}
    \end{equation*}
    is a decreasing function in $x\in(0,\infty)$ onto the interval $(-6,-4)$.\label{lemma:qi2022}
\end{lemma}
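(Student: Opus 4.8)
The plan is to recognize that both the numerator and the base of the denominator are, respectively, the derivative and the value of a single completely monotonic function, and then to handle monotonicity and the endpoint limits separately. First I would set $v(x)=x^{2}\psi_1(x)-x-\tfrac12$, which is positive on $(0,\infty)$ by Lemma \ref{lemma:qi2020}. A direct differentiation gives $v'(x)=2x\psi_1(x)+x^{2}\psi_2(x)-1$, so the function in question is exactly
\begin{equation*}
f(x)=\frac{v'(x)}{v(x)^{2}}=-\left(\frac{1}{v(x)}\right)'.
\end{equation*}
Hence $f$ is decreasing if and only if $1/v$ is convex, equivalently $2\,v'(x)^{2}>v(x)\,v''(x)$ for all $x>0$.

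For the monotonicity I would exploit the integral representation supplied by Lemma \ref{lemma:qi2020}, namely $v(x)=\int_0^\infty B(t)e^{-xt}\,dt$ with the explicit nonnegative kernel $B(t)=e^{t}\!\left(e^{t}(t-2)+t+2\right)/(e^{t}-1)^{3}$. Differentiating under the integral sign gives $v'(x)=\int_0^\infty(-t)B(t)e^{-xt}\,dt$ and $v''(x)=\int_0^\infty t^{2}B(t)e^{-xt}\,dt$, while the convolution theorem for Laplace transforms yields $v(x)^{2}=\int_0^\infty (B\ast B)(t)e^{-xt}\,dt$. Consequently
\begin{equation*}
-f(x)=\frac{\int_0^\infty tB(t)e^{-xt}\,dt}{\int_0^\infty (B\ast B)(t)e^{-xt}\,dt},
\end{equation*}
and Lemma \ref{lemma:yang2017} shows that $-f$ is increasing, i.e. $f$ is decreasing, as soon as the ratio $t\mapsto tB(t)/(B\ast B)(t)$ is decreasing on $(0,\infty)$.

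To pin down the range I would compute the two endpoint limits from the standard expansions of the polygamma functions. Near the origin $\psi_1(x)=x^{-2}+O(1)$ and $\psi_2(x)=-2x^{-3}+O(1)$, giving $v'(x)\to-1$ and $v(x)\to\tfrac12$, hence $f(x)\to-4$. At infinity $\psi_1(x)\sim x^{-1}+\tfrac12 x^{-2}+\tfrac16 x^{-3}$ and $\psi_2(x)\sim-x^{-2}-x^{-3}-\tfrac12 x^{-4}$, so $v'(x)\sim-\tfrac{1}{6}x^{-2}$ and $v(x)\sim\tfrac{1}{6}x^{-1}$, giving $f(x)\to-6$. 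Continuity together with strict monotonicity then forces the image of $f$ to be exactly the open interval $(-6,-4)$.

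The hard part will be the monotonicity step. Written out directly, $2v'^{2}-vv''$ is a double Laplace integral against the sign-indefinite kernel $2st-\tfrac12(s^{2}+t^{2})$, so mere nonnegativity of $B$ is insufficient; equivalently, establishing that $tB(t)/(B\ast B)(t)$ decreases requires genuinely using the explicit form of $B$ and controlling the convolution $B\ast B$. I expect this to absorb essentially all of the technical effort, and it is precisely the content isolated by \cite{qi2022decreasing}; the reduction above is what makes Lemmas \ref{lemma:yang2017} and \ref{lemma:qi2020} the natural tools to attack it.
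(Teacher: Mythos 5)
First, a point of comparison: the paper offers no proof of this lemma at all --- it is imported verbatim from \cite{qi2022decreasing}, and its only role in the paper is as a black box inside the proof of Lemma \ref{lemma:tung2025}. So there is no in-paper argument for your proposal to match or diverge from; the relevant question is whether your sketch would close the statement on its own.

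Judged that way, your reduction is correct as far as it goes, but it contains a genuine gap, and you have named it yourself. The algebra is right: with $v(x)=x^{2}\psi_1(x)-x-\tfrac12$ one indeed has $v'(x)=2x\psi_1(x)+x^{2}\psi_2(x)-1$, so the target function is $-(1/v)'$, and via Lemma \ref{lemma:qi2020} and the convolution theorem you correctly recast $-f$ as the ratio of the Laplace transforms of $tB(t)$ and $(B\ast B)(t)$, to which Lemma \ref{lemma:yang2017} applies. Your endpoint limits also check out: $v\to\tfrac12$ and $v'\to-1$ as $x\to0^{+}$ give $-4$, while $v\sim\tfrac{1}{6x}$ and $v'\sim-\tfrac{1}{6x^{2}}$ as $x\to\infty$ give $-6$. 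But the decisive claim --- that $t\mapsto tB(t)/(B\ast B)(t)$ is decreasing --- is precisely the content of the lemma and is nowhere established; as you correctly observe, the symmetrized kernel $2st-\tfrac12(s^{2}+t^{2})$ is sign-indefinite, so positivity of $B$ alone cannot carry the argument, and controlling the convolution $B\ast B$ (which has no simple closed form here) is the entire difficulty. Deferring that step to \cite{qi2022decreasing} is legitimate scholarship --- it is exactly what the paper does --- but it makes your text a framing of the cited result rather than a proof of it. Two smaller points: Lemma \ref{lemma:yang2017} as stated gives only (non-strict) monotonicity, so you should say a word about strictness, or argue openness of the interval at both ends directly from the limits, before concluding the image is exactly $(-6,-4)$; and it is worth noting that your Laplace-ratio strategy is the same machinery the paper uses to prove the neighboring Lemma \ref{lemma:tung2025}, where the analogous kernel comparison is settled by an explicit series expansion --- the presence of a convolution in the present lemma is what makes that explicit route substantially harder here.
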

\begin{lemma}\label{lemma:tung2025}
    The function
    \begin{equation*}
        \Omega(x)=\frac{2x\psi_1(x)+x^2\psi_2(x)-1}{(x\psi_1(x)-1)^2}
    \end{equation*}
    is a decreasing function in $x\in(0,\infty)$ onto the interval $(-2/3,0)$.
\end{lemma}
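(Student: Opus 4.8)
The plan is to reduce the claimed monotonicity to the convexity of a single reciprocal, and then to reduce that convexity to the log-concavity of the kernel appearing in the integral representation of Lemma \ref{lemma:qi2020}. Write $g(x)=x\psi_1(x)-1$, which is positive on $(0,\infty)$ by Lemma \ref{lemma:qi2020}, and set $N(x)=2x\psi_1(x)+x^2\psi_2(x)-1$, so that $\Omega=N/g^2$. The first observation is the algebraic identity $N=(xg)'=g+xg'$; consequently $\Omega=\tfrac1g-x\big(\tfrac1g\big)'$ and therefore $\Omega'(x)=-x\big(\tfrac1g\big)''(x)$. Since $x>0$, this shows that $\Omega$ is strictly decreasing on $(0,\infty)$ if and only if $1/g$ is strictly convex, i.e. if and only if $2(g')^2-gg''>0$.

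Next I would recast this inequality as a monotonicity statement amenable to Lemma \ref{lemma:yang2017}. Introducing $R:=N/g'$ (which is positive, since $g=\int_0^\infty A\,e^{-xt}dt$ is completely monotone so $g'<0$, while $N<0$), a direct differentiation gives $R'=\big(2(g')^2-gg''\big)/(g')^2$, so $R$ is increasing exactly when $1/g$ is convex. It therefore suffices to prove that $R$ is increasing. Using Lemma \ref{lemma:qi2020}, write $g=\int_0^\infty A(t)e^{-xt}\,dt$ with $A(t)=\tfrac{e^t(e^t-1-t)}{(e^t-1)^2}$, and note $A(0^+)=\tfrac12$ together with the integration-by-parts identity $x\int_0^\infty A\,e^{-xt}dt=A(0^+)+\int_0^\infty A'e^{-xt}dt$; comparing this with the representation of $x^2\psi_1(x)-x-\tfrac12=xg-\tfrac12$ identifies the second kernel of Lemma \ref{lemma:qi2020} as $A'$. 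Hence $g'=-\int_0^\infty tA(t)e^{-xt}dt$ and $N=(xg)'=-\int_0^\infty tA'(t)e^{-xt}dt$, so that $R=\big(\int_0^\infty tA'(t)e^{-xt}dt\big)/\big(\int_0^\infty tA(t)e^{-xt}dt\big)$. By Lemma \ref{lemma:yang2017}, $R$ is increasing provided the ratio of kernels $tA'(t)/\big(tA(t)\big)=(\log A)'(t)$ is decreasing, i.e. provided $A$ is log-concave on $(0,\infty)$.

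The remaining, and main, task is to establish this log-concavity. Computing $(\log A)''$ and clearing denominators, the condition $(\log A)''(t)<0$ is equivalent to the elementary inequality $\phi(t):=\big[1-e^t(1-t)\big](e^t-1)^2-2e^t(e^t-1-t)^2>0$ for $t>0$. I expect this to be the hard part: all lower-order terms cancel, and a Maclaurin expansion shows $\phi(t)=\tfrac{1}{36}t^6+O(t^7)$, so positivity near $0$ holds but the global statement needs a genuine argument. The plan is to expand $\phi$ as a power series in $t$ (equivalently, to multiply out the exponential series) and show that every coefficient from $t^6$ onward is nonnegative; the large-$t$ regime, where $(\log A)''\sim(2-t)e^{-t}<0$, corroborates the sign and can serve as a cross-check.

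Finally, for the range I would read off the two limits directly from the asymptotics of the polygamma functions. As $x\to\infty$, the expansions $\psi_1(x)\sim x^{-1}+\tfrac12x^{-2}+\tfrac16x^{-3}$ and $\psi_2(x)\sim -x^{-2}-x^{-3}$ give $N\sim-\tfrac16x^{-2}$ and $g\sim\tfrac12x^{-1}$, whence $\Omega\to-\tfrac23$; as $x\to0^+$, $N\to-1$ while $g\sim x^{-1}\to\infty$, whence $\Omega\to0$. Combined with the established strict monotonicity, this shows that $\Omega$ decreases onto $(-2/3,0)$, completing the proof.
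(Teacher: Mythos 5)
Your proof is correct, and it takes a genuinely different route from the paper's. The paper factors $\Omega(x)=\frac{2x\psi_1(x)+x^2\psi_2(x)-1}{(x^2\psi_1(x)-x-1/2)^2}\left(\frac{x^2\psi_1(x)-x-1/2}{x\psi_1(x)-1}\right)^2$, disposes of the first factor by importing Lemma \ref{lemma:qi2022} (decreasing onto $(-6,-4)$), and shows the second factor is increasing via Lemmas \ref{lemma:yang2017} and \ref{lemma:qi2020}, with the real work done by the kernel-ratio computation in equation (\ref{eq:omegad}). Your identities $N=(xg)'$, $\Omega=\tfrac1g-x\bigl(\tfrac1g\bigr)'$, hence $\Omega'=-x\bigl(\tfrac1g\bigr)''$, together with $N/g'=g/g'+x$, collapse the whole problem to a single application of Lemma \ref{lemma:yang2017} and eliminate any need for Lemma \ref{lemma:qi2022}; you also make explicit what the paper leaves implicit, namely that the two kernels of Lemma \ref{lemma:qi2020} satisfy $B=A'$, so that condition (\ref{eq:omegad}) is exactly log-concavity of $A$. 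Indeed, your $\phi(t)$ expands to the negative of the paper's numerator $-(t-3)e^{3t}-(2t+7)e^{2t}+(2t^2+3t+5)e^t-1$, so the one step you leave unexecuted — nonnegativity of the Maclaurin coefficients of $\phi$ from $t^6$ onward — is precisely the computation the paper carries out, with coefficients $\tfrac{1}{36},\tfrac{2}{45},\tfrac{3}{80},\tfrac{167}{7560}$ for $t^6,\dots,t^9$ and the manifestly nonnegative closed form $3^{k-1}(k-9)+k(2^k-2k)+7\cdot2^k-k-5$ for $k\ge 9$; your plan therefore goes through verbatim. Your limits at $0^+$ and $\infty$ agree with the paper's. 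Two small remarks: your large-$t$ cross-check should read $(\log A)''\sim(3-t)e^{-t}$ rather than $(2-t)e^{-t}$ (harmless, the sign conclusion is unchanged), and, as in the paper's own argument, strictness of the monotonicity rests on strict decrease of the kernel ratio $(\log A)'$, which your $\phi(t)>0$ supplies. What each approach buys: yours is self-contained and conceptually crisper — monotonicity of $\Omega$ becomes convexity of $1/g$ — while the paper's offloads part of the burden onto the known result of \cite{qi2022decreasing} at the cost of a less transparent factorization.
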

\begin{proof}
    By differentiating $\Omega(x)$, we obtain
    \begin{align*}
        \Omega(x)'&=\left[\frac{2x\psi_1(x)+x^2\psi_2(x)-1}{(x^2\psi_1(x)-x-1/2)^2}\left(\frac{x^2\psi_1(x)-x-1/2}{x\psi_1(x)-1}\right)^2\right]'\\
        &=\left[\frac{2x\psi_1(x)+x^2\psi_2(x)-1}{(x^2\psi_1(x)-x-1/2)^2}\right]'\left(\frac{x^2\psi_1(x)-x-1/2}{x\psi_1(x)-1}\right)^2\\
        &+\frac{2x\psi_1(x)+x^2\psi_2(x)-1}{(x^2\psi_1(x)-x-1/2)^2}\left[\left(\frac{x^2\psi_1(x)-x-1/2}{x\psi_1(x)-1}\right)^2\right]'.
    \end{align*}
    According to lemmas \ref{lemma:yang2017}, \ref{lemma:qi2020} and \ref{lemma:qi2022}, if
    \begin{equation}\label{eq:omegad}
        \left(\frac{e^{t}(e^{t}(t-2)+t+2)}{(e^t-1)^3} \middle/\frac{e^{t}(e^{t}-1-t)}{(e^t-1)^2}\right)'<0,
    \end{equation}
    then $$\left(\frac{x^2\psi_1(x)-x-1/2}{x\psi_1(x)-1}\right)'>0,$$
    implying $\Omega(x)'<0$.
    Computing equation (\ref{eq:omegad}) gives
    % \begin{align*}
    %     &\frac{-(t-3)e^{3t}-(2t+7)e^{2t}+(2t^2+3t+5)e^t-1}{(e^t-1)^2(e^t-1-t)^2}\\
    %     &=-\frac{1}{(e^t-1)^2(e^t-1-t)^2}\\
    %     &\left(\frac{1}{36}t^6 +\frac{2}{45}t^7 +\frac{3}{80}t^8 +\frac{167}{7560}t^9+\sum\limits_{k=9}^\infty\left(3^{k-1}(k-9)+k(2^k-2k)+7\times2^k-k-5\right)\frac{t^k}{k!}\right)<0.
    % \end{align*}
    \begin{align*}
        &\frac{-(t-3)e^{3t}-(2t+7)e^{2t}+(2t^2+3t+5)e^t-1}{(e^t-1)^2(e^t-1-t)^2}\\
        &=\frac{-1}{(e^t-1)^2(e^t-1-t)^2}\left(
        \begin{aligned}
            &\frac{1}{36}t^6 +\frac{2}{45}t^7 +\frac{3}{80}t^8 +\frac{167}{7560}t^9\\
            &+\sum\limits_{k=9}^\infty\left(3^{k-1}(k-9)+k(2^k-2k)+7\times2^k-k-5\right)\frac{t^k}{k!}
        \end{aligned}\right)<0.
    \end{align*}
    Hence, $\Omega(x)$ is a decreasing function. Next, using the asymptotic approximation,
    \begin{equation*}
        (-1)^n\psi_n(x)\sim \frac{(n-1)!}{x^n}+\frac{(n)!}{2x^{n+1}}+\frac{(n+1)!}{12x^{n+2}}+\mathcal{O}\left(\frac{1}{x^{n+3}}\right),\text{ as }x\rightarrow\infty,
    \end{equation*}
    we have $\lim_{x\rightarrow\infty}\Omega(x)=-2/3$.
    In addition, since
    \begin{equation*}
        (-1)^n\psi_n(x)=n!\sum\limits_{v=0}^\infty\frac{1}{(x+v)^{n+1}},
    \end{equation*}
    it follows that $\lim_{x\rightarrow 0^{+}}\Omega(x)=0$. 
    Consequently, $\Omega(x)$ is a decreasing function in $x\in(0,\infty)$ onto the interval $(-2/3,0)$.
\end{proof}
\begin{lemma} \label{lemma:alzer20012004}
    For any $y>x>0$, we have
    \begin{equation*}
        x\psi_1(x)-2y\psi_1(y)-y^2\psi_2(y)>0.
    \end{equation*}
\end{lemma}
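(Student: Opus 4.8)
The plan is to reduce the entire inequality to the single monotonicity fact that $f(x) := x\psi_1(x)$ is strictly decreasing on $(0,\infty)$. This reduction is purely algebraic and needs no estimates. Writing $f'(x) = \psi_1(x) + x\psi_2(x)$, I observe the identity
\[
2y\psi_1(y) + y^2\psi_2(y) = y\psi_1(y) + y\big(\psi_1(y) + y\psi_2(y)\big) = y\psi_1(y) + y\,f'(y).
\]
Hence, once $f$ is known to be strictly decreasing, two facts follow immediately for $y>x>0$: first $x\psi_1(x)=f(x)>f(y)=y\psi_1(y)$, and second $y\,f'(y)<0$, so $2y\psi_1(y)+y^2\psi_2(y)=y\psi_1(y)+y\,f'(y)<y\psi_1(y)$. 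Chaining these gives $x\psi_1(x) > y\psi_1(y) > 2y\psi_1(y)+y^2\psi_2(y)$, which is exactly the claim.

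It then remains to establish that $f$ is strictly decreasing, i.e. $\psi_1(x)+x\psi_2(x)<0$ for all $x>0$. Consistent with the Laplace-transform techniques used elsewhere in this paper, I would invoke the standard integral representations $\psi_1(x)=\int_0^\infty \frac{t}{1-e^{-t}}e^{-xt}\,dt$ and $\psi_2(x)=-\int_0^\infty \frac{t^2}{1-e^{-t}}e^{-xt}\,dt$. The only manipulation needed is to absorb the factor $x$ in front of $\psi_2$: since $x\,e^{-xt}=-\frac{d}{dt}e^{-xt}$, an integration by parts (the boundary terms vanish because $t^2/(1-e^{-t})\to 0$ as $t\to 0^+$ and the exponential kills the contribution at infinity) yields $x\int_0^\infty \frac{t^2}{1-e^{-t}}e^{-xt}\,dt = \int_0^\infty \frac{d}{dt}\!\big(\tfrac{t^2}{1-e^{-t}}\big)e^{-xt}\,dt$.

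Combining the two representations over the common denominator $(1-e^{-t})^2$ collapses the integrand to a single kernel, and I would show it reduces to
\[
\psi_1(x)+x\psi_2(x) = \int_0^\infty \frac{t\big[(1+t)e^{-t}-1\big]}{(1-e^{-t})^2}\,e^{-xt}\,dt.
\]
The decisive and essentially the only nontrivial point is the sign of $w(t):=(1+t)e^{-t}-1$: since $w(0)=0$ and $w'(t)=-t\,e^{-t}<0$ for $t>0$, we have $w(t)<0$ on $(0,\infty)$. As $t>0$ and $(1-e^{-t})^2>0$, the whole kernel is negative, so the integral is strictly negative and $f'(x)<0$. I expect the main obstacle to be carrying out the integration by parts and the algebraic simplification of the kernel cleanly; once the kernel is in the displayed form the sign argument via $w(t)$ is elementary, and the reduction in the first paragraph then closes the proof with no further work.
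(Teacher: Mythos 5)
Your proof is correct, and it uses the same decomposition as the paper: splitting the claim into $x\psi_1(x)-y\psi_1(y)>0$ and $-y\psi_1(y)-y^2\psi_2(y)>0$. The difference is in how these two facts are justified. The paper simply cites them from the literature (Lemma 2.3 of Alzer's 2004 paper for the monotonicity of $x\psi_1(x)$, and Corollary 1 of Alzer's 2001 paper for $\psi_1(y)+y\psi_2(y)<0$), whereas you prove everything self-containedly, and you also make explicit something the paper leaves implicit: the two cited facts are really one fact, since $-y\psi_1(y)-y^2\psi_2(y)>0$ is precisely the statement $f'(y)<0$ for $f(x)=x\psi_1(x)$, which in turn gives the monotonicity used in the first bracket. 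Your Laplace-transform verification of $f'<0$ checks out: the integration by parts is legitimate (the boundary terms vanish as you say, since $t^2/(1-e^{-t})\sim t$ near $0$), the kernel does simplify to $t\bigl[(1+t)e^{-t}-1\bigr]/(1-e^{-t})^2$, and the sign of $w(t)=(1+t)e^{-t}-1$ follows from $w(0)=0$ and $w'(t)=-te^{-t}<0$. What your route buys is a fully self-contained argument in the same Laplace-transform style as the paper's other appendix lemmas, resting on a single monotonicity fact; what the paper's route buys is brevity, at the cost of two external references.
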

\begin{proof}
    According to Lemma 2.3 in \cite{alzer2004sharp}, we have
    \begin{equation*}
        x\psi_1(x)-y\psi_1(y)>0,\text{ for }y>x>0.
    \end{equation*}
    According to Corollary 1 in \cite{alzer2001mean}, we have 
    \begin{equation*}
        -y\psi_1(y)-y^2\psi_2(y)>0,\text{ for }y>0.
    \end{equation*}
    Hence, 
    \begin{equation*}
        x\psi(x)-2y\psi_1(y)-y^2\psi_2(y)>0,\text{ for }y>x>0.
    \end{equation*}
\end{proof}

\subsection{The proof of Theorem \ref{thm:EI exact optimal design}}\label{appendix:proof thm EI exact optimal design}
Since the proofs are similar, we only present the case of D-optimality.
The optimization problem can be viewed as
\begin{equation*}
\begin{aligned}
    \max  (nm)^2 (\alpha\tau^3\psi_1(\alpha\tau)-\tau^2)
\end{aligned}
\end{equation*}
subject to
\begin{equation*}
\begin{aligned}
    &C_{it} n+C_{mea} mn+ C_{op} m\tau = 1,\\
    &\tau \geq \Delta t,\\
    &n,m \geq 1.    
\end{aligned}
\end{equation*}
For notational simplicity, we define
\begin{equation*}
    \varrho(\tau) = \alpha \tau^3 \psi_1(\alpha \tau) - \tau^2,
\end{equation*}
and hence,
\begin{equation*}
    \varphi(\tau) = \frac{1}{2} \frac{d }{d\tau}\log \varrho(\tau).
\end{equation*}
Based on KKT conditions,
\begin{equation*}   
     L(\zeta,\lambda,\mu_1,\mu_2)=n^2m^2 \varrho(\tau)-\lambda(C_{it} n+C_{mea} m n+ C_{op} m\tau - 1)+\mu_1(n-1)+\mu_2(m-1)+\mu_3(\tau-\Delta t).
\end{equation*} 
\begin{subnumcases}{}
    \pdv{L}{n}=2nm^2 \varrho(\tau)-\lambda (C_{it}+C_{mea}m)+\mu_1=0 \label{DKKT1}\\
    \pdv{L}{m}=2n^2m \varrho(\tau)-\lambda(C_{mea}n+C_{op}\tau)+\mu_2=0 \label{DKKT2}\\
    \pdv{L}{t}=n^2m^2 \varrho^{'}(\tau)-\lambda(C_{op}m)+\mu_3=0 \label{DKKT3}\\
    \pdv{L}{\lambda}=C_{it} n+C_{mea} m n+ C_{op} m\tau- 1=0 \label{DKKT4}\\
    \mu_1(n-1)=0 \nonumber\\
    \mu_2(m-1)=0 \nonumber\\
    \mu_3(\tau-\Delta t)=0 \nonumber\\
    \mu_1 \geq 0 \nonumber\\
    \mu_2 \geq 0 \nonumber\\
    \mu_3 \geq 0 \nonumber\\
    n \geq 1 \nonumber\\
    m \geq 1 \nonumber\\
    \tau\geq \Delta t \nonumber
\end{subnumcases}

\begin{enumerate}    
    \item For, $\mu_1 >0$, $\mu_2 =0$ and $\mu_3=0$, we have $n^*=1$, $m>1$ and $\frac{1-C_{it}-C_{mea}}{C_{op}}>\tau>\Delta t$.
     From (\ref{DKKT2}), (\ref{DKKT3}) and (\ref{DKKT4}), we have
     \begin{gather*}
         -2m\varrho(\tau)+\lambda(C_{mea}+C_{op}\tau)=0,\\
         \lambda=\frac{\varrho^{'}(\tau)m}{C_{op}},\\
         m(\tau)=\frac{1-C_{it}}{C_{mea}+C_{op}\tau}>1.
     \end{gather*}
    Based on some algebraic manipulations, $\tau_1$ is the solution of
    \begin{equation}
        \varphi(\tau)=K_1(\tau),
        \label{D case 2 t}
    \end{equation}
    and
    \begin{equation}
        m_1=m(\tau_1).
        \label{D case 2 m}
    \end{equation} \\
    From (\ref{DKKT1}), (\ref{D case 2 m}) and (\ref{D case 2 t}), we have
    \begin{align*}
        &\mu_1=-2m_1\varrho(\tau_1)+\frac{\varrho^{'}(\tau_1)}{C_{op}}(C_{it}+C_{mea}m_1)>0,\\    
        \Leftrightarrow\quad&\frac{\varrho^{'}(\tau_1)}{\varrho(\tau_1)}>\frac{2C_{op}m_1}{C_{it}+C_{mea}m_1},\\    
        \Leftrightarrow\quad&\frac{C_{op}}{C_{mea}+C_{op}\tau_1}>\frac{C_{op}m_1}{C_{it}+C_{mea}m_1},\\
        \Leftrightarrow\quad& C_{it}C_{mea}>C_{op}(1-2C_{it})\tau_1.
    \end{align*}
    Note that $1-2C_{it}-C_{mea}\geq 0 \Leftrightarrow \frac{1-C_{it}-C_{mea}}{C_{op}}\geq\frac{C_{it}C_{mea}}{C_{op}(1-2C_{it})}$. Therefore, if $1-2C_{it}-C_{mea} \geq 0$ and $\frac{C_{it}C_{mea}}{C_{op}(1-2C_{it})}>\tau_1>\Delta t$, or $1-2C_{it}-C_{mea} < 0$ and $\frac{1-C_{it}-C_{mea}}{C_{op}}>\tau_1>\Delta t$, then $n^*=1$, $m^*=\frac{1-C_{it}}{C_{mea}+C_{op}\tau_1}$ and $\tau^*=\tau_1$.

    \item For $\mu_1 =0$, $\mu_2 >0$ and $\mu_3=0$, we have $n>1$, $m^*=1$ and $\frac{1-C_{it}-C_{mea}}{C_{op}}>\tau>\Delta t$.
     From (\ref{DKKT1}), (\ref{DKKT3}) and (\ref{DKKT4}), we have
     \begin{gather*}
         -2n\varrho(\tau)+\lambda(C_{it}+C_{mea})=0,\\
         -n^2\varrho^{'}(\tau)+\lambda C_{op}=0,\\
         n(\tau)=\frac{1-C_{op}\tau}{C_{it}+C_{mea}}.
     \end{gather*}
    Therefore, $\tau_2$ is the solution of
    \begin{equation}
        \varphi(\tau)=K_2(\tau),
        \label{D case 3 t}
    \end{equation}
    and
    \begin{equation}
        n_2=n(\tau_2)>1.
        \label{D case 3 n}
    \end{equation} \\
   From (\ref{DKKT2}), (\ref{D case 3 t}) and (\ref{D case 3 n}), we have,
    \begin{align*}
        &\mu_2=-2n_2\varrho(\tau_2)+\frac{n_2\varrho^{'}(\tau_2)}{C_{op}}(C_{mea}n_2+C_{op}\tau_2)>0,\\   
        \Leftrightarrow\quad&\frac{\varrho^{'}(\tau_2)}{\varrho(\tau_2)}>\frac{2C_{op}}{C_{mea}n_2+C_{op}\tau_2},\\
        \Leftrightarrow\quad&\frac{C_{op}}{n_2(C_{it}+C_{mea})}>\frac{C_{op}}{C_{mea}n_2+C_{op}\tau_2},\\
        \Leftrightarrow\quad&\tau_2>\frac{C_{it}}{C_{op}(C_{mea}+2C_{it})}.
    \end{align*}
    Note that $1-2C_{it}-C_{mea}\geq 0 \Leftrightarrow \frac{1-C_{it}-C_{mea}}{C_{op}}\geq\frac{C_{it}}{C_{op}(C_{mea}+2C_{it})}$. Therefore, if $1-2C_{it}-C_{mea}> 0$ and $\frac{1-C_{it}-C_{mea}}{C_{op}}>\tau_2>\max\left(\Delta t,\frac{C_{it}}{C_{op}(C_{mea}+2C_{it})}\right)$, then $n^*=\frac{1-C_{op}\tau_2}{C_{it}+C_{mea}}$, $m^*=1$ and $\tau^*=\tau_2$.
    
    \item For $\mu_1 =0$, $\mu_2 =0$ and $\mu_3=0$, we have $n>1$, $m>1$ and $\frac{1-C_{it}-C_{mea}}{C_{op}}>\tau>\Delta t$.
     From (\ref{DKKT1}), (\ref{DKKT2}), (\ref{DKKT3}) and \ref{DKKT4},
     \begin{gather*}
         -2nm^2\varrho(\tau)+\lambda(C_{it}+C_{mea}m)=0,\\
         -2n^2m\varrho(\tau)+\lambda(C_{mea}n+C_{op}\tau)=0,\\
         -n^2m^2\varrho^{'}(\tau)+\lambda C_{op}m=0,\\
         C_{it}n+C_{mea}mn+C_{op}m\tau-1=0.
     \end{gather*}
    Based on some algebraic manipulations, $\tau_3$ is the solution of
    \begin{equation*}
        \varphi(\tau)=K_3(\tau),
        \label{D case 4 t}
    \end{equation*}
    and
    \begin{gather*}
        n(\tau)=\frac{-C_{it}+\sqrt{C_{it}^2+\frac{C_{mea}C_{it}}{C_{op}\tau}}}{\frac{C_{mea}C_{it}}{C_{op}\tau}}>1,\\
        m(\tau)=\frac{-C_{it}+\sqrt{C_{it}^2+\frac{C_{mea}C_{it}}{C_{op}\tau}}}{C_{mea}}>1.
    \end{gather*}
    Note that, $1-2C_{it}-C_{mea}\geq 0 \Leftrightarrow \tau_u\geq \tau_l$, where $\tau_u=\frac{C_{it}}{C_{op}(C_{mea}+2C_{it})}$ and $\tau_l=\frac{C_{mea}C_{it}}{C_{op}(1-2C_{it})}$, and $n(\tau)>1$ and $m(\tau)>1 \Leftrightarrow \tau_l< \tau< \tau_u$. Therefore, if $1-2C_{it}-C_{mea}\geq 0$ and $\max(\Delta t,\tau_l)<\tau_3 < \tau_u$, then $n^*=n(\tau_3)$, $m^*=m(\tau_3)$ and $\tau^*=\tau_3$.

    \item For $\mu_1 >0$, $\mu_2 >0$, and $\mu_3=0$, we have $n^*=m^*=1$, $\tau^*=\frac{1-C_{mea}-C_{it}}{C_{op}}$.
    From (\ref{DKKT1}), (\ref{DKKT2}) and (\ref{DKKT3}), we have,
    \begin{equation*}
    \begin{aligned}
        &\mu_1=-2\varrho(\tau^*)+\frac{\varrho^{'}(\tau^*)}{C_{op}}(C_{it}+C_{mea})>0,\\
        &\mu_2=-2\varrho(\tau^*)+\frac{\varrho^{'}(\tau^*)}{C_{op}}(1-C_{it})>0.
    \end{aligned}
    \end{equation*}
    Hence, if $\varphi(\tau^*)>\max \bigg( \frac{C_{op}}{C_{it}+C_{mea}},\frac{C_{op}}{1-C_{it}}\bigg)$ then $n^*=m^*=1,\tau^*=\frac{1-C_{mea}-C_{it}}{C_{op}}$.
    
    \item For, $\mu_1 >0$, $\mu_2 =0$ and $\mu_3>0$, we have $n^*=1$, $m>1$ and $\tau^*=\Delta t$.
     From (\ref{DKKT4}), (\ref{DKKT2}), (\ref{DKKT1}) and (\ref{DKKT3}), we have
     \begin{gather*}
         m=\frac{1-C_{it}}{C_{mea}+C_{op}\Delta t}>1,\\
         \lambda=\frac{2m\varrho(\Delta t)}{C_{mea}+C_{op}\Delta t},\\         
         \mu_1=-2m^2\varrho(\Delta t)+\lambda (C_{it}+C_{mea}m)>0,\\
         \mu_3=-m^2\varrho'(\Delta t)+\lambda (C_{op}m)>0
     \end{gather*}
    Therefore, if $1-2C_{it}-C_{mea}\geq 0$ and $\frac{C_{it}C_{mea}}{C_{op}(1-2C_{it})}>\Delta t$ hold, or $1-2C_{it}-C_{mea}<0$ holds, along with $\varphi(\Delta t)<K_1(\Delta t)$, then $n^*=1$, $m^*=\frac{1-C_{it}}{C_{mea}+C_{op}\Delta t}$ and $\tau^*=\Delta t$.

    \item For $\mu_1 =0$, $\mu_2 >0$ and $\mu_3>0$, we have $n>1$, $m^*=1$ and $\tau^*=\Delta t$.
     From (\ref{DKKT4}), (\ref{DKKT1}), (\ref{DKKT2}) and (\ref{DKKT3}), we have
     \begin{gather*}
         n^*=\frac{1-C_{op}\Delta t}{C_{it}+C_{mea}}>1,\\
         \lambda=\frac{2n\varrho(\Delta t)}{C_{it}+C_{mea}},\\
         \mu_2=-2n^2\varrho(\Delta t)+\lambda (C_{mea}n+C_{op}\Delta t)>0,\\
         \mu_3=-n^2\varrho'(\Delta t)+\lambda C_{op}>0.
     \end{gather*}
    Therefore, if $\Delta t>\frac{C_{it}}{C_{op}(C_{mea}+2C_{it})}$ and $\varphi(\Delta t)<K_2(\Delta t)$, then $n^*=\frac{1-C_{op}\Delta t}{C_{it}+C_{mea}}$, $m^*=1$ and $\tau^*=\Delta t$.
    
    \item For $\mu_1 =0$, $\mu_2 =0$ and $\mu_3>0$, we have $n>1$, $m>1$ and $\tau^*=\Delta t$.
     From (\ref{DKKT1}), (\ref{DKKT2}), (\ref{DKKT4}) and \ref{DKKT3}, we have
     \begin{gather*}
         \lambda=\frac{2nm^2\varrho(\Delta t)}{C_{it}+C_{mea}m}=\frac{2n^2m\varrho(\Delta t)}{C_{mea}n+C_{op}\Delta t},\\
         C_{it}n+C_{mea}mn+C_{op}m\Delta t-1=0,\\
         \mu_3=-n^2m^2\varrho'(\Delta t)+\lambda (C_{op}m)>0.
     \end{gather*}
    Based on some algebraic manipulations,
    \begin{gather}
        n^*=\frac{-C_{it}+\sqrt{C_{it}^2+\frac{C_{mea}C_{it}}{C_{op}\Delta t}}}{\frac{C_{mea}C_{it}}{C_{op}\Delta t}}>1,\label{D case 8 n}\\ 
        m^*=\frac{-C_{it}+\sqrt{C_{it}^2+\frac{C_{mea}C_{it}}{C_{op}\Delta t}}}{C_{mea}}>1,\label{D case 8 m}\\ 
        \frac{\varrho'(\Delta t)}{\varrho(\Delta t)}<\frac{2C_{op}}{C_{mea}n^*+C_{op}\Delta t}.\nonumber
    \end{gather}
    In addition, $1-2C_{it}-C_{mea}>0$, $n^*>1$ and $m^*>1$ imply $\tau_l=\frac{C_{mea}C_{it}}{C_{op}(1-2C_{it})}<\Delta t<\tau_u=\frac{C_{it}}{C_{op}(C_{mea}+2C_{it})}$.
    Therefore, if $\varphi(\Delta t)<K_3(\Delta t)$, $1-2C_{it}-C_{mea}>0$ and $\tau_l<\Delta t < \tau_u$, then $n^*$ and $m^*$ are correspond to equations (\ref{D case 8 n}) and (\ref{D case 8 m}), and $\tau^*=\Delta t$.

    \item For $\mu_1 >0$, $\mu_2 >0$, and $\mu_3>0$, we have $n^*=m^*=1$, $\tau^*=\Delta t$. This case only happens when $C_{it}+C_{mea}+C_{op}\Delta t=1$.    
    \end{enumerate}
\subsection{Sensitivity analysis}
As shown in Sections 3 and 4, the optimal design is derived assuming the parameters are known in advance. These parameters are typically estimated from phase-I information, such as historical data or a pilot study. Consequently, uncertainty in parameter estimation from phase-I may affect the resulting optimal design. To evaluate this impact, we conduct a simulation to assess the efficiency of the optimal design when the estimated parameters deviate from the true parameters.

We use the setting from Example 2, assuming the true parameters are $\alpha = 0.028$ and $\gamma = -2.073$. The estimated parameters are assumed to deviate from the true values by $0$, $\pm1$, $\pm2$, and $\pm3$ times the estimated standard deviations, where $\sigma_\alpha = \sqrt{\text{Var}(\hat{\alpha})} = 4.67 \times 10^{-3}$ and $\sigma_\gamma = \sqrt{\text{Var}(\hat{\gamma})} = 1.09 \times 10^{-1}$. 
Since the $D$- and $A$-optimality criteria do not depend on $\gamma$, this results in 7 parameter combinations for evaluating $D$- and $A$-optimality, and 49 combinations for $V$-optimality.
For each combination, we first determine the suboptimal design based on the estimated parameters. The objective function is then evaluated under the true parameters, and the resulting value is compared to the value of the optimal design under the true parameters to assess relative efficiency.

Tables \ref{table:senI} and \ref{table:senII} summarize the REs under various levels of parameter deviation for type-I and type-II designs. The results show that the REs of the proposed optimal type-I and type-II designs all exceed $95\%$, indicating strong robustness for our proposed method.

\begin{table}[H]
\caption{Relative efficiency of type-I designs under various levels of parameter deviation}
\centering
    \begin{tabular}[c]{|c|c|c c c c c c c|}
    \hline
    \multirow{2}{*}{Criterion} & \multirow{2}{*}{$\gamma$} & \multicolumn{7}{c|}{$\alpha$}\\
    \cline{3-9} &  & $-3\sigma_\alpha$ & $-2\sigma_\alpha$ & $-\sigma_\alpha$ & – & $+\sigma_\alpha$ & $+2\sigma_\alpha$ &$+3\sigma_\alpha$ \\
    \hline
    $D$ & – & 99.95\% & 99.99\% & 100\% &100\% & 100\%&99.99\%&99.97\% \\
    \hline
    $A$ & – & 94.85\% &98.27\%	& 99.65\% & 100\% & 99.75\%&99.13\%&98.26\% \\
    \hline
    \multirow{7}{*}{$V$} & $-3\sigma_\gamma$ & 99.16\% & 99.48\% & 99.72\% &99.88\% & 99.97\% & 100\% &99.98\% \\
    & $-2\sigma_\gamma$ & 99.37\% & 99.66\% & 99.86\% &99.97\% & 100\% & 99.95\% &99.83\% \\
    & $-\sigma_\gamma$ & 99.47\% & 99.75\% & 99.92\% &100\% & 99.98\% & 99.87\% &99.68\% \\
    & – & 99.53\% & 99.79\% & 99.95\% &100\% & 99.95\% & 99.81\% &99.58\% \\
    & $+\sigma_\gamma$ & 99.57\% & 99.82\% & 99.97\% &100\% & 99.93\% & 99.76\% &99.49\% \\
    & $+2\sigma_\gamma$ & 99.60\% & 99.84\% & 99.97\% &100\% & 99.91\% & 99.72\% &99.43\% \\
    & $+3\sigma_\gamma$ & 99.62\% & 99.86\% & 99.98\% &100\% & 99.89\% & 99.68\% &99.37\% \\
    \hline    
    \end{tabular}
\label{table:senI}
\end{table}

\begin{table}[H]
\caption{Relative efficiency of type-II designs under various levels of parameter deviation}
\centering
    \begin{tabular}[c]{|c|c|c c c c c c c|}
    \hline
    \multirow{2}{*}{Criterion} & \multirow{2}{*}{$\gamma$} & \multicolumn{7}{c|}{$\alpha$}\\
    \cline{3-9} &  & $-3\sigma_\alpha$ & $-2\sigma_\alpha$ & $-\sigma_\alpha$ & – & $+\sigma_\alpha$ & $+2\sigma_\alpha$ &$+3\sigma_\alpha$ \\
    \hline
    $D$ & – & 100\% & 100\% & 100\% &100\% & 100\%&100\%&100\% \\
    \hline
    $A$ & – & 96.01\% &98.64\%	& 99.73\% & 100\% & 99.80\%&99.29\%&98.57\% \\
    \hline
    \multirow{7}{*}{$V$} & $-3\sigma_\gamma$ & 99.12\% & 99.43\% & 99.67\% &99.83\% & 99.94\% & 99.99\% &100\% \\
    & $-2\sigma_\gamma$ & 99.37\% & 99.64\% & 99.84\% &99.95\% & 100\% & 99.98\% &99.91\% \\
    & $-\sigma_\gamma$ & 99.51\% & 99.76\% & 99.92\% &99.99\% & 99.99\% & 99.92\% &99.78\% \\
    & – & 99.61\% & 99.83\% & 99.96\% &100\% & 99.96\% & 99.84\% &99.65\% \\
    & $+\sigma_\gamma$ & 99.69\% & 99.88\% & 99.98\% &99.99\% & 99.92\% & 99.77\% &99.54\% \\
    & $+2\sigma_\gamma$ & 99.75\% & 99.92\% & 99.99\% &99.98\% & 99.88\% & 99.69\% &99.43\% \\
    & $+3\sigma_\gamma$ & 99.79\% & 99.94\% & 100\% &99.96\% & 99.84\% & 99.62\% &99.33\% \\
    \hline    
    \end{tabular}
\label{table:senII}
\end{table}
\end{appendix}
\end{document}